\DeclareMathOperator*{\wslim}{w^\star--lim}
\DeclareMathOperator*{\vGlim}{{\mathcal G}--lim}
\DeclareMathOperator*{\wvGlim}{{w--\mathcal G}--lim}
\DeclareMathOperator*{\wBstarlim}{{w^\star--B^*}--lim}
\DeclareMathOperator*{\wLstarlim}{{w^\star}-- {\mathit L^2_{\mathit m}}--{lim}}
     \newcommand{\supp}{\operatorname{supp}}
     \newcommand{\tr}{\operatorname{tr}}
     \newcommand{\Ran}{\operatorname{Ran}}
\newcommand{\Opw}{\operatorname{Op\!^w}}
     \newcommand{\N}{{\mathbb{N}}}
     \newcommand{\R}{{\mathbb{R}}}
     \newcommand{\C}{{\mathbb{C}}}
\newcommand{\e}{{\rm e}}
\renewcommand{\i}{\mathrm{i}}
\renewcommand{\c}{\mathrm{c}}
\newcommand{\clos}{{\rm clos}}
\renewcommand{\d}{{\rm d}}
\newcommand{\D}{{\rm D}}
\DeclareMathOperator\unif{unif}
\newcommand{\rad}{{\rm rad}}
\renewcommand{\tr}{{\rm tr}}
\renewcommand{\Re}{{\rm Re}\,}
\renewcommand{\Im}{{\rm Im}\,}
\DeclarePairedDelimiter\inp\langle\rangle
\newcommand\parn[2][]{#1 ( #2#1)}
\newcommand\parb[2][]{#1 \big ( #2#1\big )}
\newcommand\parbb[2][]{#1 \Big ( #2#1\Big )}
\renewcommand{\exp}{{\rm exp}}
\newcommand{\mand}{\text{ and }}
\newcommand{\mfor}{\text{ for }}
\newcommand{\mforall}{\text{ for all }}
\newcommand{\vB}{{\mathcal B}}
\newcommand{\vE}{{\mathcal E}}
\newcommand{\vG}{{\mathcal G}}
\newcommand{\vH}{{\mathcal H}}
\newcommand{\vV}{{\mathcal V}}
\newcommand{\vW}{{\mathcal W}}
\newcommand{\vS}{{\mathcal S}}
     \theoremstyle{plain}%default
     \newtheorem{thm}{Theorem}[section]
     \newtheorem{proposition}[thm]{Proposition}
     \newtheorem{lemma}[thm]{Lemma}
      \newtheorem{corollary}[thm]{Corollary}
     \theoremstyle{definition}
     \newtheorem{defn}[thm]{Definition}
     \newtheorem{cond}[thm]{Condition}
     \newtheorem{remark}[thm]{Remark}
\newtheorem*{remarks*}{Remarks}
\newtheorem*{remark*}{Remark}
     \numberwithin{equation}{section}
\title[Renormalized two-body low-energy
  scattering]{Renormalized two-body low-energy
  scattering}
\author{E. Skibsted}
\address[E. Skibsted]{Department of Mathematics \\
Aarhus Universitet\\ Ny Munkegade  8000 Aarhus C,
Denmark}
\email{skibsted@imf.au.dk}
\begin{document}

\begin{abstract}
For a class of
long-range potentials, including ultra-strong perturbations of the attractive
Coulomb potential in
dimension $d\geq3$, we introduce a stationary scattering theory for
Schr\"odinger operators  which
is regular at zero energy. In particular it is well defined at this
energy, and we use it to establish a characterization there  of the set of
generalized eigenfunctions in  an appropriately adapted Besov
space  generalizing parts of \cite{DS3}.   Principal tools include 
global solutions to the eikonal equation and   strong
radiation condition bounds.
\end{abstract}

\maketitle

\tableofcontents

\newpage

\section{Introduction}\label{sec:introduction} For a class of
long-range potentials we introduce a stationary scattering theory for
Schr\"odinger operators $H = -\Delta + V$ on $L^2({\mathbb
  R}^d)$ which
is regular at zero energy. In particular it is well defined at this
energy, and we use it to establish a characterization there  of the set of
generalized eigenfunctions in  an appropriately adapted Besov
space. The analogue of this characterization at positive energies for
potentials obeying $
\langle x \rangle^{\mu+|\alpha|} |\partial^{\alpha} V(x)|
\leq C_{\alpha}$ for some $\mu>0$ is well known \cite{AH,Ho2,GY}. It
goes as follows: 

For all $\lambda>0$
and all generalized eigenfunctions,  $(H-\lambda)u_\lambda=0$,  in the
Besov space $B(|x|)^*$ there exist unique
$\tau, \tilde \tau\in L^2(S^{d-1})$ such that 
\begin{align}\label{eq:asymPostive}
  u_\lambda(x)-C|x|^{-(d-1)/2}\parb{\e^{\i S(x,\lambda)}\tau(\omega)+\e^{-\i S(x,\lambda)}\tilde\tau(\omega)}\in B(|x|)_0^*.
\end{align} Here $ S(\cdot ,\lambda)={\sqrt
  \lambda}|x|+o(|x|)$ is a solution to the eikonal equation,
$\omega=x/|x|$  and $B(|x|)_0^*\subset B(|x|)^*$ are specified by
\begin{align*}
u\in B(|x|)^*&\Leftrightarrow u\in  L^2_{\rm loc}(\R^d) \mand \sup_{R>1}R^{-1}\|F(|x|<R)u\|<\infty,\\
  u\in B(|x|)_0^*&\Leftrightarrow u\in  L^2_{\rm loc}(\R^d) \mand \lim_{R\to \infty} R^{-1}\|F(|x|<R)u\|=0. 
\end{align*}
Moreover we can write  $\tilde
\tau(\omega)=(S(\lambda)^{-1}\tau)(-\omega)$ where the operator $S(\lambda)$
is a 
unitary operator on $L^2(S^{d-1})$ named the scattering matrix at
energy $\lambda$. The family of these  operators is
connected   to a scattering
operator from time-dependent scattering theory by a Legendre transformation. 

The (inverse) scattering matrix at energy $\lambda$ is determined by
\eqref{eq:asymPostive}: For all $\tau \in L^2(S^{d-1})$
there exist a unique $\tilde \tau\in L^2(S^{d-1})$ and a unique   generalized
eigenfunction $u_\lambda\in B(|x|)^*$ such that the asymptotics
\eqref{eq:asymPostive} is fulfilled. Whence indeed the set of
generalized eigenfunctions in $B(|x|)^*$ at any  positive energy
$\lambda$ is
characterized by \eqref{eq:asymPostive}. The variable $\omega$ may be
thought of as the observable asymptotic normalized velocity, see \cite{DS3} for
discussion.

We refer to \cite{Me,Va} for a related approach to stationary
scattering theory for a class of geometric models.

For a class of potentials, negative at infinity and to leading order
spherically symmetric, the above constructions were extended down to
(and including) zero energy \cite{DS3}. We refer to  \cite{DS2, Fr} for  explicit
calculations  of the scattering matrix at zero energy  and to \cite{Ya,SW} for
 related
one-dimensional results on asymptotics of scattering quantities. Whence in particular  the
set of generalized  
zero energy eigenfunctions in  an appropriately adapted Besov
space is characterized in \cite{DS3} for the restrictive class of potentials. Since
in turn this class of potentials is close to being optimal for the
existence in Classical Mechanics  of  asymptotic normalized velocity at zero
energy the given
characterization result may be viewed as ``best possible''. Nevertheless the
purpose of this paper is to provide a similar characterization of  generalized 
zero energy eigenfunctions for a bigger class of potentials
than considered in  \cite{DS3}. Again we obtain a 
parametrization by $L^2(S^{d-1})$ howewer the isomorphism is 
 different. Rather than involving functions on a sphere of  asymptotic
 normalized  velocities it will be in terms of functions on a sphere of {\it initial
   velocities}. In this sense our approach will be in the spirit of \cite{ACH} where
 a distorted Fourier transform is constructed for order zero
 potentials at high energies in terms of a family  of initially
 controlled geodesics. We prove low-energy  radiation condition bounds of
 independent interest.

The class of potentials to be studied in this paper is introduced in
Section \ref{sec:Smaller class}. In the remaining part of the present
section we review various background results for a somewhat bigger
class. The zero energy characterization problem makes
 sense for
this bigger class  (at least to some degree), see Subsection \ref{sec:Open problems}. Whence  the  class considered in the bulk of
the paper may not be optimal for the characterization problem
although  a further extension would involve
 difficult problems to overcome, see  Subsection \ref{sec:Idea of procedure}.

\subsection{A priori quantum bounds}
\label{sec:priori-quant-bounds}
\label{sec:Preliminary quantum bounds}
We give an account of some recent results \cite{Sk}. These
include  Besov space bounds of the resolvent at low  energies in any
dimension for a
class of potentials  that are negative  and obey a virial
condition with these conditions imposed  at infinity only. There are  two boundary values of the resolvent at
zero energy which are  characterized by radiation conditions. These
radiation conditions are zero energy versions of  the well-known
Sommerfeld radiation condition.

We consider  the
Schr\"odinger operator
$H = -\Delta + V$  on $\mathcal H=L^2({\mathbb R}^d),\, d\geq1$, where
the potential
$V$ obeys the following condition. We  
use the notation  $\langle x \rangle = \sqrt{ x^2+1}$,
$\N_0=\N\cup\{0\}$, and for $\mu \in (0,2)$ the  notation  $s_0=1/2+\mu/4$. 
\begin{cond}
  \label{cond:lap}
  Let $V=V_1+V_2$ be a real-valued function defined on ${\mathbb
    R}^d$; $ d\geq1$.  There exists $\mu \in (0,2)$ such that the
  following conditions {\rm
    \ref{it:assumption1}--\ref{it:assumption_last}} hold.

  \begin{enumerate}[\normalfont (1)]
  \item \label{it:assumption1} There exists $\epsilon_1 > 0$ such that
    $V_1(x) \leq -\epsilon_1 \langle x \rangle^{-\mu}.$
  \item \label{it:assumption2} $V_1\in C^\infty(\R^d)$. For all
    $\alpha\in \N_0^{d}$ there exists $C_{\alpha} >0$ such that
    \begin{equation*}
      \langle x \rangle^{\mu+|\alpha|} |\partial^{\alpha} V_1(x)|
      \leq C_{\alpha}.
    \end{equation*}
  \item \label{it:assumption3} There exists $\tilde \epsilon_1 > 0$
    such that $-|x|^{-2} \left(x\cdot \nabla (|x|^2 V_1)\right) \geq
    -\tilde \epsilon_1 V_1$.
   
  \item \label{it:assumption4} There exists $\delta,C,R > 0$ such that
    \begin{equation*}
      |V_2(x)| \leq C |x|^{-2s_0-\delta},
    \end{equation*}
    for $|x| > R$.
  \item \label{it:assumption_last} $V_2\in L^p_{\rm loc }(\R^d)$,
    where $p=2$ if $d=1,2,3$ and $p>d/2$ if $d\geq 4$.
  \end{enumerate}
\end{cond}
Due to \ref{it:assumption4} and \ref{it:assumption_last} the operator $V_2 (-\Delta +i)^{-1}$ is a compact operator on
     $L^2({\mathbb R}^d)$, cf. \cite[Theorem X.20]{RS}. Whence  $H$ is self-adjoint. The
Schr\"odinger operator with  an  attractive
Coulomb potential in
dimension $d\geq3$  is a particular example.

Let $\theta \in (0, \pi)$, $\lambda_0>0$  and define
\begin{equation}
  \label{eq:def_Gamma_theta}
\Gamma_{\theta,\lambda_0} = \{ z \in {\mathbb C}\setminus \{0\}
\,\big{ | } 
\arg z \in (0,\theta) , \,
|z| \leq \lambda_0\}.
\end{equation}

For a Hilbert space ${\mathcal H}$ (which in our case  will be
$L^2({\mathbb R}^d)$) we denote by ${\mathcal B}(\mathcal H)$ the
space of bounded linear operators on ${\mathcal H}$  
(a similar notation will be used for Banach spaces).
A ${\mathcal B}(\mathcal H)$-valued function $T(\cdot)$ on
$\Gamma_{\theta,\lambda_0}$ is said to be uniformly H\"{o}lder continuous in 
$\Gamma_{\theta,\lambda_0}$ if there exist $C,\gamma>0$ such that 
$$
\|T(z_1)-T(z_2) \|\leq C|z_1-z_2|^{\gamma}\; {\rm{for\; all}}\; z_1,z_2\in
\Gamma_{\theta,\lambda_0}.
$$

 We denote the resolvent of $H$ by $R(z) = (H -z)^{-1}$. The notation $B(|x|)$
and $B(|x|)^*$ refers to the Besov space for the operator of
multiplication by $|x|$ and its dual space, respectively.

\begin{proposition}[LAP]
\label{thm:lap}
Suppose Condition \ref{cond:lap}.
 For all  $s>s_0$
  the family of operators $T(z)=\langle x
\rangle^{-s}R(z) 
\langle x \rangle^{-s}$ is uniformly H\"{o}lder continuous in 
$\Gamma_{\theta,\lambda_0}$. 
In particular the 
limits
\begin{align*}
T(0 +\i0)&=\langle x \rangle^{-s} R(0 + \i0) 
\langle x \rangle^{-s}
= \lim_{\Gamma_{\theta,\lambda_0}\ni z \rightarrow 0}
T(z),\\
T(0 -\i0)&=\langle x \rangle^{-s} R(0 -\i0) \
\langle x \rangle^{-s}
= \lim_{\Gamma_{\theta,\lambda_0}\ni z \rightarrow 0}
T(\bar z)
\end{align*}
exist in
${\mathcal B}(L^2({\mathbb R}^d))$.

There exists
  $C>0$ such that for all $z \in \Gamma_{\theta,\lambda_0}$
\begin{equation}
  \label{eq:limitbound2j}
 \|
(|z|+\inp{x}^{-\mu})^{1/4}R(z)(|z|+\inp{x}^{-\mu})^{1/4}
\|_{\vB (B(|x|),B(|x|)^*)} \leq C.
\end{equation}
\end{proposition}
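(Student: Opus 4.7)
The plan is to prove this low-energy limiting absorption principle by a Mourre commutator argument adapted to the fact that the standard strict Mourre estimate degenerates at $\lambda=0$; here the positivity of the commutator $i[H,A]$ survives only weighted by $\langle x\rangle^{-\mu}$, and it is this weight (together with $|z|$) that propagates into the bound \eqref{eq:limitbound2j}. I would take the generator of dilations $A=\tfrac12(x\cdot p+p\cdot x)$, $p=-i\nabla$, as conjugate operator and use the identity
\begin{equation*}
  i[H,A] = -2\Delta - x\cdot\nabla V = 2H - (2V + x\cdot\nabla V).
\end{equation*}
By the algebraic identity $-(2V_1+x\cdot\nabla V_1)=-|x|^{-2}x\cdot\nabla(|x|^2 V_1)$, Condition \ref{cond:lap}\ref{it:assumption3} combined with \ref{it:assumption1} gives the pointwise lower bound $\tilde\epsilon_1\epsilon_1\langle x\rangle^{-\mu}$, while the contribution of $V_2$ is absorbable by \ref{it:assumption4}. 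After localizing spectrally near $\Re z$ this yields the weighted Mourre estimate
\begin{equation*}
  i[H,A] \geq c\bigl(|z|+\langle x\rangle^{-\mu}\bigr) - K,
\end{equation*}
with $K$ compact, which at $z=0$ is absorbed via a zero-energy virial argument using that $Hu=0$ has no non-trivial solution in $\langle x\rangle^s\mathcal H$ for $s>s_0$ (this is the radiation-condition input).

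With the weighted Mourre estimate in hand, I would run the standard Putnam--Mourre differential inequality scheme applied to $\Phi(z)=\langle x\rangle^{-s}R(z)\langle x\rangle^{-s}$ with $s>s_0=1/2+\mu/4$, but with the essential modification that the weight $(|z|+\langle x\rangle^{-\mu})^{1/2}$ is carried through each integration by parts. The threshold $s_0=1/2+\mu/4$ is forced by balancing the $\langle x\rangle^{-\mu/2}$-decay of this weight against the $\langle x\rangle^{-2s}$ appearing in the Mourre identity. Iterating and letting $\Im z\downarrow 0$ gives a uniform bound on $T(z)$; H\"older continuity follows by interpolation between uniform bounds at neighbouring $z$ via the second resolvent identity, and the boundary limits $T(0\pm i0)$ exist by a compactness argument in $\mathcal B(L^2(\R^d))$. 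For the Besov bound \eqref{eq:limitbound2j} I would interpret $(|z|+\langle x\rangle^{-\mu})^{1/4}$ as the weight that uses up the $\langle x\rangle^{-\mu/4}$ portion of $s_0$, leaving only the Besov endpoint weight $\langle x\rangle^{-1/2}$; concretely, decompose $1=\sum_{j\geq 0}F(|x|\sim 2^j)$, run the weighted $L^2$ scheme on each dyadic shell with $s=s_0-\epsilon$, and reassemble via the characterization $\|u\|_{B(|x|)^*}\simeq\sup_{R>1}R^{-1}\|F(|x|<R)u\|$ of the Besov-dual norm.

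The principal obstacle is ensuring that the weighted Mourre estimate, with its position-dependent lower bound $c(|z|+\langle x\rangle^{-\mu})$, survives every step of the iteration. At positive energies the commutator has a uniform spectral gap and compact errors are disposed of by relative compactness; at zero energy neither is automatic. Retaining the weight coherently under repeated commutations, and removing the compact remainder at the threshold rather than by standard arguments, must be done using the specific structure of the problem together with the companion radiation-condition bounds from \cite{Sk}, which is precisely what fixes the exponents $s_0=1/2+\mu/4$ and $1/4$ appearing in the statement.
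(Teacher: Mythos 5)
The paper does not prove this proposition; it simply cites \cite{Sk} for the result, so there is no ``paper's proof'' against which to compare your outline at the level of shared lemmas or decompositions. I will therefore assess the outline on its own merits.

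Your plan is in the right spirit: a Mourre-type argument with a weighted positive commutator estimate $i[H,A]\gtrsim |z|+\langle x\rangle^{-\mu}$ is indeed the natural mechanism behind this low-energy LAP, your reading of Condition~\ref{cond:lap}\ref{it:assumption3} is correct, and the heuristic for $s_0=1/2+\mu/4$ (the $\langle x\rangle^{-\mu/4}$ from the square-rooted commutator weight gets added to the usual $1/2$) is the right one. However several steps in the write-up do not actually work as stated and conceal the real technical content.

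First, you write that ``H\"older continuity follows by interpolation between uniform bounds at neighbouring $z$ via the second resolvent identity.'' This is not how H\"older continuity is obtained in a Mourre scheme and cannot be: the second resolvent identity expresses $T(z_1)-T(z_2)$ through $R(z_1)R(z_2)$, which involves two inverse powers of the weight and is not controlled by the single-resolvent bound you have. H\"older continuity requires the full Putnam--Mourre differential-inequality machinery in which one differentiates $\langle x\rangle^{-s}R(z)\langle x\rangle^{-s}$ in $z$ along a parabolic contour, commutes with powers of $\langle A\rangle$, and closes the differential inequality; the weight $(|z|+\langle x\rangle^{-\mu})$ has to be tracked through each of these commutations, which is exactly the part the outline skips. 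Relatedly, the statement that ``the boundary limits $T(0\pm i0)$ exist by a compactness argument in $\mathcal B(L^2)$'' is incorrect --- the unit ball of $\mathcal B(L^2)$ is not norm-compact, and weak-$*$ compactness gives only subnets. The limits exist simply because uniform H\"older continuity on the bounded domain $\Gamma_{\theta,\lambda_0}$ already implies uniform continuity, which extends continuously to the closure; no compactness is needed or available.

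Second, for the Besov bound \eqref{eq:limitbound2j} you propose to ``run the weighted $L^2$ scheme on each dyadic shell with $s=s_0-\epsilon$.'' This does not make sense as written: $s_0-\epsilon$ is strictly below the threshold, so the weighted scheme fails there, and the resolvent is nonlocal so one cannot localize its action to a dyadic shell. The genuine route to the Besov endpoint is to carry the Besov-pairing through the energy estimate directly (commuting with a Besov-adapted bounded function of $|x|$ and summing), not to iterate an $L^2$ estimate at a subcritical weight.

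Third, you invoke ``a zero-energy virial argument using that $Hu=0$ has no non-trivial solution in $\langle x\rangle^s\mathcal H$ for $s>s_0$ (this is the radiation-condition input).'' This misidentifies the input. Absence of a zero-energy bound state or resonance in the relevant $B^*$-type space is a separate statement which must itself be established from Condition~\ref{cond:lap} (a virial/Rellich-type argument combined with unique continuation, to handle $V_2$); it is not a ``radiation-condition input'' --- the radiation conditions in Subsection~\ref{sec:Zero energy Sommerfeld radiation condition} are downstream consequences of the LAP, not ingredients in its proof. Asserting the absence of threshold spectrum without argument leaves a real gap, since this is where the positivity $V_1\leq-\epsilon_1\langle x\rangle^{-\mu}$ and the virial bound are actually used in a nontrivial way.

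Finally, a comment on the choice of conjugate operator. Taking the standard dilation generator $A$ gives the correct algebraic identity, but in practice (and, I believe, in \cite{Sk} and its predecessor \cite{FS}) one works with a weighted conjugate operator of the schematic form $\tfrac12(g(x)\cdot p+p\cdot g(x))$ with a vector field $g$ built from $f_0=\sqrt{K}\langle x\rangle^{-\mu/2}$, precisely so that the Mourre estimate becomes uniform (not $\langle x\rangle^{-\mu}$-weighted) after the appropriate conjugation. This conceptual difference is not merely cosmetic: with the unweighted dilation generator you would have to drag the spatially varying lower bound through every step of the differential-inequality iteration, including the higher commutators $[A,[A,H]]$, and the domain and regularity questions become delicate when $V_1$ is long-range and $V_2$ is only locally $L^p$. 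Your outline does not address these double-commutator/regularity issues at all, and they are where the real work lies.
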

\subsubsection{Zero energy Sommerfeld radiation condition}\label{sec:Zero energy Sommerfeld radiation condition}
 We shall  give an outline  of some microlocal estimates
 and characterizations of  solutions to the equation $Hu=v$. In particular we estimate and
characterize the ``outgoing''  solution whose existence is provided by
Proposition 
\ref{thm:lap}. This particular
solution is given as follows in terms of Besov
spaces. First note that the  relevant Besov space at zero energy  is
$B^\mu:=B(\inp{x}^{2s_0})=\inp{x}^{-\mu/4}B(|x|)$, cf.
\eqref{eq:limitbound2j}. We have the following characterization of the
corresponding dual
space 
\begin{equation*}
  u\in (B^
\mu)^* \Leftrightarrow  u\in L^2_{\rm loc}(\R^d)\mand \sup_{R>1} R^{-s_0}\|F(|x|<R)u\|<\infty.
\end{equation*} A slightly smaller space is given by 
\begin{equation*}
  u\in (B^
\mu)^*_0 \Leftrightarrow u\in  L^2_{\rm loc}(\R^d) \mand \lim_{R\to \infty} R^{-s_0}\|F(|x|<R)u\|=0. 
\end{equation*}
Now suppose  $v\in  B^
\mu$.  Then due to Proposition 
\ref{thm:lap} there exists the weak-star limit
\begin{equation*}\label{eq:30o}
  u=R(0 + \i0) v
= \wslim_{\Gamma_{\theta,\lambda_0}\ni z \rightarrow 0}
R(z) v\in (B^
\mu)^*. 
\end{equation*}  Note that indeed this $u$ is a (distributional)
solution to the equation $Hu=v$. 

 To  state   microlocal properties  of  this solution we first introduce
 for all $\lambda\geq 0$ 
 the function 
\begin{equation}
  \label{eq:fsublambda}
  f=f_\lambda(x)=(\lambda+K\inp{x}^{-\mu})^{1/2};\, x\in \R^d,
\end{equation} where  $K:=\epsilon_1 \tilde \epsilon_1 /(2-\mu)$ with
the $\epsilon$'s given in  Condition
\ref{cond:lap}.
 In terms of  $f_0$ we  introduce symbols 
\begin{align}\label{eq:ab_0}
a_0= \frac{\xi^2}{f_0(x)^2},\;\;
b_0=  \frac{\xi}{f_0(x)} \cdot \frac{x}{\langle x \rangle},
\end{align} and we have, using here  Weyl quantization, 
\begin{subequations}
\begin{equation}\label{eq:31h}
  \Opw(\chi_-(a_0)\tilde \chi_-(b_0))u\in (B^
\mu)^*_0\mforall \chi_-\in C^\infty_\c(\R)\mand \tilde \chi_-\in C^\infty_\c((-\infty, 1)).
\end{equation}  

These estimates are  accompanied by ``high energy
estimates'', stated as follows: Let us note that 
\begin{equation*}\label{eq:25b}
  f_{|z|}^{-2}(x)\big |V_1(x)-z\big |\leq C_0':=\max(C_0/K,1),
\end{equation*}  where $C_0$ is given in Condition \ref{cond:lap}
 \ref{it:assumption2}
(i.e. the constant with $\alpha=0$). Consider  real-valued $\chi_-\in C^\infty_\c(\R)$  such that
$\chi_-(t)=1$  in a neighbourhood of $[0,C_0']$,
and let $\chi_+:=1-\chi_-$. For all
such functions $\chi_+$
\begin{equation}\label{eq:31hh}
  \Opw(\chi_+(a_0))u\in (B^
\mu)^*_0.
\end{equation}
 \end{subequations}

The support property of $\tilde \chi_-$ in \eqref{eq:31h}  mirrors that
the particular solution  studied  is the outgoing one, and we refer to
\eqref{eq:31h}  as  the  
{\it outgoing Sommerfeld radiation condition}. This condition   (in fact a
weaker version) suffices   for a characterization as expresssed in the
following result. Here and henceforth $L^2_m:=\inp{x}^{-m}L^2(\R^d)$. 
\begin{proposition}[Uniqueness of outgoing
  solution, data in $B^
\mu$]\label{thm:somm-radi-condss}  Suppose   $v\in  B^
\mu$. Suppose $u$ is a  distributional 
solution to the equation $Hu=v$ belonging to the space $
L^2_m$ for some $m\in \R$, and suppose that there
there exists $\kappa\in(0,1]$ such that 
\begin{equation}\label{eq:31hf}
  \Opw(\chi_-(a_0)\tilde \chi_-(b_0))u\in (B^
\mu)^*_0\mforall \chi_-\in C^\infty_\c(\R)\mand \tilde \chi_-\in C^\infty_\c((-\infty, \kappa)).
\end{equation} Then $u=R(0 + \i0) v$. In particular \eqref{eq:31h} and
\eqref{eq:31hh} hold.
\end{proposition}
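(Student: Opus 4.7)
The plan is to form the difference $w = u - R(0+\i0)v$ and show that $w = 0$. By Proposition \ref{thm:lap} we have $R(0+\i0)v \in (B^\mu)^*$, and by \eqref{eq:31h} this resolvent image satisfies the full outgoing Sommerfeld condition. Combining with the hypotheses on $u$ we get a distributional solution $Hw = 0$ lying in $L^2_{m'}$ for some $m' \in \R$ and still satisfying the weakened radiation condition \eqref{eq:31hf} with the same $\kappa \in (0,1]$. It suffices to prove $w = 0$.

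The heart of the argument is a microlocal propagation step that simultaneously (i) upgrades the decay of $w$ from $L^2_{m'}$ to $(B^\mu)^*$, and (ii) extends the support constraint on $\tilde\chi_-$ from $(-\infty,\kappa)$ up to the endpoint $1$, thereby producing the full outgoing condition \eqref{eq:31h} as well as the high-energy estimate \eqref{eq:31hh} for $w$. First I would use local elliptic regularity for $H$ (permissible by Condition \ref{cond:lap} \ref{it:assumption_last}) to place $w$ in $H^2_{\rm loc}(\R^d)$ and thereby legitimise the commutator manipulations. Then I would run a Mourre-type positive commutator estimate whose symbol is built from $\chi_-(a_0)\tilde\chi_-(b_0)$ with $\tilde\chi_-$ now supported slightly past $\kappa$: Condition \ref{cond:lap} \ref{it:assumption3} is precisely tuned (via the constant $K = \epsilon_1\tilde\epsilon_1/(2-\mu)$ appearing in $f_0$) so that the effective Poisson bracket $\{\xi^2 + V_1, b_0\}/f_0$ is strictly positive on the relevant phase space region. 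Weyl quantization of this bracket yields a coercive operator modulo terms absorbed either by the prior support of $\tilde \chi_-$, by the a priori Besov estimate \eqref{eq:limitbound2j}, or by the $L^p_{\rm loc}$ and decay assumptions on $V_2$. Iterating the extension across a finite partition of $b_0 \in (-\infty,1)$ delivers \eqref{eq:31h}; a parallel estimate with $\chi_+(a_0)$, using $Hw = 0$ together with $f_{|z|}^{-2}|V_1 - z| \leq C_0'$, delivers \eqref{eq:31hh}; and a dual estimate against polynomial weights yields the Besov upgrade $w \in (B^\mu)^*$.

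Finally, having $w \in (B^\mu)^*$ solving $Hw = 0$ and satisfying the full outgoing Sommerfeld condition, I would conclude $w = 0$ by the standard absorption principle: for any $\phi \in B^\mu$, pair $w$ against $\phi$ via $\inp{w,\phi} = \inp{w, H R(0-\i0)\phi} = \inp{Hw, R(0-\i0)\phi} = 0$, the middle equality being legitimate because the microlocal support of the outgoing cutoffs applied to $w$ and that of the incoming resolvent image $R(0-\i0)\phi$ are asymptotically disjoint (so all boundary terms produced by the formal integration by parts vanish). Density of $B^\mu$ in the predual of $(B^\mu)^*$ then forces $w = 0$. The main obstacle is the propagation step, specifically extending the allowed range of $\tilde\chi_-$ all the way up to $1$: at the endpoint $b_0 = 1$ the outgoing classical speed vanishes, the weight $f_0 \sim \inp{x}^{-\mu/2}$ degenerates at infinity, and the perturbation $V_2$ decays only at the critical rate $\inp{x}^{-2s_0 - \delta}$, so keeping the positive commutator estimate uniform and coercive up to this endpoint is the technical crux.
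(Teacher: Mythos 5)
The paper you are working from does not contain a proof of Proposition~\ref{thm:somm-radi-condss}. It appears in Subsection~\ref{sec:priori-quant-bounds}, which is explicitly presented as ``an account of some recent results \cite{Sk}''; the actual proof resides in the reference [Sk] (the preprint \emph{Sommerfeld radiation condition at threshold}), not here. There is therefore no internal proof for me to compare your attempt against.

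That said, your overall strategy---set $w=u-R(0+\i 0)v$, note that $Hw=0$ with $w\in L^2_{m'}$ and that $w$ inherits the weakened radiation condition \eqref{eq:31hf} because $R(0+\i 0)v$ satisfies \eqref{eq:31h}, then use positive-commutator propagation driven by the positivity of $\{\xi^2+V_1,b_0\}$ (i.e.\ Condition~\ref{cond:lap}~\ref{it:assumption3} and the constant $K$ in $f_0$) to upgrade $w$ simultaneously to $(B^\mu)^*$ and to the full conditions \eqref{eq:31h}--\eqref{eq:31hh}, and finally kill $w$ by pairing $\inp{w,\phi}=\inp{w,HR(0-\i 0)\phi}$ against $\phi\in B^\mu$---is the standard route for this kind of uniqueness theorem and is entirely consistent with the machinery the paper does exhibit elsewhere (the Poisson-bracket computation \eqref{eq:poisson1}, the calculus of Section~\ref{sec:Quantum bounds}, and the Besov bounds \eqref{eq:limitbound2j}). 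Two caveats. First, the propagation step you flag as ``the technical crux'' is indeed where the entire proposition lives: you need a Mourre-type estimate whose error terms are uniformly controllable as the support of $\tilde\chi_-$ approaches $1$ while you are simultaneously \emph{starting from no decay at all} ($m\in\R$ arbitrary), and your sketch does not make clear how the iteration can begin before you have some positive weight on the left of the commutator; a regularization in both the spectral parameter and the weight is normally needed, and this should be spelled out. Second, your final pairing argument should be run through a cutoff $\chi_R$ (so that the boundary contribution becomes the commutator $[\Delta,\chi_R]$) and the vanishing of this contribution as $R\to\infty$ should be argued explicitly from the asymptotic disjointness of the microsupports of the outgoing $w$ and the incoming $R(0-\i0)\phi$; as written, the chain of equalities reads as formally exact, which it is not. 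A cleaner endgame, suggested by the (commented-out) Corollary~\ref{cor:somm-radi-cond}, is to push the propagation estimate one step further so that $w\in(B^\mu)^*_0$, and then apply a zero-energy Rellich-type vanishing theorem.
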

\begin{comment}
 \begin{corollary}
  \label{cor:somm-radi-cond} Suppose $u\in (B^
\mu)^*_0 $ solves the equation $Hu=0$. Then $u=0$.
\end{corollary}

We have  another version of  uniqueness of the outgoing solution (with
$\delta>0$ as in Condition \ref{cond:lap}).
\begin{thm}[Uniqueness of outgoing
  solution, data in $L^2_s$]\label{thm:somm-radi-condss2}  Suppose
  $v\in  L^2_s$ for some $s>s_0$.  
  \begin{enumerate}[i)]
\item \label{item:1} Suppose $u$ is a  distributional 
solution to the equation $Hu=v$ belonging to the space $
L^2_{-s_0-\delta}$, and suppose that there
there exists $\kappa>0$ such that  
\begin{equation}\label{eq:31bAbb}
 \Opw(\chi_-(a_0)\tilde \chi_-(b_0))u\in L^2_{-s_0}\mforall \chi_-\in C^\infty_\c(\R)\mand \tilde \chi_-\in C^\infty_\c((-\infty, \kappa-1)).
\end{equation} 
Then $u=R(0 + \i0) v$. 
\item \label{item:2}
Moreover  the state $u=R(0 + \i0) v$ obeys: For all  $t < \min(s-s_0,\delta)-s_0$
\begin{subequations}
\begin{align}\label{eq:31bAbbb1}
 \Opw(\chi_-(a_0)\tilde \chi_-(b_0))u\in L^2_{t}&\mforall \chi_-\in
 C^\infty_\c(\R)\mand \tilde \chi_-\in C^\infty_\c((-\infty, 1)),\\
\label{eq:31bAbbb12}
 \Opw(\chi_+(a_0))u\in L^2_{t}&\mforall \text{ functions } \chi_+\text{
   as in }\eqref{eq:31hh}.
\end{align} 
  \end{subequations}
In particular we can take $t=-s_0$ in \eqref{eq:31bAbbb1} and 
\eqref{eq:31bAbbb12}.
\end{enumerate}
\end{thm}
 \end{comment}

The ``incoming'' solution  $u=R(0
- \i0) v$ can be   characterized  similarly. These results generalize
\cite[Proposition 4.10] {DS3} at zero energy. 
\begin{remark}\label{remark:zero-energy-somm} There are similar
  results for positive energies. For $R(\lambda+\i 0)$ we have the
  same conclusion $u=R(\lambda+\i 0)v$ for an outgoing solution to
  $(H-\lambda)u=v$. This means more precisely that  if we replace the Besov spaces
   by replacing $s_0\to s_0=1/2$ in the definition of these spaces and change the localization symbols
  $a_0,b_0$ in \eqref{eq:ab_0} and  \eqref{eq:31hf} by replacing
  $f_0\to f_\lambda$ there, then indeed the solution $u$ is given by
  $u=R(\lambda+\i 0)v$.  This result is known for  larger classes of potentials, see
\cite[Theorem 30.2.10] {Ho2} and \cite{GY}.
\end{remark}

\subsection{Open problems}\label{sec:Open problems}
Define
  under Condition \ref{cond:lap} the operator 
  \begin{equation*}
    \delta(0)=(2\pi\i)^{-1}\parb{R(0
+ \i0)-R(0- \i0)}=\pi^{-1}\Im \parb{R(0
+ \i0)}\in \vB(B^\mu,(B^\mu)^*), 
  \end{equation*} and note that its range 
  \begin{equation*}
    \Ran(\delta(0))\subseteq \vE_0:=\{ u\in (B^\mu)^*|\,Hu=0\}.
  \end{equation*}
Under some stronger conditions it follows from \cite[Theorem
8.2]{DS3} that $\Ran(\delta(0))= \vE_0$ (proved in terms of wave
matrices at zero energy).  Equality and characterization of $\vE_0$
are  open problems  under Condition
\ref{cond:lap}, in fact  it is only known  that $\delta(0)\neq
0$, see  \cite{FS}. More specifically ``scattering theory at zero energy'' in the
spirit of \cite[Theorem
8.2]{DS3} is an open problem under Condition
\ref{cond:lap}. In this paper we address these problems for an
intermediate class  of potentials, i.e. a smaller
class  than the one defined by Condition \ref{cond:lap} but bigger than the
one studied in \cite{DS3}.

\subsection{Ideas of procedure and results}\label{sec:Idea of procedure} Let us
give an outline of a possible procedure for solving the problems 
posed in the proceeding subsection. This procedure will  be implemented for the subclass of potentials
to be introduced in Section \ref{sec:Smaller class}. The corresponding (main) results
are  stated more precisely  in  Theorem \ref{thm:char-gener-eigenf-1}. For simplicity we
assume in the discussion below that $V$ is negative.

First we need a global solution to the eikonal equation (or at least solving outside a compact set) 
\begin{align*}
  |\nabla_x S(x,\lambda)|^2=\lambda-V(x);\,\lambda\geq 0.
\end{align*} The existence for  $\lambda=0$ is not known under Condition
\ref{cond:lap}. Potentially we could define
$S(\cdot,\lambda)$ to be the  distance in the metric $g_\lambda=(\lambda-V)\d
x^2$   to the origin in $\R^d$,
i.e. $S(x,\lambda)=d_{g_\lambda}(x,0)$. This is the so-called maximal
solution to the eikonal equation. However  under Condition
\ref{cond:lap} it is a problematic choice, in fact for
$d\geq 2$ it might be expected  that  in some generic sense  this $S(\cdot,\lambda)\notin C^1(\R^d\setminus\{0\})$. 

However for the
subclass of potentials to be considered the above geometric
construction is  manageable and we shall consider the
corresponding geodesic flow
\begin{align*}
  \tfrac{\d}{\d s}\Phi&=(\lambda-V(\Phi))^{-1}\nabla_x
    S(\Phi,\lambda),\,\,\Phi(0,\sigma)=0,\,\tfrac {\d}{\d s}\Phi(s,\sigma)_{|s=0}=(\lambda-V(0))^{-1/2}\sigma;\\&(s,\sigma)\in[0,\infty)\times S^{d-1}.
\end{align*} In particular it turns out that this flow is  a diffeomorphism $\Phi:
\R_+\times S^{d-1}\to \R^d\setminus\{0\}$. 

Next for an appropriate Jacobian type function $J$, see
\eqref{eq:diag},  we propose to introduce
\begin{equation}
  \label{eq:diag2}
  F^+(\lambda)v=\vGlim_{s\to \infty}\,\parb{J^{1/2}\e^{-\i S(\cdot,
      \lambda)} R(\lambda +\i 0)v}\parb{\Phi (s,\cdot)},
\end{equation} where $\vG:=L^2(S^{d-1},\d \sigma)$. This is for $v$ in
an appropriate dense subset of $L^2(\R^d)$, and using  an integration by
parts  and Stone's formula we then derive the following formula for the
orthogonal projection onto the continuous subspace of $H$:
\begin{align*}
  \|P_\c v\|^2=\int _0^\infty  \|F^+(\lambda)v\|_{\vG}^2\,\d \lambda.
\end{align*} This leads to  the  {\it distorted Fourier transform}
\begin{align*}
  F^+:=\int _0^\infty \oplus F^+(\lambda)\,\d \lambda.
\end{align*} This map is a partial isometry diagonalizing  $H_\c$,
i.e. $F^+H_\c=M_\lambda F^+$.
 We  show the existence of the limit
 \eqref{eq:diag2} by using some new low-energy  radiation condition
 bounds valid under the conditions of
 Section \ref{sec:Smaller class}. The reader may consult
 \eqref{eq:extbF2} for a  somewhat cleaner definition.

Now we can address the problems of Subsection \ref{sec:Open problems}
(under  these conditions).
Indeed 
\begin{equation*}
    \Ran(\delta(0))=\vE_0=\{ u\in (B^\mu)^*|\,Hu=0\}
  \end{equation*} follows from the following properties:
  \begin{align}
&F^+(0):B^\mu\to \vG\text{ is onto},\nonumber\\
   &F^+(0)^* :\vG\to \vE_0\text{ is a bi-continuous  isomorphism},\label{eq:para}\\
&\delta(0)=F^+(0)^* F^+(0).\nonumber
  \end{align} Furthermore note that \eqref{eq:para} constitutes  a
  parametrization of $\vE_0$. The isomorphism 
  $F^+(0)^*$,  named the {\it  
    wave  matrix at zero
      energy},  is given more explicitly as follows:
  For any $u=2\pi\i F^+(0)^* \tau\in\vE_0$
    \begin{align}\label{eq:gen1aa}
      u(x)-J^{-1/2}(x)\parb{\e^{\i S(x,0)}\tau(\sigma)-\e^{-\i
        S(x,0)}\tilde\tau(\sigma)}\in
(B^\mu)_0^*;\,x=\Phi(t,\sigma).
    \end{align} 

The function
    $\tilde\tau\in\vG$ in \eqref{eq:gen1aa} is uniquely determined
    from $u$ 
    and it is of the form $\tilde \tau=S(0)^{-1}\tau$ where $S(0)$ is a unitary operator
    on $\vG$. This  operator  is called the {\it  scattering matrix at zero
      energy}. Combined with  similar constructions for $\lambda>0$
     the scattering matrix $S(\lambda)$ is strongly continuous in
    $\lambda\geq 0$. Whence this {\it renormalized} stationary
    scattering theory is {\it regular 
    at zero
      energy}.

\section{Class of potentials}\label{sec:Smaller class}
We introduce the class of potentials to be studied in this paper.  The zero
energy dynamics for this   class of potentials is generically 
qualitatively very different (unless $d=1$) from the one  for potentials in the
smaller   class  of \cite{DS3}. We
give an example to that effect.
\subsection{Conditions}\label{sec:Conditions}
 \begin{cond}[Unperturbed potential]
\label{cond:rad}
Let $V=V_1+V_2$ be a
real-valued function defined on  ${\mathbb R}^d$; $
d\geq1$. There exists  $\mu \in (0,2)$  such
that the following  conditions
{\rm
\ref{it:assumption1r}--\ref{it:assumption_lastr}} hold.

\begin{enumerate}[\normalfont (1)]
   \item \label{it:assumption1r} There exists $\epsilon_1 > 0$ such
     that $V_1(x) \leq -\epsilon_1
\langle x \rangle^{-\mu}.$
   \item \label{it:assumption2r}     $V_1\in C^\infty(\R^d)$. For all $\alpha\in \N_0^{d}$
there exists
$C_{\alpha} >0$ such that
$$
\langle x \rangle^{\mu+|\alpha|} |\partial^{\alpha} V_1(x)|
\leq C_{\alpha}.$$
\item \label{it:assumption3r} 
     $V_1(x)=V_\rad(|x|)$ is spherically symmetric, and  there exists
     $\tilde \epsilon_1 > 0$ such that
     \begin{equation*}
       -2V_\rad(r) -rV_\rad'(r)
\geq -\tilde \epsilon_1 V_\rad(r).
     \end{equation*}
   
   \item \label{it:assumption_lastr} $V_2$ is compactly supported, and  
$V_2\in L^p(\R^d)$,  where $p=2$ if $d=1,2,3$ and  $p>d/2$ if  $d\geq 4$.
\end{enumerate}
\end{cond}
 
Given Condition \ref{cond:rad} we consider the  class $\vW$
of real-valued
smooth functions $W$ on $\R^d$  obeying that for all
$\alpha\in \N_0^{d}$
\begin{subequations}
\begin{equation}\label{eq:Wbnds1}
  \sup_{x\in \R^d} \langle x \rangle^{\mu+|\alpha|} |\partial^{\alpha} W(x)|<\infty.
\end{equation} Given $l\in
\N$   we say 
that $W_\epsilon\in \vW$ is $\epsilon$-\emph {small} 
if  for some $\epsilon>0$
\begin{equation}\label{eq:Wbnds2}
  \max_{|\alpha|\leq l}\sup_{x\in \R^d} \langle x
  \rangle^{\mu+|\alpha|} |\partial^{\alpha} W_\epsilon(x)|\leq \epsilon.
\end{equation}  
\end{subequations}
Clearly this  quantity  depends on the given $l$, however we prefer
for  the
above 
teminology of $\epsilon$-smallness to suppress this dependence. If
in a given context $l$ is not specified, it is tacitly understood that
$l=2$ (although for example $l=1$  suffices   for Proposition
\ref{prop:resol_bounds}). We use $l=4$ in Lemma
\ref{lemma:diagonalization} stated below. Similarly we need $l\geq 4$
in Lemma \ref{lemma:diagonalizationb} and Proposition
\ref{Prop:radi-cond-bounds} (a sufficient choice  $l=l(\mu,d)$ can be
calculated, however we shall not bother). Consequently our main result
Theorem \ref{thm:char-gener-eigenf-1}  will depend on some fixed  $l\geq 4$
in the definition (\ref{eq:Wbnds2}) of $\epsilon$-small perturbations.

We shall study potentials of the form
$V_\epsilon=V+W_\epsilon$ where $V=V_\rad+V_2$ agrees with Condition
\ref{cond:rad} and $W_\epsilon\in \vW$ is an $\epsilon$-small
perturbation. The class of such potentials $V_\epsilon$, say $\vV_\epsilon$, is a particular
subclass of  the one defined by Condition \ref{cond:lap}; here we need $\epsilon$
small. In fact at various other points of the paper we need to take $\epsilon>0$ small,
however this will be  expressible in terms of $V_\rad$ only, which
henceforth is considered as fixed. For convenience we assume throughout
the paper that
\begin{subequations}
  \begin{equation}\label{eq:supp_rad}
  V_\rad(r)=V_\rad(0)\mfor r\leq R:=(-V_\rad(0))^{-1/2},
\end{equation} and similarly for perturbations that 
\begin{equation}\label{eq:supp_pert}
  W_\epsilon(x)=0\mfor |x|\leq R.
\end{equation} 
\end{subequations} We can freely assume \eqref{eq:supp_rad} and
\eqref{eq:supp_pert}. As for $V_\rad$ the property  \eqref{eq:supp_rad} can be assumed  possibly  upon making
$\epsilon_1$ smaller (but not changing  $\tilde \epsilon_1$) and 
changing $V_2$. Although this is an elementary fact it is not
completely obvious. Let us give a proof: Decompose $1=\chi_++\chi_-$
where $\chi_+,\chi_-\in C^\infty(\R_+)$ are monotone, $\chi_+(t)=1$ for $t\geq 2$
and $\chi_+(t)=0$ for $t\leq 1$. Introduce
\begin{align}\label{eq:trun}
  V_n(r)=V_\rad(r)\chi_+(r/n)-n^{-2}\chi_-(r/n);\;n\in\N.
\end{align} We claim that for any $n$ big enough such that
$\epsilon_1\inp{2n}^{-\mu}\geq n^{-2}$ indeed  Conditions
\ref{cond:rad}\ref{it:assumption1r}--\ref{it:assumption3r} and
\eqref{eq:supp_rad} hold with $\epsilon_1$ replaced by $n^{-2}$, new
constants $C_\alpha$, the
same constant $\tilde \epsilon_1$ and with $R=n$, respectively. To see
that  indeed the same  $\tilde \epsilon_1$ works we consider the
estimates
\begin{align*}
  -rV_n'(r)&\geq (2-\tilde \epsilon_1)V_n(r)+(2-\tilde
  \epsilon_1)n^{-2}\chi_-(r/n)-\tfrac rn
  \chi_-'(r/n)\parb{-V_\rad(r)-n^{-2}}\\
&\geq (2-\tilde \epsilon_1)V_n(r)-\tfrac rn
  \chi_-'(r/n)\parb{\epsilon_1\inp{2n}^{-\mu}-n^{-2}}\\&\geq (2-\tilde \epsilon_1)V_n(r);
\end{align*} we assumed that $\tilde
  \epsilon_1\leq 2$. The other statements are obvious.
Similarly \eqref{eq:supp_pert}  can be assumed by
changing $V_2$ and  possibly by taking $\epsilon$
smaller.

%On the other hand $\vV_\epsilon$ is  bigger than the
%class  studied in \cite{DS3}. 
\subsubsection{Example}\label{exa:Example} Let $g\in C^\infty(\R)$ be
$2\pi$-periodic with $\max g'\geq 1-\mu/2$. Let $\chi\in C^\infty(\R_+)$
obey $\chi(r)=0$ for $r<1$ and $\chi(r)=1$ for $r>2$. Similarly
introduce 
for $\mu\in (0,2)$  and (large) $n\in \N$  a function 
$h=h_n\in C^\infty(\R_+)$ obeying 
\begin{align*}
  \begin{cases}
    h(r)=r/n&\text{ for } r\leq n\\
h(r)=(1-\mu/2)^{-1}r^{1-\mu/2}+C_n&\text{ for } r\geq 2n\\
h'(r)> \max(0,-rh''(r))&\text{ for } r>0
  \end{cases}\quad.
\end{align*} Note that the construction \eqref{eq:trun}
with $V_\rad(r)=-r^{-\mu}$ leads to the particular example  $h_n(r)=\int^r_0 \sqrt{-V_n(t)}\d
t$. We construct in dimension $d=2$ a potential  in
terms of a parameter $\epsilon\geq 0$ and polar coordinates
$(r,\theta)$ (i.e. $x=(r\cos \theta,r\sin \theta)$):
\begin{align*}
  S_\epsilon(x,\lambda=0)&=h_n(r)\exp\{\epsilon  g(\theta-\epsilon \ln r)\chi(r/n)\},\\
V_\epsilon(x)&=- |\nabla S_\epsilon(x,\lambda=0)|^2.
\end{align*}
Clearly $V_{\epsilon=0}(x)=V_{\rad }(r)$ obeys Condition
\ref{cond:rad} and \eqref{eq:supp_rad} (the latter
with $R=n$). Moreover clearly $W_\epsilon(x):=V_\epsilon(x)-V_{\rad
}(r)$ satisfies \eqref{eq:Wbnds1} and \eqref{eq:supp_pert}. Morever for any $l\in\N$ there exists $C>0$,  
sufficiently large and possibly depending on $n$, such that  the
potential $W_\epsilon$ is $(C\epsilon)$-small. So up to a linear 
reparametrization  also  \eqref{eq:Wbnds2} is satisfied.

This example does not fit into the framework of  \cite{DS3}. In fact
for the class studied in \cite{DS3} classical zero energy scattering
orbits  have asymptotic normalized velocities. This would for the
above example mean that there exist $\lim_{t\to \pm
  \infty}\theta(t)$. However this cannot be as the following arguments
show:  Consider the flow (in polar coordinates)
\begin{subequations}
\begin{align}
  \label{eq:flowd=2}
  \begin{cases}
    \dot r(=\tfrac {\d}{\d s}r(s))&=(-V_\epsilon(x))^{-1}\partial_r S_\epsilon(x,\lambda=0)\\
\dot \theta(=\tfrac {\d}{\d s}\theta(s))&=(-V_\epsilon(x)r^2)^{-1}\partial_\theta
S_\epsilon(x,\lambda=0)\\
(r,\theta)(s=1)&=(n,\sigma).
  \end{cases}\quad. 
\end{align} 
Noticing that for $\epsilon>0$ small $\partial_r S_\epsilon>0$  we can consider
$\theta$ as a function of $r$ determined by the single equation
\begin{equation}
  \label{eq:singleODE}
  \tfrac {\d \theta}{\d r}=F(r,\theta):=\tfrac \epsilon r
\frac{g'\chi}{rh'/h+\epsilon \tfrac rn  g \chi'-\epsilon^2 g' \chi}.
\end{equation} Here of course $g$  and $\chi$ are   functions of
$\psi:=\theta-\epsilon \ln r$ and $r/n$, respectively. For $r\geq 2n$
\eqref{eq:singleODE} reduces to 
\begin{equation*}
  \label{eq:singleODE2}
  \tfrac {\d \psi}{\d r}=\tfrac \epsilon r
\parbb{\frac{g'}{1-\tfrac \mu 2-\epsilon^2 g'}+O(r^{\mu/2-1})-1}=\tfrac \epsilon r\parbb{
\frac{(1+\epsilon^2)g'-1+\tfrac \mu 2}{1-\tfrac \mu 2-\epsilon^2 g'}+O(r^{\mu/2-1})}.
\end{equation*} Introducing a new time, $\d \tau/\d r =\epsilon
r^{-1}$, we obtain 
\begin{equation}
  \label{eq:singleODE3}
  \tfrac {\d \psi}{\d \tau}=
\frac{(1+\epsilon^2)g'(\psi)-1+\tfrac \mu 2}{1-\tfrac \mu 2-\epsilon^2 g'(\psi)}+O\parb{\e^{(\mu/2-1)\tau/\epsilon}}.
\end{equation} 
\end{subequations} Note that to leading order  (\ref{eq:singleODE3})
is autonomous.   Any solution $\psi$ to \eqref{eq:singleODE3} converges to a  root of
the corresponding fixed point equation $g'(\psi)=(1-\mu/2)(1+\epsilon^2)^{-1}$,
 say 
 $\psi\to \psi_0$.
In particular going back to the  time $s$ of
\eqref{eq:flowd=2} we conclude  that $\theta-\epsilon \ln r\to  \psi_0$ as
$s\to \infty$, and since $\ln r\to  \infty$ indeed also $\theta \to
\infty$. So  the asymptotic normalized velocity does not exist for the
flow (\ref{eq:flowd=2}).  Noticing   that \eqref{eq:flowd=2} defines a class of zero
energy scattering orbits in a reparametrized time we conclude that
indeed these orbits do not
have asymptotic normalized velocity. 

In Subsection
\ref{subsec:Geometric properties}  we study a flow of the type 
\eqref{eq:flowd=2}  
for general $\epsilon$-small perturbations in any dimension  
 (extended as well to any non-negative energy).

\section{Eikonal equation}\label{sec:Eikonal equation} 
 One reason for considering $\vV_\epsilon$ with $\epsilon$ small only
 is that  Classical Mechanics is particularly nice for this
 class. Whence (cf. \cite{CS}) there exists 
 a global solution to the eikonal equation 
 \begin{align}
   \label{eq:eikonal}
   \MoveEqLeft |\nabla
   S_\epsilon|^2=K_\epsilon;\\&K_\epsilon(x)=K_\epsilon(x,\lambda):=\lambda-V_\rad(|x|)-W_\epsilon(x),\;
   \lambda\geq 0.\nonumber
 \end{align} We also introduce
 \begin{align*}
   K_0(r)&=\lambda-V_\rad(r),\\f(r,\lambda)&=\sqrt{K_0(r)} \\
 S_0(x)&=S_0(|x|)=\int_0^{|x|}f(r,\lambda)\d r.
 \end{align*} As used in \cite{CS} we have uniformly in $
 r,\lambda\geq0$ 
 \begin{equation*}
   crf(r,\lambda)\leq S_0(r)\leq Crf(r,\lambda).
 \end{equation*} Notice also that $S_0$ is a solution to
 \eqref{eq:eikonal} if $W_\epsilon=0$.

Due to  \cite{CS} we have 
\begin{proposition}\label{prop:eikonal-equation} Let $V_\rad$ be given
  as in Condition \ref{cond:rad} (assuming also
  \eqref{eq:supp_rad}) and let $l\geq 2$. There exists
  $\epsilon_0>0$ such that for all $\epsilon\in(0,\epsilon_0]$ and all  
  $\epsilon$-{small} perturbations $W_\epsilon$ (assuming \eqref{eq:supp_pert}) there exists a family
  of real-valued smooth functions 
  $\{S_\epsilon\in C^\infty (\R^d\setminus\{0\})|\,\lambda\geq 0\}$ with the following properties:
\begin{enumerate}[\normalfont (1)]
\item\label{item:3} 
$|\nabla S_\epsilon(x)|^2=K_\epsilon(x) \mfor x\in
\R^d\setminus\{0\}.$
  \item \label{item:5} $S_\epsilon(x)=S_0(x)=f(0,\lambda)|x| $ for  $r=|x|\leq R=(-V_\rad(0))^{-1/2}$.
  \item \label{item:4} For all $r_0>0$, uniformly in $W_\epsilon$ with  $\epsilon\in
    (0,\epsilon_0]$ 
    \begin{equation*}
      {\max_{|\alpha|\leq l}\sup_{\lambda\geq 0}\sup_{|x|\geq r_0}}\langle x\rangle^{|\alpha|}\left|  S_0(x)^{-1}\partial_x^\alpha
        S_\epsilon(x)\right|{<\infty}.
    \end{equation*}
\item \label{item:6}{Uniformly in $W_\epsilon$, 
   $\lambda \geq 0$ and $x\in \R^d\setminus\{0\}$} 
 \begin{align*}
   S_\epsilon(x)&=S_0(r)\parb{1+{O(\epsilon)}},\\
 \nabla S_\epsilon(x)&=f(r,\lambda)\;\parb{\langle \hat
   x|+{{O\parb{\epsilon^{3/4}}}}};\;\hat x:=x/r,\\
\nabla^2
S_\epsilon(x)&=\tfrac{f(r,\lambda)}{r}\;\parb{P_\perp+\tfrac{rf'(r,\lambda)}{f(r,\lambda)}P+{O\parb{\epsilon^{1/2}}}};\\&
\,\,\,\,\,\,\,\,\,\,\,\,\,\,\,\,\,\,\,\,\,\,\,\,\,\,\,\,\,\,\,\,\,\,\,\,\,\,\,\,\,\,\,\,\,P=P(\hat
x):=|\hat x\rangle\langle \hat x|,\,P_\perp:= I-P.
\end{align*}
\item \label{item:6b} For all $\alpha\in \N_0^d$
\begin{align*}
  \partial_x^\alpha
        S_\epsilon \in C (\R^d\setminus\{0\}\times [0,\infty));\;S_\epsilon =S_\epsilon (x,\lambda).
\end{align*}

\end{enumerate}
\end{proposition}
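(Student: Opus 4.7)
My plan follows the method of characteristics, exploiting the splitting of $\R^d$ into the inner ball $\{|x|\le R\}$ (where by \eqref{eq:supp_rad} and \eqref{eq:supp_pert} one has $V_\rad\equiv V_\rad(0)$ and $W_\epsilon\equiv 0$) and its complement. On the inner ball, $K_\epsilon(x,\lambda)=\lambda-V_\rad(0)=f(0,\lambda)^2$ is a positive constant, so $S_\epsilon(x):=f(0,\lambda)|x|$ solves \eqref{eq:eikonal} and coincides with $S_0$ there. This takes care of (\ref{item:5}) and furnishes Cauchy data for the eikonal equation on the sphere $\{|x|=R\}$.

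In the exterior, I would then consider the bicharacteristic flow of the Hamiltonian $H_\epsilon(x,p)=\tfrac12\parb{p^2-K_\epsilon(x,\lambda)}$ with initial data $(R\sigma,f(0,\lambda)\sigma)$, $\sigma\in S^{d-1}$, on the energy shell $\{H_\epsilon=0\}$. For $\epsilon=0$ the flow $\Phi_0$ is purely radial and explicit. For small $\epsilon$, by smooth dependence on parameters combined with a continuation argument leveraging the virial monotonicity of Condition \ref{cond:rad}\ref{it:assumption3r}, I would show that the perturbed flow $\Phi_\epsilon$ exists globally in $s\ge 0$, remains close to $\Phi_0$ in weighted $C^l$ norms of size $O(\epsilon)$, and realizes a diffeomorphism $[0,\infty)\times S^{d-1}\to\{|x|\ge R\}$.

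Once the flow is in hand, define $S_\epsilon$ on the exterior along characteristics by the Hamilton--Jacobi identity
\begin{equation*}
  S_\epsilon\parb{\Phi_\epsilon(s,\sigma)}=f(0,\lambda)R+\int_0^s K_\epsilon\parb{\Phi_\epsilon(s',\sigma),\lambda}\,\d s',
\end{equation*}
so that $\nabla S_\epsilon=p$ along characteristics and $|\nabla S_\epsilon|^2=p^2=K_\epsilon$ is preserved by conservation of $H_\epsilon$; this yields (\ref{item:3}) after matching smoothly with the interior expression at $|x|=R$. The bound (\ref{item:4}) and the expansions (\ref{item:6}) are then obtained by Taylor-expanding $\Phi_\epsilon$ around $\Phi_0$, differentiating the defining formula, and invoking the pointwise bounds on $W_\epsilon$ from \eqref{eq:Wbnds2} together with the explicit radial identity $S_0(r)=\int_0^r f(t,\lambda)\,\d t$. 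The sub-linear rates $\epsilon^{3/4}$ and $\epsilon^{1/2}$ in (\ref{item:6}) reflect the anisotropy of the perturbation: the components of $\nabla S_\epsilon-f\inp{\hat x|}$ and $\nabla^2 S_\epsilon$ parallel and transverse to the ray have different sensitivities to the $\epsilon$-perturbation, with the transverse Hessian component absorbing the weakest exponent after accounting for the angular spreading of the perturbed rays. Joint continuity (\ref{item:6b}) follows from smooth dependence of the Cauchy problem on the parameter $\lambda\ge 0$.

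The main obstacle is uniformity down to $\lambda=0$. At zero energy the conformal metric $K_\epsilon\,\d x^2$ becomes degenerate at infinity ($K_\epsilon\sim \epsilon_1\inp{x}^{-\mu}\to 0$), so a straightforward Gronwall argument does not suffice to globalize $\Phi_\epsilon$ or to keep its Jacobian uniformly bounded below for large $s$ and small $\lambda$. The virial lower bound in Condition \ref{cond:rad}\ref{it:assumption3r} plays the role of a Lyapunov-type device that compensates for this degeneracy along rays, and quantifying it carefully (as a function of the order $l$ of derivatives needed) is where the bulk of the technical work sits and determines the threshold $\epsilon_0$ in the statement.
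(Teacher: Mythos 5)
The statement is imported from \cite{CS}; the paper does not prove it internally, and your sketch should be measured against what \cite{CS} actually does (some of which is visible in Subsection~3.1 and in the proof of Lemma~\ref{lemma:diagonalization}). Your overall strategy --- solve by characteristics/geodesics starting from $\{|x|\le R\}$, use conservation of $H_\epsilon$ on the zero energy shell to get \ref{item:3}, and then propagate estimates outward --- is in the same spirit as the paper's geometric construction $S_\epsilon(x)=d_{g_\epsilon}(x,o)$ for the conformal metric $g_\epsilon=K_\epsilon\,\d x^2$, since the geodesics of $g_\epsilon$ are reparametrized bicharacteristics of $\xi^2-K_\epsilon$.

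However there is a genuine gap in the middle of your argument. You assert that the perturbed flow ``remains close to $\Phi_0$ in weighted $C^l$ norms of size $O(\epsilon)$.'' This is false, and the paper's own Example in Subsection~\ref{exa:Example} refutes it: there the angular variable obeys $\theta-\epsilon\ln r\to\psi_0$, so $|\Phi_\epsilon(s,\sigma)-\Phi_0(s,\sigma)|\gtrsim r\,\epsilon\ln r$ and even after dividing by $r$ the deviation is $\epsilon\ln r$, which is unbounded; no weight in powers of $r$ fixes this. This is exactly why a naive Gronwall argument fails (as you note yourself near the end, without supplying the fix). It is also why the paper's statement has the sub-linear exponents $\epsilon^{3/4}$ and $\epsilon^{1/2}$ in \ref{item:6}: if the flow really were $O(\epsilon)$-close in $C^1$ one would instead get $\nabla S_\epsilon=f(\langle\hat x|+O(\epsilon))$ and $\nabla^2 S_\epsilon$ within $O(\epsilon)$, which is not what is proved. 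Your appeal to ``anisotropy'' as an explanation of the exponents is an observation, not a derivation.

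The actual mechanism in \cite{CS} (and visible in the proof of Lemma~\ref{lemma:diagonalization}) is a change of variables $y=\Psi(x)=S_0(x)\hat x$ that turns $g_\epsilon$ into a model metric $\tilde g_\epsilon$ close to Euclidean; the unperturbed geodesics become straight lines $ty$ and the perturbed ones are $\tilde\gamma_y(t)=ty+\kappa_y(t)$. The bounds one can prove for $\kappa_y$ are \emph{Sobolev-type} bounds $\|\partial_y^\beta\kappa\|_{\vH^p}\le C_p\langle y\rangle^{1-|\beta|}$ (\cite[Proposition~6.8]{CS}), not pointwise $C^l$-closeness of order $\epsilon$. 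All the final estimates --- \ref{item:4}, \ref{item:6}, and their $l$-dependent version --- are then pulled back via H\"older, Hardy, and interpolation, which is precisely where the fractional powers of $\epsilon$ originate. Your sketch omits this reduction to the model metric and the corresponding Sobolev estimates; without it the globalization and the uniformity down to $\lambda=0$ do not go through, which, as you correctly identify, is where the actual content lies.
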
 We remark that  for $l=2$ the bounds \ref{item:4}
follow from \ref{item:6}. Having $l>2$
influences only  on \ref{item:4} and requires, according to
the proposition, an $\epsilon_0>0$ possibly depending on $l$. It is
tempting  to conjecture  that one could take  $l=2$ in the proposition and replace the
constraint  of \ref{item:4}, $|\alpha|\leq l$, by  $|\alpha|\leq k$
where $k$ is arbitrarily given. The new bounds would be uniform in 
perturbations  from any  bounded family (bounded in terms of the
seminorms \eqref{eq:Wbnds1}). However this is an open problem, and  in fact 
it is not  
known  whether
$\epsilon_0>0$ can be chosen independently of $l$, although there are
weaker estimates than  \ref{item:4} indeed independent of $l$, cf.  \cite[Proposition
1.2]{CS}. The latter  deficiency gives rise to a slight complication when
dealing with $S_\epsilon$ in the context of pseudodifferential
operators, see \eqref{eq:calc3}.

\subsection{Geometric properties}\label{subsec:Geometric properties} 
The construction of the function $S_\epsilon$ of  Proposition
\ref{prop:eikonal-equation}  is given  by a geometric procedure:
We consider the metric $g_\epsilon=K_\epsilon\d x^2$ on the manifold
$M=\R^d$ and the origin $o=0\in M$. Then  for all $x\in M$ the
number  $S_\epsilon(x)$ is the distance in this metric to $o$,
i.e. $S_\epsilon(x)=d_{g_\epsilon}(x,o)$. The function $S_\epsilon$ is
called the  \emph{maximal} solution to the eikonal equation.
\subsubsection{Flow}
In the metric $g_\epsilon$ the
unit-sphere in the tangent space $TM_o$ at the origin $o=0$
is given by $f(0,\lambda)^{-1}S^{d-1}$ where $S^{d-1}$ is the standard
unit-sphere in $\R^d$. We shall use the notation $\sigma$ for generic
points of $S^{d-1}$ and we let $\d \sigma$ denote the standard 
Euclidean surface measure on $S^{d-1}$. The exponential mapping at the
origin for the metric $g_\epsilon$ defines a diffeomorphism $\Phi:
\R_+\times S^{d-1}\to \R^d\setminus\{0\}$
\begin{equation*}
  \Phi(s, \sigma)=\exp_o(sf(0,\lambda)^{-1} \sigma),
\end{equation*} and we have the flow property
\begin{subequations}
  \begin{equation}
  \label{eq:flow}
  \tfrac{\d}{\d s}\Phi=\parb{K_\epsilon^{-1}\nabla
    S_\epsilon}(\Phi);\;s>0,\sigma\in S^{d-1}.
\end{equation} Since by assumption, cf.  \eqref{eq:supp_rad} and
\eqref{eq:supp_pert}, the conformal factor $K_\epsilon$ is constant for
$r=|x|\leq R=(-V_\rad(0))^{-1/2}$ we have explicitly 
\begin{equation*}
  \Phi(s, \sigma)=sf(0,\lambda)^{-1} \sigma\mfor s\leq 1.
\end{equation*} Whence  we can supplement \eqref{eq:flow} by the
``initial condition''
\begin{equation}
  \label{eq:ini}
  \Phi(1, \sigma)=f(0,\lambda)^{-1}\sigma.
\end{equation} 
\end{subequations}
The assertion above that $\Phi$ is a diffeomorphism
can be proved taking \eqref{eq:flow} and \eqref{eq:ini} as a
definition of the map. Notice   the  consequences of
\eqref{eq:eikonal}, \eqref{eq:flow}  and \eqref{eq:ini} that the distance 
$d_{g_\epsilon}(x,o)=S_\epsilon(x)=s$. This point of view is taken in the proof of
an  
analogous
statement \cite[Proposition 2.2]{ACH}. However the mapping  property  can also
be viewed as  an
independent part of the proof of Proposition
\ref{prop:eikonal-equation} given in \cite{CS}. The flow $\Phi$
constitutes  a
family of reparametrized Hamiltonian orbits for the Hamiltonian
\begin{equation}
  \label{eq:clasHam}
 h_\epsilon=\xi^2+V_\rad(|x|)+W_\epsilon(x) 
\end{equation}
 at energy $\lambda$. It is continuous in $\lambda$, i.e. $\Phi\in
 C (\R_+\times S^{d-1}\times [0,\infty))$; $\Phi=\Phi(s,\sigma,\lambda)$.
\subsubsection{Surface measure} The mapping $\Phi
(s,\cdot):S^{d-1}\to \vS_\epsilon(s):=\{x\in \R^d|S_\epsilon(x)=s\}$ induces a  measure on $S^{d-1}$ by
pullback $\d \omega=\Phi (s,\cdot)^*\d A(x)$ where $\d A(x)$ refers to
the  Euclidean surface  measure on $\vS_\epsilon(s)$. A computation using
\eqref{eq:flow} and \eqref{eq:ini} shows that
explicitly
\begin{align}
\begin{split}
  \label{eq:surf_meas}
\d  \omega&=K_\epsilon^{1/2}(x)m_\epsilon(x)\,\d
\sigma;\\ m_\epsilon(x)&=f(0,\lambda)^{2-d}K_\epsilon^{-1}(x)\exp\parbb{\int_1^s\parb{K_\epsilon^{-1}\triangle
    S_\epsilon} \parb{\Phi (t,\sigma)}\, \d t },\;x=\Phi (s,\sigma).
\end{split}
\end{align} Indeed, take
local coordinates $\theta_1,\dots,\theta_{d-1}$ on $S^{d-1}$, write
\eqref{eq:flow} as $\dot \eta=F(\eta)$ and let $A$ be the $d\times
(d-1)-$matrix with entries $a_{ki}
=\partial_{\theta_i}\eta^k$. The pullback $\d \omega$ is computed
from the metric $g_{ij}=(A^T A)_{ij}$ noting that the  determinant $|g|$ obeys
\begin{equation*}
  \tfrac {\d}{\d s}\ln |g|=\tr\parb{ (A^T A)^{-1}\tfrac {\d}{\d s}(A^T A)}=\tr\parb{(B^T+B)P}=2K_\epsilon^{-1}\triangle
    S_\epsilon-\tfrac{\d}{\d s}\ln K_\epsilon(\Phi),
\end{equation*} where $B=F'$ (the Jacobian matrix) and
$P_{kl}=\delta_{kl}-(\partial_kS_\epsilon)(\partial_lS_\epsilon)K_\epsilon^{-1}$. We
integrate and obtain
\begin{equation*}
\d  \omega =|g|^{1/2}\d
\theta=f(0,\lambda)K_\epsilon^{-1/2}(x)\exp\parbb{\int_1^s\parb{K_\epsilon^{-1}\triangle
    S_\epsilon} \parb{\Phi (t,\sigma)}\d t } f(0,\lambda)^{1-d} \,\d
\sigma,
\end{equation*} showing \eqref{eq:surf_meas}.

\subsubsection{Volume measure}  In combination with \eqref{eq:surf_meas} the co-area formula, cf. \cite[Theorem
C.5]{Ev}, yields for
(reasonable) functions $\phi$ on $\R^d$

\begin{align}
\begin{split}
  \label{eq:co_area}
  \int \phi(x) \,\d x&=\int_0^\infty \d s \int_{\vS_\epsilon(s)}\phi
  K_\epsilon^{-1/2}\,\d A(x)\\&=\int_0^\infty \d s
  \int_{S^{d-1}}\parb{\phi m_\epsilon}\parb{\Phi (s,\cdot)}\, \d
\sigma.
\end{split}
\end{align} 
 %Due to \eqref{eq:co_area} we can write $\vH=L^2(\R^d,\d x)=L^2(\R_+,\d
%s;\tilde \vG_\epsilon)$
%where
%\begin{equation*}
%  \tilde \vG_\epsilon=L^2(S^{d-1},m_\epsilon\parb{\Phi (s,\cdot)}\, \d \sigma).
%\end{equation*}

Let $\vB_\epsilon(s):=\{x\in
\R^d|S_\epsilon(x)\leq s\}$ for $s>0$. Clearly $\partial
\vB_\epsilon(s)=\vS_\epsilon(s)$ and whence the Gauss integration theorem, cf. \cite[Theorem
C.1]{Ev}, yields for $j=1,\dots,d$
\begin{align}
  \begin{split}
  \label{eq:co_Gauss}
  \int_{\vB_\epsilon(s)} (\partial_j\phi)(x)
  dx&=\int_{\vS_\epsilon(s)}\phi(\partial_jS_\epsilon)  K_\epsilon^{-1/2}\,\d A(x)\\&=\int_{S^{d-1}}\parb{\phi(\partial_jS_\epsilon) m_\epsilon}\parb{\Phi (s,\cdot)}\,\d
\sigma.
  \end{split}
\end{align} 

\subsection{Diagonalization} \label{subsec:Diagonalization}
  Under the conditions of Section \ref{sec:Smaller class} we consider
  the Hamiltonian  $H = -\Delta + V_\epsilon$  on $\mathcal
  H=L^2({\mathbb R}^d) $. Denoting the corresponding continuous part
  by  $H_\c$ we aim at constructing a
diagonalizing transform taking $H_\c\to M_\lambda$ where $M_\lambda$ is
multiplication by $\lambda$ in $\widetilde\vH :=L^2(\R_+,\d \lambda;\vG)$ with
$\vG:=L^2(S^{d-1},\d \sigma)$. Here we explain our  procedure leaving 
the details of implementation to Section \ref{sec:Distorted Fourier
  transform}. It goes as follows, assuming
below $v\in L^2_3$ (recall $L^2_m:=\inp{x}^{-m}L^2(\R^d)$):  By
Stone's formula, cf. \cite{RS},
\begin{align*}
  \|P_\c v\|^2=\pi^{-1}\lim_{\lambda_0\to \infty }\int _0^{\lambda_0} \inp*{v, (\Im R(\lambda +\i
    0))v}\,\d \lambda=\pi^{-1}\int _0^\infty \inp*{v, (\Im R(\lambda +\i
    0))v}\,\d \lambda.
\end{align*} Whence writing $u=R(\lambda +\i 0)v$, $p_j=-\i \partial_j$ and using
\eqref{eq:co_Gauss}
\begin{align*}
  \|P_\c v\|^2&=\pi^{-1}\int _0^\infty  \Im \inp*{(H-\lambda)u, u}\,\d
  \lambda\\
&=\pi^{-1}\int _0^\infty  \lim_{s\to \infty}\Re \sum^d_{j=1}\int_{S^{d-1}}
\parb{\overline{(p_ju)}u (\partial_jS_\epsilon)m_\epsilon}\parb{\Phi (s,\cdot)}\,\d \sigma\d \lambda.
\end{align*} 
Next we substitute
$p_ju=(p_j-\partial_jS_\epsilon)u+(\partial_jS_\epsilon )u$. The
contribution from the first term will be shown to vanish in the  limit
$s\to
\infty$. Whence we have 
\begin{align*}
  \|P_\c v\|^2=\pi^{-1}\int _0^\infty  \lim_{s\to \infty}\int_{S^{d-1}}
\parb{|u|^2 K_\epsilon m_\epsilon}\parb{\Phi (s,\cdot)}\,\d \sigma\d \lambda.
\end{align*} We are lead to define
\begin{subequations}
\begin{equation}
  \label{eq:diag}
  F^+(\lambda)v=\vGlim_{s\to \infty}\,\pi^{-1/2}\parb{\e^{-\i S_\epsilon} K^{1/2}_\epsilon m^{1/2}_\epsilon R(\lambda +\i 0)v}\parb{\Phi (s,\cdot)},
\end{equation} yielding  
\begin{align*}
  \|P_\c v\|^2=\int _0^\infty  \|F^+(\lambda)v\|_{\vG}^2\,\d \lambda.
\end{align*} Finally the  ``distorted Fourier transform''
\begin{align*}
  F^+:=\int _0^\infty \oplus F^+(\lambda)\,\d \lambda
\end{align*} diagonalizes $H_\c$, i.e. $F^+H_\c=M_\lambda F^+$.

Similarly we can define the ``distorted Fourier transform''
\begin{align*}
  F^-:=\int _0^\infty \oplus F^-(\lambda)\,\d \lambda,
\end{align*} where
\begin{equation}
  \label{eq:diag_-}
  F^-(\lambda)v=\vGlim_{s\to \infty}\,\pi^{-1/2}\parb{\e^{\i S_\epsilon} K^{1/2}_\epsilon m^{1/2}_\epsilon R(\lambda -\i 0)v}\parb{\Phi (s,\cdot)}.
\end{equation}
 \end{subequations}

\subsection{Outgoing approximate generalized
  eigenfunctions} \label{subsec:Approximate generalized eigenfunctions}
 We conclude this section by stating and proving a technical  result
 motivated by the formulas \eqref{eq:diag} and \eqref{eq:diag_-}. This
 enable us to construct outgoing and sufficiently well approximate generalized eigenfunctions
 which in turn are used to
 construct exact  generalized eigenfunctions. 

Let
 $\tau\in C^\infty(S^{d-1})$ and $\lambda\geq 0$ be given. Define a
 function $\tilde u$ by 
 \begin{align}
   \label{eq:tilde u}
   \tilde u=\tilde u(x)= \pi^{1/2}\parb{\chi\e^{\i S_\epsilon} K^{-1/2}_\epsilon
     m^{-1/2}_\epsilon}(x)\tau(\sigma);\;x=\Phi
   (s,\sigma),\;\chi(x)=\chi(|x|). \end{align} Here
 $\chi(r)=\chi(r>2)$ is a cutoff function; see 
 Subsection \ref{subsec:Improved microlocalization for all 
  epsilon-small perturbation} for the precise definition. 
 A short computation (using for example \eqref{eq:gamma_par} stated
 below) shows that
 \begin{align*}
   (H-\lambda)\tilde u=-\pi^{1/2}\chi\e^{\i S_\epsilon(x)} \triangle_x \parb{(K^{-1/2}_\epsilon
     m^{-1/2}_\epsilon)(x)\tau(\sigma)}+\text{ compactly supported term}.
 \end{align*} It will be important for us that also the first term  to
 the right is small at infinity.
 \begin{lemma}
   \label{lemma:diagonalization} Let  
   $\varepsilon>0$ be given and suppose $l=4$. Then for $\epsilon_0>0$ sufficiently
   small, for all $\tau\in C^\infty(S^{d-1})$ and all $\lambda_0>0$
   there  exists $C>0$ such that 
   uniformly in $W_\epsilon$ with 
   $\epsilon\in(0,\epsilon_0]$  and in $\lambda\in [0,\lambda_0]$: 
   \begin{subequations}
   \begin{equation}
     \label{eq:bnd a}
     \forall |\alpha|\leq 2\,\forall |x|\geq 1: \big |K^{1/2}_\epsilon
     m^{1/2}_\epsilon \partial_x^\alpha\parb{(K^{-1/2}_\epsilon
     m^{-1/2}_\epsilon)(x)\tau(\sigma)}\big |\leq C\inp{x}^{\varepsilon-|\alpha|}.
   \end{equation} In particular  the function $\tilde u$ of \eqref{eq:tilde u} obeys
   \begin{equation}
     \label{eq:tilde_ub}
     K^{1/2}_\epsilon
     m^{1/2}_\epsilon(H-\lambda)\tilde u=O\parb{ \inp{x}^{\varepsilon-2}}.
   \end{equation}    
   \end{subequations}
\end{lemma}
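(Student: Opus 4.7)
The plan is to first establish the pointwise bound \eqref{eq:bnd a} and then deduce \eqref{eq:tilde_ub} from it by a direct computation of $(H-\lambda)\tilde u$, exploiting the eikonal equation together with a transport identity built into the definition of $m_\epsilon$. Writing $\phi:=K_\epsilon^{-1/2}m_\epsilon^{-1/2}$ and $F:=\phi\tau(\sigma)$, the eikonal equation $|\nabla S_\epsilon|^2=K_\epsilon=\lambda-V_\rad-W_\epsilon$ together with $V_\epsilon=V_\rad+V_2+W_\epsilon$ gives
\begin{equation*}
(H-\lambda)(\e^{\i S_\epsilon}F)=\e^{\i S_\epsilon}\parb{V_2F-2\i\nabla S_\epsilon\cdot\nabla F-\i(\triangle S_\epsilon)F-\triangle F}.
\end{equation*}
Since $\sigma$ is constant along the flow lines of $K_\epsilon^{-1}\nabla S_\epsilon$, one has $\nabla S_\epsilon\cdot\nabla\tau(\sigma)=0$. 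Differentiating the formula \eqref{eq:surf_meas} for $m_\epsilon$ along $\partial_s=K_\epsilon^{-1}\nabla S_\epsilon\cdot\nabla$ yields
\begin{equation*}
\nabla S_\epsilon\cdot\nabla(\ln K_\epsilon+\ln m_\epsilon)=\triangle S_\epsilon,\qquad\text{equivalently}\qquad 2\nabla S_\epsilon\cdot\nabla\phi+(\triangle S_\epsilon)\phi=0,
\end{equation*}
so the two middle terms cancel and we are left with $(H-\lambda)(\e^{\i S_\epsilon}F)=\e^{\i S_\epsilon}(V_2F-\triangle F)$. Since $V_2$ and $\nabla\chi,\triangle\chi$ have compact support, this shows $(H-\lambda)\tilde u=-\pi^{1/2}\chi\e^{\i S_\epsilon}\triangle F$ modulo a compactly supported remainder; multiplying by $K_\epsilon^{1/2}m_\epsilon^{1/2}=\phi^{-1}$ and invoking \eqref{eq:bnd a} with $|\alpha|=2$ then gives \eqref{eq:tilde_ub}.

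For \eqref{eq:bnd a} itself, the plan is to expand by Leibniz,
\begin{equation*}
\phi^{-1}\partial_x^\alpha\parb{\phi\tau(\sigma(x))}=\sum_{\beta+\gamma=\alpha}\binom{\alpha}{\beta}\parb{\phi^{-1}\partial_x^\beta\phi}\partial_x^\gamma\tau(\sigma(x)),
\end{equation*}
and to estimate the two factors separately. For the angular factor, the inverse function theorem applied to $\Phi$, combined with the Jacobi-field asymptotics $\nabla^2 S_\epsilon=\tfrac{f}{r}(P_\perp+\tfrac{rf'}{f}P+O(\epsilon^{1/2}))$ of Proposition \ref{prop:eikonal-equation}\ref{item:6}, shows that the angular block of $\partial_\sigma\Phi$ is invertible with inverse of size $O(\inp{x}^{-1})$, so $\partial_x\sigma(x)=O(\inp{x}^{-1})$. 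Higher derivatives of $\sigma(x)$ up to order two follow by differentiating the inverse-function identities and using Proposition \ref{prop:eikonal-equation}\ref{item:4} for $|\alpha|\leq 3$, yielding $\partial_x^\gamma\tau(\sigma(x))=O(\inp{x}^{-|\gamma|})$ for $|\gamma|\leq 2$. For the factor $\phi^{-1}\partial_x^\beta\phi$, write $\ln\phi=-\tfrac12\ln K_\epsilon-\tfrac12\ln m_\epsilon$: derivatives of $\ln K_\epsilon$ are bounded directly from Condition \ref{cond:rad}\ref{it:assumption2r} and $\epsilon$-smallness, and for $\ln m_\epsilon$ one reduces via its defining formula to estimating derivatives of
\begin{equation*}
I(x)=\int_1^{S_\epsilon(x)}(K_\epsilon^{-1}\triangle S_\epsilon)(\Phi(t,\sigma(x)))\,\d t,
\end{equation*}
whose flow-direction derivative equals $K_\epsilon^{-1}\triangle S_\epsilon$ at the upper limit and whose angular derivatives are controlled by differentiating the integrand.

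The main obstacle is the second $x$-derivative of $I$: its angular contribution requires differentiating $K_\epsilon^{-1}\triangle S_\epsilon$ once with respect to $\sigma$ along $\partial_\sigma\Phi$, which in turn uses four derivatives of $S_\epsilon$, and this is precisely why Proposition \ref{prop:eikonal-equation} is invoked with $l=4$. The small slack $\inp{x}^\varepsilon$ in \eqref{eq:bnd a} is genuinely needed: with $S_\epsilon(x)\sim\inp{x}^{1-\mu/2}$ and the flow relation $|\Phi(t,\sigma)|\sim t^{2/(2-\mu)}$, the integrand of $I$ decays only like $t^{-1}$, producing a $\log\inp{x}$ factor in $I$ and finitely many such factors in its higher derivatives; these logarithms, together with the $\epsilon^{1/2}$ corrections in Proposition \ref{prop:eikonal-equation}\ref{item:6}, are absorbed into $\inp{x}^\varepsilon$. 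All estimates are uniform in $\lambda\in[0,\lambda_0]$ and in $\epsilon$-small $W_\epsilon$ provided $\epsilon_0>0$ is chosen sufficiently small.
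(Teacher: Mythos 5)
Your deduction of \eqref{eq:tilde_ub} from \eqref{eq:bnd a} via the transport identity $2\nabla S_\epsilon\cdot\nabla\phi+(\triangle S_\epsilon)\phi=0$ is correct and is essentially the ``short computation'' the paper performs just before the lemma. The reduction of \eqref{eq:bnd a} to bounding $T_2=\tau(\sigma(x))$ and $T_1=I(x)$ also matches the paper's Step~I.

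The gap is in the control of $\sigma(x)$ and of the flow $\Phi$. You assert $\partial_x^\gamma\tau(\sigma(x))=O(\inp{x}^{-|\gamma|})$, $|\gamma|\le 2$, \emph{without} $\varepsilon$-loss, citing the inverse function theorem together with Proposition~\ref{prop:eikonal-equation}~\ref{item:4}, \ref{item:6}. But those items give pointwise bounds on derivatives of $S_\epsilon$, not on $\partial_\sigma\Phi$ or its inverse. Passing from the Hessian $\nabla^2 S_\epsilon$ to the Jacobi field $\partial_\sigma\Phi$ requires integrating the linearized flow, and that is where the hard part of the argument lives. The paper does this by changing to the $\Psi$-coordinates $y=\Psi(x)=S_0(x)\hat x$, writing $\Phi(t,\sigma)=\Psi^{-1}\parb{\tilde\gamma_{\Psi(x)}(t/S_\epsilon(x))}$ with the model geodesic $\tilde\gamma_y(t)=ty+\kappa_y(t)$, and then using the Sobolev bounds $\|\partial_y^\beta\kappa_y\|_{\vH^p}\leq C_p\inp{y}^{1-|\beta|}$ of \cite[Proposition 6.8]{CS}. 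These bounds are in $L^p$ of the time variable only; converting to pointwise bounds at $t=1/S_\epsilon$ by H\"older costs a factor $S_\epsilon^{1/p}\lesssim\inp{x}^\varepsilon$. In particular the paper's proof yields only $|\partial^\gamma(\tau\circ\sigma)|\le C\inp{x}^{\varepsilon-|\gamma|}$ even for the purely angular piece $T_2$ --- there is an $\varepsilon$-loss for $T_2$ as well as for $T_1$. The remark after the lemma, pointing to the example of Subsection~\ref{exa:Example}, indicates this loss is genuine and not merely an artefact of the method. Your attribution of the entire $\inp{x}^\varepsilon$ slack to a logarithm in $I$ and to the $O(\epsilon^{1/2})$ error in Proposition~\ref{prop:eikonal-equation}~\ref{item:6} therefore misidentifies the source, and the asserted no-loss bound on $\partial_x\sigma$ is unsubstantiated. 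To close the argument you would need to invoke the $\Psi$-change of variables and the [CS] Sobolev/H\"older machinery (or provide a genuinely new pointwise Jacobi-field estimate), which your proposal currently omits.
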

 \begin{proof} Let
   \begin{align*}
     T_1(x)&=\int_1^s\parb{K_\epsilon^{-1}\triangle
    S_\epsilon} \parb{\Phi (t,\sigma)}\d t,\\
T_2(x)&=\tau(\sigma);\;x=\Phi(s,\sigma).
   \end{align*} We need to show that
   \begin{equation}
     \label{eq:cla_bnds}
     \forall |\alpha|\leq 2\,\forall |x|\geq 1: \big |\partial_x^\alpha T_j(x)\big |\leq C\inp{x}^{\varepsilon-|\alpha|};\;j=1,2.
   \end{equation} For that we shall use the diffeomorphism $\psi:
   \R^d\to \R^d$ given by 
   \begin{align*}
    y=\Psi (x)= S_0(x)\hat x=\int_0^{|x|}f(r,\lambda)\d r \,|x|^{-1}x,
   \end{align*} and invoke results of \cite{CS} for the model metric
   \begin{align*}
     \tilde g_\epsilon= (\Psi^*)^{-1}g_\epsilon;\;g_\epsilon=K_\epsilon \d x^2.
   \end{align*} This idea of changing framework is actually behind Proposition
   \ref{prop:eikonal-equation} too. Here we shall use  the bounds
   \begin{subequations}
   \begin{align}
     \label{eq:unPs} \forall |\alpha|\leq 2:|\partial^\alpha_x\Psi(x)|&\leq C
     \inp{x}^{-|\alpha|}\inp{\Psi (x)},\\
\label{eq:unPsi-1} \forall |\beta|\leq 2:|\partial^\beta_y\Psi^{-1}(y)|&\leq C
     \inp{y}^{-|\beta|}\inp{\Psi^{-1} (y)},
     \end{align}  which are uniform in $\lambda\in [0,\lambda_0]$.
   \end{subequations}

\noindent {\bf Step I} We note the representation
\begin{equation}
  \label{eq:repGa}
 \Phi (t,\sigma)= \Psi^{-1}\parb{ \tilde \gamma_{\Psi(x)}\parb{t/S_\epsilon(x)}};\;x=\Phi(s,\sigma)=\Phi(S_\epsilon(x),\sigma),
\end{equation} where, using notation of \cite{CS}, $\tilde \gamma_y(t)=ty+\kappa_y(t)$
is  the unique geodesic in  the metric $\tilde
g_\epsilon$ emanating from $0\in\R^d$ with value $y$ at time
one. Whence we can rewrite $T_j(x)$ as 
\begin{align*}
     T_1(x)&=\int_1^{S_\epsilon(x)}\phi\parb{\tilde \gamma_{\Psi(x)}\parb{t/S_\epsilon(x)}}\d t;\;\phi=(K_\epsilon^{-1}\triangle
    S_\epsilon)\circ \Psi^{-1},\\
T_2(x)&=\tau\parb{f(0,\lambda) \Psi^{-1}\parb{\tilde
  \gamma_{\Psi(x)}\parb{1/S_\epsilon(x)}}}. 
   \end{align*} Due to Proposition
   \ref{prop:eikonal-equation}\ref{item:4} and \eqref{eq:unPsi-1} we
   have the bounds (since we have assumed that $l=4$) 
   \begin{equation}
     \label{eq:phi_bnds}
     \forall |\beta|\leq 2:|\partial^\beta_y\phi| \leq C \inp{y}^{-1-|\beta|}.
   \end{equation}

\noindent {\bf Step II} We prove  Sobolev bounds of model geodesics. As in
\cite[Section 6]{CS} introduce  the Sobolev spaces $\vH^p:=W_0^{1,p}(0,1)^d$, $1<p<\infty$,
consisting of absolutely continuous functions $h:[0,1]\to \R^d$
vanishing at the endpoints and having $\dot h\in L^p(0,1)^d=L^p(]0,1[,\R^d)$
(we use the notation $L^p$ for this
vector-valued $L^p$ space). The
space  $\vH^p$ is equipped with the  norm
\begin{equation*}
  \|h\|_{\vH^p}=\|\dot h\|_p=\parbb{\int_0^1|\dot h(t)|^p dt}^{1/p}.
\end{equation*} Due to \cite[Proposition 6.8]{CS}, with reference to
the model geodesic $\tilde\gamma_{y}\parb{t}=ty+\kappa_y(t)$, we have
$\kappa_y\in\vH^p$ for any prescribed  $p\in[2,\infty)$, and for all
sufficiently small $\epsilon>0$
\begin{subequations}
\begin{equation}
      \label{eq:72bB}
      \forall |\beta|\leq 2: \|\partial_y^\beta\kappa\|_{\vH^p}\leq C_p\inp{y}^{1-|\beta|}.
    \end{equation}

We claim that any such fixed $p$ the following generalization holds.
For all $k\in\{0,1,2\}$:
\begin{equation}
      \label{eq:72bBC}
      \forall |\beta|\leq 2:
      \|t^{k-1}\partial_y^\beta\tilde \gamma_y^{(k)}(t)\|_p\leq C_p\inp{y}^{1-|\beta|}.
    \end{equation}  
\end{subequations}  Here $\tilde \gamma_y^{(k)}$ refers to the $k$'th
    time-derivative of  $\gamma=\tilde \gamma_y$. Due to \eqref{eq:72bB} and the Hardy inequality
    \cite[Lemma 6.1]{CS} only the case $k=2$ needs to be proved. But
    since $\gamma$ is a geodesic for the metric $\tilde
    g_\epsilon$  the second derivative $\gamma^{(2)}$ is a sum of
    expressions $\phi_{jk}(\gamma_y)(\dot \gamma_y)^j(\dot
    \gamma_y)^k$  where 
\begin{equation}\label{eq:de2}
     \forall |\beta|\leq 2:|\partial^\beta_z\phi_{jk}| \leq C \inp{z}^{-1-|\beta|}.
   \end{equation} We use the product and chain rules to calculate
   derivatives $\partial_y^\beta$, $|\beta|\leq 2$,  of any such
   expression. Then we can  obtain
   the desired bound  for any term in the resulting expansion by
   combining \eqref{eq:72bBC} for $k=0,1$ (and some bigger values of $p$),
   \eqref{eq:de2}, the a priori bounds
   \begin{equation}\label{eq:bas_ineq}
     ct|y|\leq |\gamma_y(t)|\leq Ct|y|,
   \end{equation} cf.  \cite[Lemma 2.1]{CS}, and the generalized H\"older estimate. We omit the details. The
   reader may consult \cite[Section 6]{CS} for similar arguments.

\noindent {\bf Step III} We can treat $T_1(x)$ by combining
Proposition \ref{prop:eikonal-equation}\ref{item:4},
\eqref{eq:phi_bnds},  \eqref{eq:72bBC} and the generalized H\"older
estimate. The smaller $\varepsilon>0$ is given the bigger $p\geq 2$ in
\eqref{eq:72bBC} is needed. The estimations are straightforward. Let us for completeness do it in details for $|\alpha|=1$:
\begin{align}\label{eq:first_der}
\partial^\alpha
T_1&=(\partial^\alpha S_\epsilon)K_\epsilon^{-1}\triangle
    S_\epsilon+\int_1^{S_\epsilon}\nabla\phi\cdot \parb{(\partial_y\gamma)_{\Psi}\parb{t/S_\epsilon}\cdot
\partial^\alpha \Psi-\dot\gamma_{\Psi}\parb{t/S_\epsilon}\tfrac t{S_\epsilon^2}(\partial^\alpha S_\epsilon)}\,\d t;  
\end{align} The first term is $O\parb{ \inp{x}^{-1}}$. For the second
term we estimate for $\delta=\min(\varepsilon,1)$
\begin{align*}
  |(\nabla\phi)(\gamma_\Psi)|\leq C\big |\tfrac t{S_\epsilon} \Psi\big |^{\delta-2}
\end{align*} and substitute $t\to t{S_\epsilon}$ leading to
  the upper bound of the integral
  \begin{align*}
    Cf\big |\Psi\big |^{\delta-2}\int_0^1
    t^{\,\delta-1}\parb{S_\epsilon  t^{-1}|\partial_y\gamma)_{\Psi}(t)|+|\dot\gamma_{\Psi}(t)|}\,\d t.
  \end{align*} We choose $p\geq 2$ so big that $(\delta-1)/(1-1/p)>-1$
  yielding in turn, using   \eqref{eq:72bBC},  the upper bounds
  \begin{align*}
   C_1(\delta)f S_\epsilon^{\delta-1} \leq C_2(\delta)S_\epsilon^\delta\inp{x}^{-1}=O\parb{ \inp{x}^{\varepsilon-1}}.
  \end{align*}

The case  $|\alpha|=2$ is treated similarly differentiating
\eqref{eq:first_der} except that now there is one
term involving $\dot\gamma_{\Psi}\parb{1}$. For this term we 
 use the formula
\begin{subequations}
\begin{align}\label{eq:la}
  \dot\gamma_{y}\parb{t}=2\int^1_{1/2}\parbb{
    \gamma_y^{(1)}(s)+\int^t_{s}\gamma_y^{(2)}\parb{ t'}\d  t'}\d s
\end{align} with $t=1$ and invoke again \eqref{eq:72bBC}.

\noindent {\bf Step IV} We need to treat $T_2(x)$. In addition
to \eqref{eq:la} we shall use 
\begin{align}\label{eq:lb}
  \gamma_{y}\parb{t}=\int^t_{0}
    \gamma_y^{(1)}(s)\d s.
\end{align} 
 \end{subequations}
The case $|\alpha|=0$ is trivial. We treat the case $|\alpha|=1$  in
details leaving the remaining case $|\alpha|=2$ to the reader (it is
very similar apart from an application of  \eqref{eq:de2} for one term
arising after yet another differentiation):
\begin{align}\label{eq:first_derbb}
\partial^\alpha
T_2&=\nabla\parb{\tau\circ \parb{f(0,\lambda) \Psi^{-1}}}\cdot \partial^\alpha \gamma_{\Psi}\parb{1/S_\epsilon}.  
\end{align} Here the first factor is evaluated in
$\gamma_{\Psi}\parb{1/S_\epsilon}\in S^{d-1}$ and whence,
cf. \eqref{eq:unPsi-1}, it is bounded (uniformly in $\lambda$). For
the second factor of \eqref{eq:first_derbb} we compute
\begin{align}\label{eq:2part}
  \partial^\alpha \gamma_{\Psi}\parb{1/S_\epsilon}=(\partial_y\gamma)_{\Psi}\parb{1/S_\epsilon}\cdot
\partial^\alpha \Psi-\dot\gamma_{\Psi}\parb{1/S_\epsilon}S_\epsilon^{-2}\partial^\alpha S_\epsilon.
\end{align} 

We look at the first term. Using the  H\"older estimate, \eqref{eq:72bBC} and \eqref{eq:lb}
 we estimate
 \begin{align*}
   |(\partial_y\gamma)_{\Psi}(t)|\leq C_p t^{1/p'}; 1/p'+1/p=1,
 \end{align*} which is used with  $t=1/S_\epsilon$. Moreover due to
 \eqref{eq:unPs} we have $|\partial^\alpha \Psi|\leq C
 \inp{x}^{-1}\inp{\Psi}$, so altogether
 \begin{align*}
   |(\partial_y\gamma)_{\Psi}\parb{1/S_\epsilon}\cdot
\partial^\alpha \Psi|\leq C_pS_\epsilon^{1-1/p'}\inp{x}^{-1}; |x|\geq 1.
 \end{align*} If $p\geq 2$ is chosen big enough then 
 $1-1/p'=1/p\leq \varepsilon$, so  the first term of \eqref{eq:2part}
 conforms with \eqref{eq:cla_bnds} with $j=2$ and $|\alpha|=1$.

We look at the second  term. Using the  H\"older estimate, \eqref{eq:72bBC} and \eqref{eq:la}
 we estimate
\begin{align*}
   |\dot\gamma_{\Psi}\parb{t}| \leq C_p t^{-1/p}\inp{\Psi},
 \end{align*}  which again is used with  $t=1/S_\epsilon$, yielding 
\begin{align*}
   |\dot\gamma_{\Psi}\parb{1/S_\epsilon}| \leq C_p S_\epsilon^{1+1/p}; |x|\geq 1.
 \end{align*} Moreover $|S_\epsilon^{-2}\partial^\alpha S_\epsilon|\leq C
 S_\epsilon^{-1}\inp{x}^{-1}$, so altogether
 \begin{align*}
   |\dot\gamma_{\Psi}\parb{1/S_\epsilon}S_\epsilon^{-2}\partial^\alpha
   S_\epsilon|\leq C_p S_\epsilon^{1/p}\inp{x}^{-1},
 \end{align*} which again conforms with \eqref{eq:cla_bnds} with $j=2$ and
 $|\alpha|=1$ provided $p\geq 2$ is chosen as above. 
\end{proof} 
\begin{remark*} The similar result \cite[Proposition 2.5]{ACH} also
     contains a loss of decay (in Lemma \ref{lemma:diagonalization}
     expressed by the power $\inp{x}^{\varepsilon}$). Such loss can in
     general not be avoided. This can seen  using the example in
     Subsection \ref{exa:Example}.
     \end{remark*}

\subsubsection{Generalized eigenfunctions} 
We learn from \eqref{eq:co_area} and  \eqref{eq:tilde_ub}
 that
\begin{equation}
     \label{eq:tilde uc}
     (H-\lambda)\tilde u\in f^{1/2}L^2_{\delta};\;\delta<\tfrac32-\tfrac \mu2-\varepsilon.
   \end{equation} In particular we can choose $\delta>\tfrac 12$ in
   \eqref{eq:tilde uc} provided $\varepsilon>0$ is small enough. With
   such $\delta$ we can define the generalized eigenfunctions
   \begin{equation}
     \label{eq:gen_eig}
     u^\pm =u^\pm(\cdot,\lambda)=\tilde u-R(\lambda \pm\i 0)(H-\lambda)\tilde u.
   \end{equation}
Since intuitively  $u^+$ is a purely outgoing  exact   eigenfunction it
should be zero. This is the content of the following result.
\begin{lemma}
   \label{lemma:diagonalizationb} There exist  $l\geq 4$  and 
  $\epsilon_0>0$  such that for all $\epsilon$-small
  perturbations $W_\epsilon$ with $\epsilon\in(0,\epsilon_0]$ the generalized eigenfunction $u^+$ of \eqref{eq:gen_eig}
   vanishes for any  $\tau\in C^\infty(S^{d-1})$ and any
   $\lambda\geq 0$ .
\end{lemma}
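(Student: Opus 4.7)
The plan is to apply the uniqueness result for outgoing solutions --- Proposition~\ref{thm:somm-radi-condss} at $\lambda=0$, and its positive-energy analogue from Remark~\ref{remark:zero-energy-somm} for $\lambda>0$ --- to the generalized eigenfunction $u^+$. Since by construction $R(\lambda+\i 0)$ produces a distributional solution to the inhomogeneous equation, we have directly
\begin{equation*}
(H-\lambda)u^+=(H-\lambda)\tilde u-(H-\lambda)\tilde u=0,
\end{equation*}
so the uniqueness statement will force $u^+=R(\lambda+\i 0)\cdot 0=0$ provided we can verify that $u^+\in L^2_m$ for some $m\in\R$ and that $u^+$ satisfies the outgoing Sommerfeld radiation condition \eqref{eq:31hf} for some $\kappa\in(0,1]$.

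The resolvent summand $R(\lambda+\i 0)(H-\lambda)\tilde u$ is well-behaved thanks to Proposition~\ref{thm:lap} and \eqref{eq:limitbound2j}. By Lemma~\ref{lemma:diagonalization} with $\varepsilon>0$ chosen small enough that $\delta>1/2$ in \eqref{eq:tilde uc}, the source $(H-\lambda)\tilde u$ lies in $f^{1/2}L^2_{\delta}\subseteq B^\mu$, so its outgoing resolvent is in $(B^\mu)^*\subseteq L^2_{-s_0-\eta}$ for every $\eta>0$ and automatically satisfies the outgoing radiation condition \eqref{eq:31h} (resp.\ its positive-energy analogue). For the WKB-type part $\tilde u$ itself, the factor $K_\epsilon^{-1/2}m_\epsilon^{-1/2}\tau(\sigma)$ grows only polynomially: by Proposition~\ref{prop:eikonal-equation}\ref{item:6} the integrand $K_\epsilon^{-1}\triangle S_\epsilon$ appearing in \eqref{eq:surf_meas} is $O(r^{\mu/2-1})$ along orbits, while $s=S_\epsilon\sim r^{1-\mu/2}$, so the integral stays bounded and $\tilde u\in L^2_m$ for some $m\in\R$.

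The crux is to establish the outgoing radiation condition for $\tilde u$ itself. Since $\tilde u=\chi\,\e^{\i S_\epsilon}(K_\epsilon^{-1/2}m_\epsilon^{-1/2})\tau(\sigma)$ is a WKB profile with phase $S_\epsilon$, its microlocal support is concentrated on the Lagrangian $\{\xi=\nabla S_\epsilon(x)\}$. On this set, Proposition~\ref{prop:eikonal-equation}\ref{item:6} gives
\begin{equation*}
b_0(x,\nabla S_\epsilon)=\frac{\nabla S_\epsilon\cdot x}{f_0(x)\langle x\rangle}=\frac{f(r,\lambda)\,r}{f_0(x)\langle x\rangle}+O(\epsilon^{3/4}),
\end{equation*}
and for $r\geq 1$, $\lambda\in[0,\lambda_0]$, and $\epsilon$ small, the ratio $f/f_0$ stays bounded below by a positive constant $\kappa_0$ thanks to Condition~\ref{cond:rad}\ref{it:assumption1r}. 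Choosing $\kappa\in(0,\kappa_0)$ and any $\tilde\chi_-\in C^\infty_\c((-\infty,\kappa))$, the symbol $\chi_-(a_0)\tilde\chi_-(b_0)$ vanishes on a neighborhood of the wavefront of $\tilde u$, and a standard non-stationary-phase computation on the oscillatory integral representation of $\Opw(\chi_-(a_0)\tilde\chi_-(b_0))\tilde u$ --- integrating by parts in the phase $(y-x)\cdot\xi+S_\epsilon(y)$ --- yields membership in $(B^\mu)^*_0$ (resp.\ the positive-energy analogue). The main obstacle is making this quantization step rigorous with the limited regularity of the phase: because Proposition~\ref{prop:eikonal-equation}\ref{item:4} provides only $l$ bounded derivatives of $S_\epsilon/S_0$ uniformly in $\epsilon$, only finitely many integrations by parts are available, forcing the choice $l\geq 4$ (and possibly larger depending on $d$ and $\mu$) together with $\epsilon_0$ small. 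With both summands now satisfying the hypotheses of Proposition~\ref{thm:somm-radi-condss} / Remark~\ref{remark:zero-energy-somm}, we conclude $u^+=0$.
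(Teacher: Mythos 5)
Your overall plan matches the paper's: decompose $u^+$ via \eqref{eq:gen_eig}, handle the resolvent term $R(\lambda+\i0)(H-\lambda)\tilde u$ through Proposition \ref{thm:lap} and the radiation condition \eqref{eq:31h}, establish the outgoing radiation condition for the WKB term $\tilde u$ at some small $\kappa$, and invoke Proposition \ref{thm:somm-radi-condss} (with Remark \ref{remark:zero-energy-somm} for $\lambda>0$). The only substantive divergence is in the technical step for $\tilde u$. The paper first shows $f^{-1}(p-\nabla S_\epsilon)\tilde u\in (B^\mu)^*_0$ directly from \eqref{eq:bnd a}, then divides by the symbol $f^{-1}(\xi-\nabla S_\epsilon)$, which is elliptic on $\supp(\chi_-(a_0)\tilde\chi_-(b_0))$ for $\kappa,\epsilon$ small; the pseudodifferential remainders are managed by working with $S_0$ rather than $S_\epsilon$ in the calculus and by the commutator estimate \eqref{eq:calc3}. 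You propose instead a direct non-stationary-phase computation on the oscillatory integral kernel of $\Opw(\chi_-(a_0)\tilde\chi_-(b_0))\tilde u$. Conceptually this is the same observation — the wavefront of the WKB profile avoids $\supp\tilde\chi_-(b_0)$ when $\kappa<\kappa_0$ — but the direct route requires iterated integration by parts hitting both the phase $S_\epsilon$ and the amplitude $K_\epsilon^{-1/2}m_\epsilon^{-1/2}\tau$, each of which admits only finitely many uniform derivative bounds (Proposition \ref{prop:eikonal-equation}\ref{item:4} and Lemma \ref{lemma:diagonalization}). You flag this as "the main obstacle" but do not resolve it; the paper's factored argument ($\gamma\tilde u$ small, then elliptic division) is exactly what keeps the derivative count within what $l$ provides, and the passage to $S_0$ inside the calculus is essential. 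So the plan is sound and the conclusion is reached by the same route; what is missing is the mechanism that makes the quantization step work with the available regularity, which is where the paper's proof does its real work.
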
\begin{proof}
  By Proposition  \ref{thm:lap} the second term of \eqref{eq:gen_eig}
  is in $f^{-1/2}B(|x|)^*$. The first term is also in this
  space due to an explicit calculation using  
  \eqref{eq:co_area} and the Besov space bound \eqref{eq:besov3} (stated
  below), see \eqref{eq:fund_bound} for a more general statement. So  we conclude  that $f^{1/2}u^\pm \in
  B(|x|)^*$.

Let us
  argue for $\lambda=0$ only.  The case   $\lambda>0$  can be treated
  similarly using  Remark \ref{remark:zero-energy-somm}. To  conclude
  that indeed $u^+=0$ for $\lambda=0$  it
  suffices due to Proposition 
  \ref{thm:somm-radi-condss} to show, with reference to the notation
  \eqref{eq:ab_0}, that for some  small
  positive $\kappa$
  \begin{equation}
    \label{eq:out_expl}
    \Opw(\chi_-(a_0)\tilde \chi_-(b_0))u^+\in (B^
\mu)^*_0\mforall \chi_-\in C^\infty_\c(\R)\mand \tilde \chi_-\in C^\infty_\c((-\infty, \kappa)).
  \end{equation} The contribution from the second term of
  \eqref{eq:gen_eig}, $R(\lambda +\i 0)(H-\lambda)\tilde u$,  is
  treated  by \eqref{eq:31h} (here we may have  $\kappa=1$).  

As for   the
  contribution from the first  term, $\tilde u$,  a computation using  \eqref{eq:bnd
    a} shows  that $f^{-1}(p-\nabla S_\epsilon)\tilde u\in (B^
\mu)^*_0$. On the other hand  due to Proposition
\ref{prop:eikonal-equation}\ref{item:6}  for   small $\kappa,  \epsilon>0$ the symbol
$f^{-1}(\xi-\nabla S_\epsilon)$ is elliptic on the support of any
symbol $\chi_-(a_0)\tilde \chi_-(b_0)$ as in  \eqref{eq:out_expl} which intuitively yields the
desired bound.
 However at  this point some  care must be taken in that
$\partial_jS_\epsilon$ is singular at zero and the good  bounds of Proposition
\ref{prop:eikonal-equation}\ref{item:4} are only valid for $|\alpha|\leq
l$ (which consequently must be chosen sufficiently large). A similar deficiency
 will arise in Section \ref{sec:Quantum bounds}, see \eqref{eq:calc3}
 and the discussion there. Let us give an elaboration: First it is
 more convenient to use $S_0$ (modified by a
 cutoff near infinity)  rather than $S_\epsilon$. Then we have good bounds of all
 derivatives well suited for the calculus of  pseudodifferential
 operators (see Section \ref{sec:Quantum bounds} for some
 details). Whence by
 this  calculus we
 can use the ellipticity property (with the above replacement and  for small  $\kappa>0$) to write, abbreviating  $T=\Opw(\chi_-(a_0)\tilde \chi_-(b_0))$,
 \begin{align*}
   T=T\sum^d_{j=1}T_j \parb{f(r,0)^{-1}p_j-\inp{x}^{-1}x_j}+\widetilde T  \inp{x}^{\mu/2-1},
 \end{align*} where $T_j$ and $\widetilde T $ are bounded
 pseudodifferential operators. We apply this identity
 to $\tilde u$. The last term contributes by a term in $(B^
\mu)^*_0$. As for the first term we use \eqref{eq:bnd
    a} (with $\varepsilon<1-\mu/2$) and get a similar contribution and
  in addition the term
\begin{align*}
   T\sum^d_{j=1}T_j \phi_j\tilde
   u;\;\phi_j=f(r,0)^{-1}\chi(r>1)\partial_jS_\epsilon-\inp{x}^{-1}x_j.
 \end{align*} Next  using   a statement like \eqref{eq:calc3}  
  (see the discussion there) we can write
\begin{align*}
   T\sum^d_{j=1}T_j \phi_j=\sum^d_{j=1}\phi_jT_j T+\widetilde T  \inp{x}^{\mu/2-1}
 \end{align*} for some  $\widetilde T\in \vB\parb{(B^
\mu)^*_0}$. Note  at this point
 that 
 we need   Proposition
\ref{prop:eikonal-equation}\ref{item:4} for an appropriate
$l=l(\mu,d)$. By  Proposition
\ref{prop:eikonal-equation}\ref{item:6}  we then conclude that 
\begin{align*}
   T\tilde u -O(\epsilon^{3/4})T\tilde u \in (B^
\mu)^*_0.
 \end{align*} Consequently (for small  $\epsilon>0$) also $T\tilde u \in (B^
\mu)^*_0$.
 \end{proof} 

\section{Quantum bounds}\label{sec:Quantum bounds}
In this section we collect various microlocal resolvent
  bounds  that will be useful in Section~\ref{sec:Distorted Fourier
    transform} for proving the existence of the limit
  \eqref{eq:diag}  for  $v\in
  L^2_3$ as well as for proving some continuity properties. Our main result Proposition~\ref{Prop:radi-cond-bounds} has
  some independent interest, in particular it is new even for spherically 
  symmetric potentials. 
\subsection{Microlocalization for $\epsilon$-small
  perturbation} \label{subsec:Improved microlocalization for all 
  epsilon-small perturbation}
Let $\tilde r$  denote a smooth convex function of $r\geq 0$ equal to
$1/2$ for $r\leq 1/4$ and   equal to
$r$ for $r\geq 1$.
We introduce for $\lambda\geq 0$ the symbols
\begin{align}\label{eq:ab_lambda}
a_\lambda=a_\lambda(x,\xi)= \frac{\xi^2}{f_\lambda(r)^2},\;\;
b_\lambda=b_\lambda(x,\xi)=  \frac{\xi}{f_\lambda(r)} \cdot
\frac{x}{\tilde r }, 
\end{align} given in terms of  the function  $\tilde r$ of $r=|x|$  and the function $f=f_\lambda=f_\lambda(x)=f_\lambda(r)=f(r,\lambda)$ of Section
\ref{sec:Eikonal equation}. Note that $b_\lambda^2\leq a_\lambda$. We
shall 
state microlocal
properties in terms of these observables. For zero energy  the  
resolvent bounds  of this subsection are
stronger than similar estimates obtainable using the observables
\eqref{eq:ab_0}, see \cite[Proposition 3.5 ii)]{Sk}.
\begin{comment}
 than obtained  by the 
analogous ones in Subsection \ref{sec:Preliminary
  quantum bounds} (at zero energy only), see \eqref{eq:ab_0} and
Theorem \ref{thm:somm-radi-condss2} \ref{item:2}. 
\end{comment}
 They are in the
spirit of \cite[Proposition 4.1]{DS3} and \cite[Lemmas 3.2 and
3.3]{Sk}. We shall use 
Weyl quantization of symbols in a uniform symbol class
$S_{\unif} (m_{|z|},g_{|z|})$,
\begin{equation*}
 g=g_{\lambda}=\langle
x \rangle^{-2}\d x^2+f_\lambda(x)^{-2}\d\xi^2. 
\end{equation*} The word \emph {uniform} refers to the
  requirement that bounds of derivatives are uniform in $z$ in the
  closure of 
  $\Gamma_{\theta,\lambda_0}\subset \C$, say $z\in \Gamma^{\clos}_{\theta,\lambda_0}$.  Precisely a symbol $c=c_{z}\in  S_{\unif} (m_{|z|}, g_{|z|})$
 with $z$  in this set, if and only if 
 $c$ obeys the bounds 
 \begin{equation}
   \label{eq:sym_class}
  |\partial^\gamma_x \partial^\beta_\xi c_z(x,\xi)|\leq
  C_{\gamma,\beta}m_{|z|}(x,\xi)\langle x \rangle^{-|\gamma|}f_{|z|}^{-|\beta|}(x).
 \end{equation}
For example $a_{|z|}$ is
  defined for $z\in \Gamma^{\clos}_{\theta,\lambda_0}$ (obviously) and 
  $a_{|z|}\in S_{\unif} (a_{|z|}+1,g_{|z|})$, and similarly for the symbol
  $h_\epsilon$ defined   in \eqref{eq:clasHam} $h_\epsilon,
  h_\epsilon-z\in S_{\unif} (f^2_{|z|}(a_{|z|}+1),g_{|z|})$.  For the
  corresponding calculus the quantity $\inp{x}^{\mu/2-1}$
  plays the role of  a ``uniform Planck constant''. We refer to
  \cite{Sk} for a more elaborate  discussion. The corresponding class of Weyl
  quantized operators is denoted by $\Psi_{\unif} (m_{|z|},g_{|z|})$.

 Consider  real-valued $\chi_-\in C^\infty_\c(\R)$  such that
$\chi_-(t)=1$  in a neighbourhood of $[0,1]$
%, $0\leq \chi_-\leq 1$,
%$\sqrt{\chi_-}\in C^{\infty}(\R)$ 
and
such that $\chi'_-(t)\leq 0$ for $t>0$. Let correspondingly
$\chi_+=1-\chi_-$.  Consider  $\tilde \chi_-\in C^\infty(\R)$ with  $\tilde \chi'_-\in C^\infty_\c((-1,
1))$, $\tilde \chi_-(-1)=1$ and  $\tilde \chi_-(1)=0$. Let $\tilde\chi_+=1-\tilde\chi_-$. Clearly the bound \eqref{eq:limitboundb} below (involving the
function $\chi_+$) is an energy bound. The bound
\eqref{eq:limitboundbb} (involving the
functions $\chi_-$ and $\tilde\chi_-$)  is a microlocal bound whose classical
analogue is partly explained after Proposition \ref{prop:resol_bounds}.

\begin{proposition}
  \label{prop:resol_bounds} Let functions
   $\chi_-$, $\chi_+$  and $\tilde\chi_-$ be given as
  above. There exists
  $\epsilon_0>0$ such the following three properties hold: For all
  $\theta \in (0, \pi)$, $\lambda_0>0$, $\delta>1/2$,
  $t\geq0$ and $\epsilon$-small perturbations $W_\epsilon$ with $\epsilon\in(0,\epsilon_0]$ there
  exists $C>0$ such that
  \begin{enumerate}[i)]
  \item \label{item:7} with $T_+(z):=
\langle x
\rangle^{t-\delta} f_{|z|}^{1/2 }\Opw(a_{|z|}\chi_+(a_{|z|}))R(z)
f_{|z|}^{1/2 }\langle x \rangle^{-t-\delta}$  for  $z \in \Gamma_{\theta,\lambda_0}$
\begin{equation}
  \label{eq:limitboundb}
 \|T_+(z)\| \leq C,
\end{equation}
\item \label{item:8} with  $T_-(z):=\langle x
\rangle^{t-\delta} f_{|z|}^{1/2 }\Opw(\chi_-(a_{|z|})\tilde \chi_-(b_{|z|}))R(z)
f_{|z|}^{1/2 }\langle x \rangle^{-t-\delta}$  for  $z \in \Gamma_{\theta,\lambda_0}$
\begin{equation}
  \label{eq:limitboundbb}
\|
T_-(z)\| \leq C,
\end{equation}
\item \label{item:9}uniformly in $\lambda\in [0,\lambda_0]$ there exist
  \begin{equation}
    \label{eq:limope}
    T_\pm(\lambda+\i0):=\lim_{\Gamma_{\theta,\lambda_0}\ni z \rightarrow \lambda}
T_\pm(z) \text{ in }
{\mathcal B}(L^2).
  \end{equation}
  \end{enumerate}
\end{proposition}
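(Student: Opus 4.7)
\textbf{Proof plan for Proposition \ref{prop:resol_bounds}.}
The three assertions should be proved in order, with the energy bound \eqref{eq:limitboundb} and the microlocal bound \eqref{eq:limitboundbb} treated by quite different techniques, and the existence of boundary values \eqref{eq:limope} extracted as a corollary. All constructions take place in the uniform calculus $\Psi_{\unif}(m_{|z|},g_{|z|})$, with composition errors of relative order $\inp{x}^{\mu/2-1}$; smallness of $\epsilon$ is used only to absorb the perturbative $W_\epsilon$-contributions against the unperturbed radial model.

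\emph{Part (i), the energy bound.} On the support of $\chi_+(a_{|z|})$ the symbol $a_{|z|}$ is bounded below, and by the basic estimate $f_{|z|}^{-2}|V_\epsilon-z|\leq C_0'$ we have $(h_\epsilon-z)/f_{|z|}^2=a_{|z|}(1+O(1))$ uniformly; after choosing the support of $\chi_+$ sufficiently far from $\{a_{|z|}=1\}$ (which we may, by absorbing a small neighbourhood of $1$ into $\chi_-$ if necessary), the symbol $h_\epsilon-z$ is elliptic in $S_{\unif}(f_{|z|}^2(1+a_{|z|}),g_{|z|})$. The plan is to construct a right parametrix $\Opw(q_z)$ with $q_z\in S_{\unif}((1+a_{|z|})^{-1},g_{|z|})$ satisfying
\begin{equation*}
\Opw\parb{a_{|z|}\chi_+(a_{|z|})}=f_{|z|}^2\,\Opw(q_z)(H-z)+\Opw(r_z),
\end{equation*}
with $r_z\in S_{\unif}(\inp{x}^{-1},g_{|z|})$ (or better). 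Post-multiplying by $R(z)$ and by the weights, the parametrix term is manifestly bounded after using $|f_{|z|}^{1/2}\langle x\rangle^{t-\delta}\cdot f_{|z|}^2\cdot f_{|z|}^{-1/2}\langle x\rangle^{-t-\delta}|\lesssim f_{|z|}\langle x\rangle^{-2\delta}$, while the remainder term yields $\langle x\rangle^{t-\delta}f_{|z|}^{1/2}\Opw(r_z)R(z)f_{|z|}^{1/2}\langle x\rangle^{-t-\delta}$, which is controlled by Proposition \ref{thm:lap} (the LAP), after commuting weights through $\Opw(r_z)$ using the uniform calculus.

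\emph{Part (ii), the incoming microlocal bound.} This is a Mourre/positive-commutator estimate tailored to $b_{|z|}$. The Poisson bracket
\begin{equation*}
\{h_\epsilon,-b_{|z|}\}=\tfrac{2\xi_\perp^2}{f_{|z|}\tilde r}+\tfrac{(\mu+O(\epsilon))\,(\xi\cdot\hat x)^2}{f_{|z|}\tilde r}+\tfrac{O(\epsilon)+O(\inp{x}^{-1})}{f_{|z|}\tilde r}
\end{equation*}
is, after choosing $\epsilon_0$ small, bounded below by $c\,f_{|z|}(a_{|z|}+1)/\tilde r$ on the microlocal support of $\chi_-(a_{|z|})\tilde\chi_-(b_{|z|})$ (where $b_{|z|}\leq 1$ is bounded away from its upper critical value). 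The plan is to choose a conjugate symbol
\begin{equation*}
\Theta_z=\inp{x}^{2(t-\delta)+1}\,\Xi(b_{|z|})\,\chi_-(a_{|z|})^2,\qquad \Xi'=\tilde\chi_-^{\,2},
\end{equation*}
quantize it via Weyl to an (essentially self-adjoint, after a standard cutoff $F_R(|x|<R)$) operator $A_{R,z}$, and compute $\i[H,A_{R,z}]$ modulo errors from the $x$-weight (giving a term with an extra $\inp{x}^{-1}$ factor, controlled by part (i) and by the base case) and from $V_\epsilon-V_\rad$ (absorbed by smallness of $\epsilon$); the leading positive symbol is
\begin{equation*}
\inp{x}^{2(t-\delta)+1}\tilde\chi_-(b_{|z|})^2\chi_-(a_{|z|})^2\,\{h_\epsilon,-b_{|z|}\}\gtrsim \inp{x}^{2(t-\delta)}f_{|z|}\,\parb{\chi_-\tilde\chi_-}^2(a_{|z|},b_{|z|}).
\end{equation*}
Pairing $\i[H,A_{R,z}]u=2\Im\,A_{R,z}(H-z)u-2(\Im z)\inp{u,A_{R,z}u}$ against $u=R(z)f_{|z|}^{1/2}\inp{x}^{-t-\delta}v$, the right-hand side is controlled by Cauchy--Schwarz using part (i) and the LAP; letting $R\to\infty$ yields \eqref{eq:limitboundbb} by induction on increments of $t$ of size strictly less than $2\delta-1$, starting from the base $t=0$ (which is a direct LAP consequence).

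\emph{Part (iii), existence of boundary values.} Once \eqref{eq:limitboundb}--\eqref{eq:limitboundbb} are established uniformly on $\Gamma_{\theta,\lambda_0}$, the existence of $T_\pm(\lambda+\i 0)$ will follow from uniform H\"older continuity of $T_\pm(\cdot)$ in $z$. One combines the uniform H\"older continuity of $\langle x\rangle^{-s}R(z)\langle x\rangle^{-s}$ (Proposition \ref{thm:lap}) with the H\"older continuity in $z$ of the symbols $a_{|z|}$, $b_{|z|}$, $f_{|z|}^{1/2}$ on compact subsets of $x$-space, and the uniform boundedness of the families $\{\Opw(\chi_-(a_{|z|})\tilde\chi_-(b_{|z|}))\}_{z}$ and $\{\Opw(a_{|z|}\chi_+(a_{|z|}))\}_{z}$; a standard telescoping and density argument passes from norm-H\"older continuity on weighted subspaces to the sought limits in $\vB(L^2)$.

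\emph{Main obstacle.} The principal difficulty is Part (ii): verifying positivity of the commutator \emph{uniformly} down to $z=0$ and uniformly in the $\epsilon$-small perturbations, while simultaneously bootstrapping the polynomial weight index $t$. This requires (a) choosing the conjugate symbol so that the leading Poisson bracket dominates both the $W_\epsilon$-correction and the symbol-error produced by the weight $\inp{x}^{2(t-\delta)+1}$, (b) a careful regularization $F_R(|x|<R)$ to justify integration by parts against the unbounded $A_z$, and (c) iterating finitely many times on $t$ while keeping track of the degradation at each step; smallness of $\epsilon$ must be chosen at the outset once and for all to control all these errors simultaneously.
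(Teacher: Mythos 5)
Your overall plan follows the same route the paper alludes to: an elliptic/parametrix estimate for (i), a positive-commutator (propagation) estimate driven by the observable $b_{|z|}$ for (ii), and Hölder continuity for (iii). The paper itself does not write this out; it refers to \cite[Lemmas 3.2 and 3.3]{Sk} and replaces the Poisson bracket input there by the computation \eqref{eq:poisson1}--\eqref{eq:poisson2}. So your high-level split of the argument is appropriate.

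However, the core positivity input in your Part (ii) is wrong in sign and in form. The observable that \emph{increases} along the Hamiltonian flow is $b$, not $-b$: from \eqref{eq:poisson1},
\begin{equation*}
\{h_\rad,b\}=\tfrac{2f}{\tilde r}\parb{1-\tfrac{rV_\rad'}{2f^2}}(1-b^2)+\tfrac{2}{f\tilde r}(h_\rad-\lambda),
\end{equation*}
and \eqref{eq:poisson2} gives, on $\{b^2\leq 1-\delta\}$, the lower bound $\{h_\epsilon,b\}\geq\tfrac{f}{\tilde r}(\tilde\epsilon_1\delta-C\epsilon)+\tfrac{2}{f\tilde r}(h_\epsilon-\lambda)$. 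You assert positivity of $\{h_\epsilon,-b_{|z|}\}$, which is the opposite of what holds; this is also internally inconsistent with your own choice $\Xi'=\tilde\chi_-^2\geq 0$ (which makes $\Xi$ increasing, so the leading symbol of $\i[H,\Opw(\Theta_z)]$ carries $\Xi'(b)\{h_\epsilon,b\}\chi_-^2=\tilde\chi_-^2\{h_\epsilon,b\}\chi_-^2$, with $\{h_\epsilon,b\}$, not $\{h_\epsilon,-b\}$). Moreover the expansion you wrote is not correct: for the reference case $V_\rad=-r^{-\mu}$ one finds at large $|x|$ and $\lambda=0$
\begin{equation*}
\{h,b\}=\tfrac{2\xi_\perp^2}{fr}+\tfrac{\mu(\xi\cdot\hat x)^2}{fr}-\tfrac{\mu f^2}{fr},
\end{equation*}
so the third term is of the same size as the first two and cannot be filed under $O(\epsilon)+O(\inp{x}^{-1})$. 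On the classical energy shell it combines with the others to give $\{h,b\}=(2-\mu)\xi_\perp^2/(fr)$, which is nonnegative but can vanish. Consequently your claimed lower bound $\gtrsim f_{|z|}(a_{|z|}+1)/\tilde r$ on the support of $\chi_-(a_{|z|})\tilde\chi_-(b_{|z|})$ is false: the correct lower bound holds only modulo the term $\tfrac{2}{f\tilde r}(h_\epsilon-\lambda)$, which changes sign off the energy shell and must be treated separately in the commutator identity by substituting $(H-z)u=v$. Also note $\tilde\chi_-$ equals $1$ for $b\leq -1$, so the support of $\Xi'$ is $(-\infty,1)$ and one must keep track of the region $b^2>1$ (there $a_{|z|}\geq b_{|z|}^2>1$, forcing an interplay with $\chi_-(a_{|z|})$). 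With these corrections your Mourre-type scheme would conform to the proof the paper sketches.
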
 

There are analogous properties for $\bar z\in
\Gamma_{\theta,\lambda_0}$. By the calculus (or the same proof) we can replace the symbol $a_{|z|}\chi_+(a_{|z|})$ in
\ref{item:7} by $\chi_+(a_{|z|}$. In particular the combination of
\ref{item:7} and \ref{item:8} yields an effective microlocalization
$R(z)v\approx \Opw(\chi_-(a_{|z|})\tilde
\chi_+(b_{|z|}))R(z)v$. Let us for later applications choose the localization more
concretely: Let $\chi(\cdot<1)$ be a decreasing smooth function on
$\R$ with $\chi(t<1)=1$ for $t\leq 1/2$ and $\chi(t<1)=0$ for $t\geq
1$. Introduce  for $\kappa>0$ (small) the functions
$\chi(t<\kappa):=\chi(t/\kappa<1)$ and
$\chi(t>\kappa):=1-\chi(t<\kappa)$. Choose
$\chi_-=\chi_\kappa^-=\chi(\cdot-1<\kappa)$ and
$\tilde\chi_+=\tilde\chi_\kappa^+=\chi(1-\cdot<\kappa)$.  This leads
to the introduction of the symbols
\begin{equation}
  \label{eq:loc-func}
  \chi_\kappa=\chi_{\kappa,|z|}=\chi_\kappa^-(a_{|z|})\tilde\chi_\kappa^+(b_{|z|})\in
  S_{\unif} (1,g_{|z|});\;\kappa>0.
\end{equation}

The  proof of Proposition~\ref{prop:resol_bounds} (not to be given in
details 
here) is   similar to the ones of \cite[Lemmas 3.2 and
3.3]{Sk} using instead of \cite[(3.12)]{Sk} the following
 computation, cf. \cite[(4.30)]{DS3}: Let $h_\rad=\xi^2+V_\rad(r)$. The
Poisson bracket with $b=b_\lambda$ (i.e. the derivative of $b$ along the flow
generated by $h_\rad$) is  given by
\begin{subequations}
\begin{equation}
  \label{eq:poisson1}
  \{h_\rad,b\}=\tfrac {2f}{\tilde r}\parb{1-\tfrac
    {rV_\rad'}{2f^2}}\parb{1-b^2}+ \tfrac {2}{f\tilde r}(h_\rad-\lambda).
\end{equation} Due to Condition \ref{cond:rad}\ref{it:assumption3r}
the factor $1-\tfrac
    {rV_\rad'}{2f^2}\geq \tilde \epsilon_1/2$. For the
    Hamiltonian $h_\epsilon$ of \eqref{eq:clasHam} we have uniformly 
    $x,\xi\in \R^d$ and $\lambda\geq 0$
\begin{equation}
  \label{eq:poisson1b}
  \{h_\epsilon,b\}=\tfrac {2f}{\tilde r} \parb{1-\tfrac
    {rV_\rad'}{2f^2}}\parb{1-b^2}+ \tfrac {2}{f\tilde
    r}(h_\epsilon-\lambda)+O(\epsilon )\tfrac {f}{\tilde r}.
\end{equation}  Whence 
uniformly in a  set of the form $\{b^2\leq 1-\delta\}$, $\delta>0$,  and
    $\lambda\geq 0$
\begin{align}
  \begin{split}
  \{h_\epsilon,b\}&\geq\tilde \epsilon_1\tfrac {f}{\tilde r} \parb{1-b^2}+ \tfrac {2}{f\tilde
    r}(h_\epsilon-\lambda)-\epsilon C\tfrac {f}{\tilde r}\\
  &\geq\tfrac {f}{\tilde r} \parb{\tilde \epsilon_1\delta- C\epsilon}+ \tfrac {2}{f\tilde
    r}(h_\epsilon-\lambda).\label{eq:poisson2}
  \end{split}
\end{align}
 \end{subequations} We learn from \eqref{eq:poisson2} that  provided
 $\epsilon$ is taken small the
 observable $b$ grows along the flow
generated by $h_\epsilon$ on any set $\{b^2\leq
 1-\delta,\;h_\epsilon=\lambda\}$. This is part of the classical analogue of
 \eqref{eq:limitboundbb}. For $\kappa$-depending symbols used as input in Proposition
 \ref{prop:resol_bounds} the bounds \eqref{eq:poisson2} indicate an 
 optimal choice, $\epsilon_0\approx \kappa$. In fact, and more
 precisely, the proof of Proposition
 \ref{prop:resol_bounds}  shows that we can choose
 $\epsilon_0=\kappa/C$ for some $C>0$ in the  regime
 $\kappa>0$ small, whence allowing us to write $R(z)v\approx
 \Opw(\chi_{\kappa,|z|})R(z)v$ for all $(\kappa/C)$-small perturbations.
 This will be one reason for  considering perturbations of a spherically 
 symmetric potential only. Another reason  originates  in the construction of
 $S_\epsilon$, i.e.  Proposition \ref{prop:eikonal-equation}.

\subsection{Preliminary considerations} \label{subsec:Preliminary considerations}
We assume in this subsection that 
$V_2=0$. Whence we consider here the quantization of
\eqref{eq:clasHam}, $H=H_\epsilon$.
% Similarly the notation $R(z)$
%refers henceforth to the resolvent of $H$ with $V_2=0$.  
\begin{comment}
Moreover we can freely insert the
quantization of the symbol \eqref{eq:loc-func} in front of the
resolvent, i.e.
\begin{align}\label{eq:tau_0}
  \begin{split}
  \vGlim_{s\to \infty}\,\tau(s)&=0;\\\tau(s)&=\parb{K^{1/2}_\epsilon
    m^{1/2}_\epsilon \parb{I-\Opw(\chi_{\kappa,\lambda})}R(\lambda +\i
    0)v}\parb{\Phi (s,\cdot)}.
 \end{split}
\end{align} To see this we shall use, due to  Proposition
\ref{prop:resol_bounds},  that for $f_{|z|}^{-1/2 }v\in
L^2_{t+\delta}$ the state
$u=\parb{I-\Opw(\chi_{\kappa,\lambda})}R(\lambda +\i 0))v$ obeys that $f_{|z|}^{1/2 }u\in
L^2_{t-\delta}$ and $p_jf_{|z|}^{-1/2 }u\in
L^2_{t-\delta}$ for $j=1,\dots, d$. It suffices to show that $\tau,
\tau'\in L^2(\R_+,\d s; \vG)$: By \eqref{eq:co_area} the first
statement $\tau
\in L^2(\R_+,\d s; \vG)$ is equivalent to having $f_{|z|}^{1/2 }u\in
L^2$. Whence for this part $t\geq \delta$, and whence $v\in
L^2_{2\delta+\mu/2}$ for some  $\delta>1/2$, suffices. Similarly, seen
by explicitly computing the derivative,  the 
statement $\tau'
\in L^2(\R_+,\d s; \vG)$  holds if $v\in
L^2_{2\delta+\mu/2}$ for some  $\delta>1/2$. We conclude
\eqref{eq:tau_0} for $v\in
  L^2_m $ with  $m>1+\mu/2$. 
\end{comment}
\subsubsection{Calculus considerations} \label{subsubsec:Calculus
  considerations}  By the calculus the family of symbols  \eqref{eq:loc-func}
has the properties  that for all $n\in \N$ and all $c=c_z\in S_{\unif}
(a_{|z|}+1,g_{|z|})$
\begin{subequations}
\begin{align}\label{eq:calc1}
  \parb{I-\Opw(\chi_{2\kappa,|z|})}\Opw(\chi_{\kappa,|z|})&\in
    \Psi_{\unif} (\inp{x}^{-n}\inp{\xi}^{-2},g_{|z|}),\\
[\Opw(c_z),\Opw(\chi_{\kappa,|z|})]\Opw(\chi_{\kappa/2,|z|})&\in \Psi_{\unif}
(\inp{x}^{-n}\inp{\xi}^{-2},g_{|z|})\label{eq:calc2}.
\end{align}
 Note that in  particular \eqref{eq:calc2} applies
  to $c_z=h-z$ and any function $c_z=c_z(x,\xi)=\phi_z(x)\in S_{\unif}
(1,g_{|z|})$. We shall need the following modification of
the latter statement. Consider  $\phi_z\in C^N(\R^d)$, with  $z\in
\Gamma^{\clos}_{\theta,\lambda_0}$, obeying  uniform bounds
  \begin{align*}
      |\partial_x^\alpha\phi_z(x)|\leq C_\alpha f_{|z|}(x)\langle x\rangle^{-|\alpha|}  ;\; |\alpha|\leq N.
    \end{align*} Now for any given $n\in \N$ we can find $N=N(n, \mu,d)$
    such that  for all  $\phi_z\in C^N(\R^d)$ obeying these bounds we have (uniformly in $z\in
\Gamma^{\clos}_{\theta,\lambda_0}$)
    \begin{align}\label{eq:calc3}
      \begin{split}
        [\phi_z(x),\Opw(\chi_{2\kappa,|z|})]\Opw(\chi_{\kappa,|z|})&=\inp{x}^{-n}B\inp{x}^{-n};\;
\|B\|\leq C.
      \end{split}
    \end{align}
 \end{subequations} This statement can be proved by the symbolic
 calculus and an explicit estimation of an associated oscillatory
 integral. Note that the constant $C$ in \eqref{eq:calc3}  can be
 chosen proportional to a  natural norm of $\phi_z$, and whence
 the bound is an example  of a familiar continuity property of the calculus of
 pseudodifferential operators.  We shall apply it to
 $\phi_z=\chi(r>2)\partial_jS_\epsilon(x,|z|)=\chi(r>2)\partial_jS_\epsilon(x,|z|)$, $j=1,\dots,d$.
 Note that we here need $l=N+1$ in Proposition
 \ref{prop:eikonal-equation}\ref{item:4}. If $n$ in
 \eqref{eq:calc3} is taken 
 large possibly (in fact likely
 so) $\epsilon_0>0$ in Proposition
 \ref{prop:eikonal-equation} must then be  small (since in practise   $N=l-1$
 large is  needed for \eqref{eq:calc3} for  given large $n$).

The last preliminary property we will discuss is an application of the
Fefferman-Phong inequality \cite[Theorem 18.6.8]{Ho2} (uniform
version), concretely  bounds for the symbol $b_{|z|}$ and $\chi_{2\kappa,|z|}$:
For all $\kappa>0$ 
there exists $C=C_\kappa>0$ so that for all $z\in
\Gamma^{\clos}_{\theta,\lambda_0}$
\begin{subequations}
\begin{align}
  \label{eq:Fef_Phong1}
  \Opw(\chi_{2\kappa,|z|})\Opw(b_{|z|}-1+2\kappa)\Opw(\chi_{2\kappa,|z|})&\geq
  -C(\inp{x}f_{|z|})^{-2},\\
 \Opw(\chi_{2\kappa,|z|})\Opw(b_{|z|}-1-2\kappa)\Opw(\chi_{2\kappa,|z|})&\leq
  C(\inp{x}f_{|z|})^{-2},\label{eq:Fef_Phong2}\\
\Opw(\chi_{2\kappa,|z|})\Opw\parb{(b_{|z|}-1)^2-(2\kappa)^2}\Opw(\chi_{2\kappa,|z|})&\leq
  C(\inp{x}f_{|z|})^{-2}.\label{eq:Fef_Phong3}
\end{align} 
\end{subequations}

\subsubsection{Radiation operators} \label{subsubsec:Radiation operators}
We shall combine Propositions  \ref{prop:eikonal-equation} and 
\ref{prop:resol_bounds} to obtain radiation condition bounds similar
to some  of \cite{HS1,HS2} for positive energies (see also \cite{Sa1,Sa2}). Our method is different in that it is
purely stationary whereas \cite{HS1,HS2} rely on propagation estimates. Whence we introduce for $\lambda \geq 0$ and any given
$\epsilon$-small perturbation $W_\epsilon$  \emph {radiation
  operators} defined  in terms of the function
$S_\epsilon=S_\epsilon(x,\lambda)$ from Proposition
\ref{prop:eikonal-equation} as
\begin{align*}
  \gamma=p-\nabla
  S_\epsilon,\;\gamma_j=p_j-\partial_jS_\epsilon,\,j=1,\dots, d,\mand \gamma_\|=\Re \parb{ \nabla
    S_\epsilon\cdot \gamma}.
\end{align*} Using \eqref{eq:eikonal} we obtain, cf. \cite{HS1,HS2},
\begin{equation}
  \label{eq:gamma_par}
  2\gamma_\|=(H-\lambda)-\gamma^2.
\end{equation}

Next we compute the Heisenberg derivative, say
denoted by $\D=\i[H,\cdot]$,  of $\gamma$.  The 
involved operators are local and we shall only need the computation for
$r\geq 1$.
\begin{align}\label{eq:GG}
  \begin{split}
 \D\gamma&=-2\nabla^2S_\epsilon \gamma+\i2 \nabla \triangle
 S_\epsilon\\
&=-\tfrac {2f}{ r}\parb{\gamma-\parb{1+\tfrac
    {rV_\rad'}{2f^2}}f^{-1}|\hat x\rangle \nabla
  S_\epsilon\cdot\gamma+O(\epsilon^{1/2})\gamma}+\i2 \nabla
\triangle S_\epsilon\\
&=-\tfrac {2f}{ r}\parb{\gamma+F f^{-1} \parb{2\gamma_\|+\i\triangle
  S_\epsilon}+O(\epsilon^{1/2})\gamma}+\i2 \nabla
\triangle S_\epsilon\\
&=-\tfrac {2f}{ r}\parb{\gamma+F f^{-1} \parb{(H-\lambda)-\gamma^2}+O(\epsilon^{1/2})\gamma}-2\i\parb{\tfrac {F }{ r}\triangle
  S_\epsilon- \nabla
\triangle S_\epsilon};\\
&
\,\,\,\,\,\,\,\,\,\,\,\,\,\,\,\,\,\,\,\,\,\,\,\,\,\,\,\,\,\,\,\,\,\,\,\,\,\,\,\,F:=\tfrac{\tilde
f}{\tilde r} x,\; \tilde f=-\tfrac{1}{2}\parb{1+\tfrac
    {rV_\rad'}{2f^2}}.
    \end{split}
\end{align} Here we used Proposition
\ref{prop:eikonal-equation}\ref{item:6} and \eqref{eq:gamma_par}. The meaning
of $O(\epsilon^{1/2})$ is the same as in the proposition, i.e. it is a
uniform bound. We can
simplify the right hand side using Proposition
\ref{prop:eikonal-equation}\ref{item:4} (assuming $l\geq 3$) and conclude that 
\begin{align}\label{eq:Dgamma}
 \D \gamma=-\tfrac {2f}{
   r}\parbb{\parb{I+O(\epsilon^{1/2})}\gamma+F f^{-1} (H-\lambda)-F f^{-1} \gamma^2-\i
 r^{-1}O(\epsilon^{0})},
\end{align} where as above the estimates are uniform in $W_\epsilon$, 
   $\lambda \geq 0$ and $x$ with $r=|x|\geq 1$.

Next we compute
\begin{align}\label{eq:conGammaSquare}
  \Re\parb{f^{-1}F\cdot \gamma}= \Re\parb{\tilde f \parb{\Opw(b)-1+O(\epsilon^{3/4})}}.
\end{align} Effectively the right hand side will be ``small''; we
will use it to treat the third term in \eqref{eq:Dgamma}. Here it is
also useful to note that
\begin{equation}
  \label{eq:Gamma_comm}
  [\gamma_i,\gamma_j]=0 \for 1\leq i,j\leq d.
\end{equation}

\subsection{Strong radiation condition bounds} \label{subsec:Radiation condition bounds}
We introduce for $k\in \N$
\begin{equation*}
  X=\inp{x}=(1+r^2)^{1/2}\mand X_k=X(1+r^2/k)^{-1/2},
\end{equation*} and ``propagation observables'' 
\begin{subequations}
\begin{align}\label{eq:Q1}
  &P_1=\sum_{i}Q_i^*Q_i;\; &&\;Q_i=X_k^{1-\epsilon'}\chi(r)\gamma_i\Opw(\chi_{\kappa,|z|}),\\
&P_2=\sum_{i,j}Q_{ij}^*Q_{ij};\; &&Q_{ij}=X_k^{2(1-\epsilon')}\chi(r)\gamma_i\gamma_j\Opw(\chi_{\kappa,|z|}),\label{eq:Q2}
\end{align}  
\end{subequations} where  $\gamma=\gamma(\lambda=|z|)$, $\chi(r)=\chi(r>2)$ and
$\epsilon'\in (0,1]$ needs to be specified. Note that the
powers of  $X_k$ are bounded factors and that pointwise $X_k\uparrow X$  for $k\to \infty$. 

We compute the Poisson brackets
\begin{subequations}
\begin{align}
\label{eq:Gamma_comm0}
  \{h,\chi(r)\}=2 \chi'(r)\hat x\cdot \xi &=2f\chi'(r) \,b,\\
  \label{eq:Gamma_comm1}
  \{h,X\}=2X^{-1} x\cdot \xi &=\tfrac {2f}r \phi \,b X;\;&&\,\;\phi=\tfrac{r \tilde
    r}{X^2},\\
\label{eq:Gamma_comm3}
  \{h,X_k\}&=\tfrac {2f}r (\phi-\phi_k) \,b X_k;\;&&\phi_k=\tfrac{r \tilde r}{k+r^2}.
\end{align} 
\end{subequations} Note for \eqref{eq:Gamma_comm3} that
\begin{equation}
  \label{eq:reg-bound}
  0\leq
\phi-\phi_k=\tfrac{(k-1)r \tilde r}{(k+r^2)X^2}\leq 1.
\end{equation}
  
Yet another property we will use (tacitly)  are the uniform bounds 
\begin{align*}
  |\partial_x^{\alpha}f_\lambda^s|&\leq C_{\alpha,s}f_\lambda^s\langle x
\rangle^{-|\alpha|};\;\lambda\geq 0,\; x\in\R^d,\\
|\partial_x^{\alpha}X_k^s|&\leq C_{\alpha,s}X_k^s\langle x
\rangle^{-|\alpha|};\;k\in \N,\; x\in\R^d.
\end{align*}

The main result of this section is
\begin{proposition}
  \label{Prop:radi-cond-bounds} There exist  $l=l(\mu,d)\in \N$ ($l\geq
  4$ is used explicitly) and 
  $\epsilon_0,C_0>0$ with  $\sqrt C_0\epsilon_0\leq 1$,  such that for all $\epsilon$-small
  perturbations $W_\epsilon$ with $\epsilon\in(0,\epsilon_0]$
   and with
  $\epsilon'= C_0\sqrt \epsilon$ the following bounds \eqref{eq:Q_1_bound}
  and \eqref{eq:Q_2_bound} hold uniformly in
  $\lambda$ in intervals
  of the form $I=[0,\lambda_0]$. We consider in these bounds
  components $\gamma_i,\gamma_j$, $1\leq i,j\leq d$,
of 
  $\gamma=\gamma_\epsilon(\lambda)=p-\nabla S_\epsilon(x,\lambda)$. 
  \begin{subequations}
 \begin{align}
    \label{eq:Q_1_bound}
    \begin{split}
      \|X^{1-\epsilon'}&\parb {\tfrac{f_\lambda}{r}}^{1/2}\chi(r)\gamma_iR(\lambda+\i
    0)f_\lambda^{1/2}X^{-3/2}\|\leq C,
    \end{split}
\end{align} 
\begin{align}
     \label{eq:Q_2_bound}
\begin{split}\|X^{2(1-\epsilon')}&\parb {\tfrac{f_\lambda}{r}}^{1/2}\chi(r)\gamma_i\gamma_jR(\lambda+\i
    0)f_\lambda^{1/2}X^{-5/2}\|\leq C.
 \end{split}
  \end{align} 
  \end{subequations}
\end{proposition}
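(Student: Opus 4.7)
The plan is a commutator argument with the propagation observables $P_1,P_2$ of \eqref{eq:Q1}--\eqref{eq:Q2}. Fix $v$ with $\|f_\lambda^{-1/2}X^{3/2}v\|<\infty$ and set $u_\eta := R(\lambda+\i\eta)v$ for $\eta>0$. Since $(H-\lambda)u_\eta = v+\i\eta u_\eta$ and $P_1\geq 0$,
\[
-\langle u_\eta, \D P_1 u_\eta\rangle - 2\eta\langle u_\eta, P_1 u_\eta\rangle = 2\,\Im\langle v, P_1 u_\eta\rangle.
\]
The goal is to establish a form bound
\[
-\D P_1 \geq c \sum_i Q_i^{*}\,\tfrac{f_\lambda}{r}\,Q_i - R_1
\]
for $\epsilon'=C_0\sqrt\epsilon$ and suitably small $\kappa,\epsilon_0$, with a remainder $R_1$ bounded as a form by $C\|f_\lambda^{-1/2}X^{3/2}v\|^2$ (via Cauchy--Schwarz and Proposition~\ref{prop:resol_bounds}) plus a small multiple of the leading term, absorbable into the left-hand side. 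Passing $\eta\downarrow 0$ and $k\to\infty$ then yields \eqref{eq:Q_1_bound}.

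The main computation distributes $\D$ across the four factors of $Q_i=X_k^{1-\epsilon'}\chi(r)\gamma_i\Opw(\chi_{\kappa,|z|})$. The principal contribution comes from $\D\gamma_i$ and is read off \eqref{eq:Dgamma}: the leading piece $-\tfrac{2f}{r}\gamma_i$ produces the dominating term $-\tfrac{4f}{r}P_1$; the $(H-\lambda)$-piece, via $v+\i\eta u_\eta$, produces terms controlled by $\|f_\lambda^{-1/2}X^{3/2}v\|$ and the weighted resolvent bounds of Proposition~\ref{prop:resol_bounds}; and the imaginary $O(r^{-2})$ correction goes into $R_1$. The $\D X_k^{1-\epsilon'}$ and $\D\chi(r)$ pieces, computed from \eqref{eq:Gamma_comm0}--\eqref{eq:Gamma_comm3} with the regularity bound \eqref{eq:reg-bound}, come with factors of $b$ (close to $1$ on the microlocalized support) and an extra $\epsilon'$ in the coefficient of the $X_k$-derivative term; the choice $\epsilon'=C_0\sqrt\epsilon$ is calibrated so these subleading pieces are dominated by the main negative term together with the $O(\epsilon^{1/2})$ error in \eqref{eq:Dgamma}. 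Commutators of $H$ with the microlocalizer $\Opw(\chi_{\kappa,|z|})$, and the reordering of $\partial_jS_\epsilon$ past it, are handled by \eqref{eq:calc1}--\eqref{eq:calc3}; this is where $l=l(\mu,d)$ must be chosen sufficiently large in Proposition~\ref{prop:eikonal-equation}\ref{item:4}.

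The main obstacle is the nonlinear $Ff^{-1}\gamma^2$ contribution in \eqref{eq:Dgamma}. It is tamed by the identity \eqref{eq:conGammaSquare}, $\Re\!\bigl(f^{-1}F\cdot\gamma\bigr)=\tilde f\bigl(\Opw(b)-1+O(\epsilon^{3/4})\bigr)$, combined with the Fefferman--Phong inequalities \eqref{eq:Fef_Phong2}--\eqref{eq:Fef_Phong3}, which show that $\Opw(b)-1$ is of size $\kappa$ in form sense on the range of $\Opw(\chi_{\kappa,|z|})$. Consequently this contribution to $-\D P_1$ is bounded by $(C\kappa + C\sqrt\epsilon)\,\langle u_\eta,(f_\lambda/r)P_1 u_\eta\rangle$ plus manageable pieces, absorbed into the LHS by taking $\kappa$ small and $\epsilon_0\leq \kappa/C$, consistent with the scaling indicated after Proposition~\ref{prop:resol_bounds}. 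The bilinear estimate \eqref{eq:Q_2_bound} follows by running the same scheme with $P_2$: now $\D(\gamma_i\gamma_j)=(\D\gamma_i)\gamma_j+\gamma_i(\D\gamma_j)$ (using $[\gamma_i,\gamma_j]=0$ from \eqref{eq:Gamma_comm}) produces error terms linear in the $Q_i$ of $P_1$, which are controlled by the already-established \eqref{eq:Q_1_bound} in a one-step bootstrap.
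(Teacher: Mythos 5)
Your proposal follows the paper's own proof essentially step for step: a positive-commutator argument with $P_1,P_2$, the decomposition of $\i[H,Q_i]$ into contributions from $\D\gamma_i$, $\D\chi(r)$, $\D X_k^{1-\epsilon'}$ and $\D\Opw(\chi_{\kappa,|z|})$, the use of \eqref{eq:Dgamma} for the principal negativity, the Fefferman--Phong control of the $Ff^{-1}\gamma^2$ term via \eqref{eq:conGammaSquare}, the calibration $\epsilon'=C_0\sqrt\epsilon$, and the one-step bootstrap from \eqref{eq:Q_1_bound} to \eqref{eq:Q_2_bound}. One minor slip: with $u_\eta=R(\lambda+\i\eta)v$ the identity should read $-\langle u_\eta,\D P_1 u_\eta\rangle + 2\eta\langle u_\eta,P_1 u_\eta\rangle = 2\Im\langle v,P_1 u_\eta\rangle$ (the $2\eta$ term carries a plus sign), but since this makes the $\eta$-term droppable it is harmless; the paper works more generally with $z\in\Gamma_{\theta,\lambda_0}$, keeping $(H-|z|)u=v+(z-|z|)u$ and the resulting $|z-|z||^2/\Im z$ cross term, which is only a technical refinement of the same scheme.
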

\begin{proof} Due to  resolvent equations  we can assume that 
$V_2=0$. Throughout the proof  the notations  $H$ and $R(z)$ refer to this case.
  Fix $\theta \in (\pi/2, \pi)$ and  $\lambda_0>0$. We shall prove
  microlocal bounds of states $u=R(z)v$ in terms of 
   quantities related to $P_1$ of \eqref{eq:Q1} and $P_2$ of
   \eqref{eq:Q2}, respectively, where   $z\in
  \Gamma_{\theta,\lambda_0}$. In particular we consider below $\gamma_i=p_i-\partial_i
  S_\epsilon(x,|z|)$  with $z\in
  \Gamma_{\theta,\lambda_0}$. We could choose to 
  take $\kappa>0$ in the definition of the factors
  $\Opw(\chi_{\kappa,|z|})$ to be proportional to $\epsilon$ with   
a sufficiently large constant of proportionality, cf. a discussion at the end of Subsection
  \ref{subsec:Improved microlocalization for all epsilon-small
    perturbation}. However the larger choice $\kappa=\epsilon^{3/4}$
  suffices and will be used below. In any case for the corresponding
  lozalization operators $B=\Opw(\chi_{\kappa,|z|})$ and  $B=\Opw(\chi_{\kappa/2,|z|})$ we can use the
  bounds of Proposition \ref{prop:resol_bounds} for $\epsilon$-small
  perturbations (more precisely we have such  bounds upon replacing
  the pseudodifferential operators there by $I-B$). This is done in   \eqref{eq:q1bndb} and
  \eqref{eq:second2b} below (for \eqref{eq:Q_1_bound}). We choose $\epsilon_0>0$ in agreement
  with any such 
  application as well as being in agreement with Proposition
  \ref{prop:eikonal-equation} with  an $l$ (the one to be used in the
  proposition)  choosen sufficiently
  large. How large $l$ must be depends for \eqref{eq:Q_1_bound} partly on applications below of
  Proposition
  \ref{prop:eikonal-equation} and the symbolic calculus property
  \eqref{eq:calc3}, used in  \eqref{eq:3_cont} and \eqref{eq:S3}. See
  \eqref{eq:calc3bb} 
  for the case of \eqref{eq:Q_2_bound} (used in \eqref{eq:3_cont2} and
  \eqref{eq:S3bb}). Of course it is
  legitimate to take $\epsilon_0$ smaller if needed. The choice $\epsilon'= C_0\sqrt
  \epsilon$ for some (large) constant $C_0$ (rather than  $\epsilon'$
  being proportional to $\epsilon$) is needed (and best possible) in our
  treatment of the contribution from the term
  $O(\epsilon^{1/2})\gamma $ in \eqref{eq:Dgamma} in the  computation
  and estimation of a commutator, see \eqref{eq:main_cont}  below. We fix an applicable 
  $C_0$ for \eqref{eq:Q_1_bound}  at the end of Step I (this constant will also
  work for \eqref{eq:Q_2_bound}, see  the end of Step II). 
 
\noindent {\bf Step I} We show \eqref{eq:Q_1_bound} by first
establishing the bound
\begin{align}\label{eq:Q1bnd}
\inp{P'_1}_u&\leq C_{1}
  \|f_{|z|}^{-1/2}X^{3/2}v\|^2+C_{2}\tfrac {|(z-|z|)| ^2}{\Im z}\|X^{{-1}}X_k^{{1-\epsilon'}}u\|^2;\\P'_1&=\sum_iQ_i^*\tfrac{f_{|z|}}{r}Q_i.\nonumber
\end{align} Here we have suppressed the dependence of $|z|$ in $Q_i$
(as above). The constants  are independent of
$z\in \Gamma_{\theta,\lambda_0}$ and $k\in \N$ (however dependent on
$\epsilon$ and possibly also $W_\epsilon$). Whence we  conclude
by first letting $\Im z\to 0$ (for fixed $\lambda=\Re z
\geq 0$) and then letting
$k\to \infty$, that at all energies $\lambda\in [0,\lambda_0]$
\begin{align*}
  \sum_i \inp{Q_i^*\tfrac{f_{\lambda}}{r}Q_i}_{u}\leq C_{1}
  \|f_{\lambda}^{-1/2} X^{3/2}v\|^2,
\end{align*} and whence
\begin{subequations}
\begin{align}
  \label{eq:q1bnda}
  \|X^{1-\epsilon'}\parb {\tfrac{f_\lambda}{r}}^{1/2}\chi(r)\gamma_i\Opw(\chi_{\kappa,\lambda})R(\lambda+\i
    0)v\|\leq \sqrt{C_{1}}\|f_{\lambda}^{-1/2}X^{3/2}v\|.
\end{align}
On the other hand we have, cf. Proposition \ref{prop:resol_bounds},  
\begin{align}
  \label{eq:q1bndb}
  \|X^{1-\epsilon'}\parb {\tfrac{f_\lambda}{r}}^{1/2}\chi(r)\gamma_i\parb{I-\Opw(\chi_{\kappa,\lambda})}R(\lambda+\i
    0)v\|\leq C_3\|f_{\lambda}^{-1/2}X^{3/2}v\|.
\end{align}
 \end{subequations} Clearly \eqref{eq:Q_1_bound} follows from
 \eqref{eq:q1bnda} and \eqref{eq:q1bndb}.

To show \eqref{eq:Q1bnd} we calculate the expectation
\begin{subequations}
\begin{align}
  \label{eq:com1}
  \inp{\i [H,P_1]}_u&=2\Im z\,\inp{P_1}_u+2 \Im \inp{P_1u,v},\\
\label{eq:com2}
  \inp{\i [H,P_1]}_u&=2\sum_i\Re \inp{Q_iu,\i [H,Q_i]u},\\
\label{eq:com3}
  \i [H,Q_i]&=T_i^1+T_i^2;\; \\
&T_i^1=\i
[H,X_k^{1-\epsilon'}\chi(r)\gamma_i]\Opw(\chi_{\kappa,|z|}),\nonumber\\
&T_i^2=X_k^{1-\epsilon'}\chi(r)\gamma_i\i [H,\Opw(\chi_{\kappa,|z|})].\nonumber
\end{align}
 \end{subequations}  The idea of the proof  is to show that \eqref{eq:com1} ``tends''
 to be non-negative while \eqref{eq:com2} ``tends''
 to be non-positive. To keep the notation at a  minimum we abbreviate
 $f_{|z|}=f$ 
 in the remaining part of the proof of the proposition.

Clearly indeed the  first term to the right in \eqref{eq:com1} is
non-negative (to be used in \eqref{eq:2_cont} stated below).

The 
second  term to the right in \eqref{eq:com1} can be estimated as
\begin{align}
  \label{eq:first}
  \begin{split}
  |2 \Im \inp{P_1u,v}|&\leq \delta\inp{P'_1}_{
    u}+\delta^{-1}\sum_i \inp{r/f}_{Q_i
    v}\\
&\leq \delta\inp{P'_1}_{
    u}+\delta^{-1}C_1\|X^{{3/2-\epsilon'}}f^{1/2}v\|^2\\
&\leq \delta\inp{P'_1}_{
    u}+\delta^{-1}C_2\|f^{-1/2}X^{m}v\|^2;\;m\geq 3/2-\epsilon'.
  \end{split}
\end{align} We choose $\delta>0$ suitably small later.

As for \eqref{eq:com2} we substitute \eqref{eq:com3}. The contribution
from the terms $T_i^2$ is  estimated similarly using
\eqref{eq:calc2} (for suitable $n$) and  
\eqref{eq:limitbound2j}
\begin{subequations}
\begin{align}
  \label{eq:second2}
  \begin{split}
  &2\sum_i|\inp{Q_iu,T_i^2\Opw(\chi_{\kappa/2,|z|})u}|\\ &\leq
  \tfrac\delta 2 \inp{P'_1}_{
    u}+\delta^{-1}C_1\|f^{-1/2}X^{m}v\|^2;\;m>1/2,
  \end{split} 
\end{align} and by using  Proposition \ref{prop:resol_bounds}
\begin{align}
  \label{eq:second2b}
  \begin{split}
  &2\sum_i|\inp{Q_iu,T_i^2\parb{I-\Opw(\chi_{\kappa/2,|z|})}u}|\\ &\leq \tfrac\delta 2 \inp{P'_1}_{
    u}+\delta^{-1}C_2\|f^{-1/2}X^{m}v\|^2;\;m>3/2-\epsilon'.
  \end{split} 
\end{align}   
\end{subequations} 

It remains to consider the contribution from $T_i^1$. We split
\begin{align*}
  T_i^1&=S_i^1+S_i^2+S_i^3;\\
&S_i^1=X_k^{1-\epsilon'}\chi(r)\parb{\D \gamma_i}\Opw(\chi_{\kappa,|z|}),\\
&S_i^2=X_k^{1-\epsilon'}\parb{\D
  \chi(r)}\gamma_i\Opw(\chi_{\kappa,|z|}),\\
&S_i^3=\parb{\D
  X_k^{1-\epsilon'}}\chi(r)\gamma_i\Opw(\chi_{\kappa,|z|}),
\end{align*} and intend to use \eqref{eq:Dgamma},
\eqref{eq:Gamma_comm0} and \eqref{eq:Gamma_comm3} to treat the
contribution to \eqref{eq:com2} from the three
terms, respectively.

The seeked negativity comes from the terms $S_i^1$ more precisely from the
contribution from the first term to the right in
\eqref{eq:Dgamma}. Thus, by the Cauchy Schwarz inequality, 
\begin{align}\label{eq:main_cont}
  \begin{split}
  &2\sum_i\Re \inp{Q_iu,X_k^{1-\epsilon'}\chi(r)\tfrac {-2f}{
   r}\parb{\gamma_i+\parb{O(\epsilon^{1/2})\gamma}_i}\Opw(\chi_{\kappa,|z|})u}\\
&\leq \parb{-4+\tilde C\sqrt \epsilon}\inp{P'_1}_u.
\end{split}
\end{align}

To bound the contribution from the second term to the right in
\eqref{eq:Dgamma} we use that $F$ is (uniformly) bounded and estimate
\begin{align}\label{eq:2_cont}
  \begin{split}
  &2\sum_i\Re \inp{Q_iu,X_k^{1-\epsilon'}\chi(r)\tfrac {-2F^i}{
   r}(H-|z|)\Opw(\chi_{\kappa,|z|})u}\\
&\leq -4(z-|z|)\sum_i\Re \inp{Q_iu,X_k^{1-\epsilon'}\chi(r)\tfrac {F^i}{
   r}\Opw(\chi_{\kappa,|z|})u} \\ & \quad + \delta\inp{P'_1}_{
    u}+\delta^{-1}C_1\|f^{-1/2}X^{m}v\|^2\\
&\leq 2\Im z\,\inp{P_1}_u+\tilde C_2\tfrac {|(z-|z|)| ^2}{\Im
  z}\|X^{{-1}}X_k^{{1-\epsilon'}}u\|^2\\
\\ & \quad + \delta\inp{P'_1}_{
    u}+\delta^{-1}C_1\|f^{-1/2}X^{m}v\|^2;\;m>1/2.
\end{split}
\end{align}

To bound the contribution from the third term to the right in
\eqref{eq:Dgamma} we 
``redistribute'' the factors of components in $\gamma^2$  and use
\eqref{eq:limitbound2j}, 
\eqref{eq:calc1}, \eqref{eq:calc3}, \eqref{eq:Fef_Phong3},
\eqref{eq:conGammaSquare} and
\eqref{eq:Gamma_comm}  estimating with  $u_j:=\Opw(\chi_{2\kappa,|z|})(2f/
   r)^{1/2}Q_ju$ 
\begin{align}\label{eq:3_cont}
  \begin{split}
  &2\sum_i\Re \inp{Q_iu,X_k^{1-\epsilon'}\chi(r)\tfrac {2F^i}{
   r}\gamma^2\Opw(\chi_{\kappa,|z|})u}\\
&\leq 2\sum_j\Re \inp{\tfrac {2}{
   r}F\cdot \gamma Q_ju, Q_ju}+C_1\|f^{-1/2}X^{m}v\|^2\\
&=2\sum_j\Re \inp{\tilde
  f \parb{\Opw(b)-1 +O(\epsilon^{3/4})} }_{ u_j}+C_2\|f^{-1/2}X^{m}v\|^2\\
&\leq \parb{2\kappa (\sup |\tilde f|^2+1)+C_1\epsilon^{3/4}}\inp{2P'_1}_u +C_3\|f^{-1/2}X^{m}v\|^2\\ & 
=  C_4\epsilon^{3/4}\inp{P'_1}_u +C_3\|f^{-1/2}X^{m}v\|^2;\;m>1/2.
\end{split}
\end{align} 

To bound the contribution from the fourth term to the right in
\eqref{eq:Dgamma} it is convenient (although not necessary) to
symmetrize (assuming then $l\geq 4$). This gives with $\tilde u:=X_k^{1-\epsilon'}\Opw(\chi_{\kappa,|z|})u$
\begin{align}\label{eq:4_cont}
  \begin{split}
 \MoveEqLeft 2\sum_i\Re \inp{Q_iu,X_k^{1-\epsilon'}\chi(r)
\i\tfrac {2f}{
   r^2}O_i(\epsilon^{0})\Opw(\chi_{\kappa,|z|})u}\\
&\leq C_1\inp{
\tfrac {2f}{
   \tilde r^3}}_{\tilde u}\\
&\leq C_2\|f^{-1/2}X^{m}v\|^2;\;m>1/2.
\end{split}
\end{align}
  
Clearly the contribution from from the terms $S_i^2$ are bounded
similarly, cf. \eqref{eq:Gamma_comm0},
\begin{align}\label{eq:S2_cont}
  2\sum_i\Re \inp{Q_iu,S_i^2u}
\leq C\|f^{-1/2}X^{m}v\|^2;\;m>1/2.
\end{align}
 
It remains to examine the contribution from from the terms $S_i^3$. We
shall use \eqref{eq:calc1}, \eqref{eq:calc3}, \eqref{eq:Fef_Phong2}, \eqref{eq:Gamma_comm3} and
\eqref{eq:reg-bound} estimating with 
$u_i:=\Opw(\chi_{2\kappa,|z|})((\phi-\phi_k)2f/r)^{1/2}Q_iu$,
\begin{align}\label{eq:S3}
  \begin{split}
  \MoveEqLeft 2\sum_i\Re \inp{Q_iu,S_i^3\Opw(\chi_{\kappa,|z|})u}\\
&\leq 2(1-\epsilon')\sum_i \inp{\Opw(b)}_{u_i}+C_1\|f^{-1/2}X^{m}v\|^2\\
&\leq 2(1-\epsilon')(1+2\kappa)\sum_i\|
u_i\|^2+C_2\|f^{-1/2}X^{m}v\|^2\\
&\leq 4(1-\epsilon')(1+2\epsilon^{3/4} )\inp{P'_1}_u +C_3\|f^{-1/2}X^{m}v\|^2;\;m>1/2.
\end{split}
\end{align} 

Now  by combining \eqref{eq:first}--\eqref{eq:S3} with
\eqref{eq:com1}--\eqref{eq:com3}  we obtain
\begin{align}\label{eq:concl}
  \begin{split}
  &\parb{4-\tilde C \sqrt \epsilon-3\delta - C_4\epsilon^{3/4}
   -4(1-\epsilon')(1+2\epsilon^{3/4})}\inp{P'_1}_u\\
 &\leq C_\delta\|f^{-1/2}X^{m}v\|^2+\tilde C_2\tfrac {|(z-|z|)| ^2}{\Im
  z}\|X^{{-1}}X_k^{{1-\epsilon'}}u\|^2;\;m> 3/2-\epsilon'.
 \end{split}
\end{align}
We choose $\delta=\sqrt \epsilon$ and fix $C_0= \tfrac14(\tilde
C+5)$. Then (possibly by taking $\epsilon_0>0$ smaller) we conclude  the bound 
\begin{align*}
  \sqrt \epsilon\inp{P'_1}_u\leq C_\delta\|f^{-1/2}X^{m}v\|^2+\tilde C_2\tfrac {|(z-|z|)| ^2}{\Im
  z}\|X^{{-1}}X_k^{{1-\epsilon'}}u\|^2;\;m=3/2,
\end{align*} whence  \eqref{eq:Q1bnd} follows.

\noindent {\bf Step II} We show \eqref{eq:Q_2_bound} by 
establishing the bound
\begin{align}\label{eq:Q1bnd2}
\inp{P'_2}_u&\leq C_{1}
  \|f^{-1/2}X^{5/2}v\|^2+C_{2}\tfrac {|(z-|z|)| ^2}{\Im z}\parb{\|X^{{-2}}X_k^{{2(1-\epsilon')}}u\|^2+k^2\inp{P'_1}_u}+C_{3}\inp{P'_1}_u;\\P'_2&=\sum_iQ_{ij}^*\tfrac{f}{r}Q_{ij}.\nonumber
\end{align} Here $P'_1$ is given as in \eqref{eq:Q1bnd}. Due to
\eqref{eq:Q1bnd} we can proceed as above letting first  $\Im z\to 0$ (for fixed $\lambda=\Re z
\geq 0$) and then 
$k\to \infty$. Then again we invoke Proposition
\ref{prop:resol_bounds}. Whence it suffices for \eqref{eq:Q_2_bound}
to show \eqref{eq:Q1bnd2}.

For \eqref{eq:Q1bnd2} we proceed similarly as in Step I giving now
less details.
We replace $P_1$ by $P_2$ in \eqref{eq:com1}--\eqref{eq:com3}
and need to show ``essential positivity'' and  ``essential negativity'' of the expression to the right
of the analogous \eqref{eq:com1} and \eqref{eq:com2}, respectively.
The most interesting contribution to the analogous commutator \eqref{eq:com2} is
 the one from an expression like $S^1_i$. More precisely this term  is  now 
replaced by 
\begin{align*}
  S_{ij}^1=X_k^{2(1-\epsilon')}\chi(r)\parb{\D (\gamma_i\gamma_j)}\Opw(\chi_{\kappa,|z|}).
\end{align*} We write
\begin{align*}
  \D (\gamma_i\gamma_j)=(\D \gamma_i)\gamma_j+\gamma_i(\D \gamma_j),
\end{align*} and invoke again \eqref{eq:Dgamma} which contains four terms.

As for the first term the analogous of  \eqref{eq:main_cont} reads
(using the constant $\tilde C$ from \eqref{eq:main_cont})
\begin{align}\label{eq:main_contII}
  \begin{split}
  &2\sum_{i,j}\Re \inp{Q_{ij}u,X_k^{2(1-\epsilon')}\chi(r)\\ & \quad \quad \parbb{\tfrac {-2f}{
   r}\parb{\gamma_i+\parn{O(\epsilon^{1/2})\gamma}_i}\gamma_j+\gamma_i\tfrac {-2f}{
   r}\parb{\gamma_j+\parn{O(\epsilon^{1/2})\gamma}_j}}\Opw(\chi_{\kappa,|z|})u}\\
&\leq \parb{-8+2\tilde C\sqrt \epsilon+\delta}\inp{P'_2}_u +\delta^{-1}C_1\inp{P'_1}_u.
\end{split}
\end{align}

As for the analogous of  \eqref{eq:2_cont} we have, using the bound
$X_k^2\leq k$ and the first identity of \eqref{eq:GG}, and by arguing
as in \eqref{eq:second2}--\eqref{eq:second2b},
\begin{align}\label{eq:2_cont2}
  \begin{split}
  &2\sum_{i,j}\Re \inp{Q_{ij}u,X_k^{2(1-\epsilon')}\chi(r)\parbb{\tfrac {-2F^i}{
   r}(H-|z|)\gamma_j+\gamma_i\tfrac {-2F^j}{
   r}(H-|z|)}\Opw(\chi_{\kappa,|z|})u}\\
&\leq 2\Im z\,\inp{P_2}_u+\tilde C_1\tfrac {|(z-|z|)| ^2}{\Im z}\parb{\|X^{{-2}}X_k^{{2(1-\epsilon')}}u\|^2+k^2\inp{P'_1}_u}\\ & \quad \quad \quad + \delta\inp{P'_2}_{
    u}+\delta^{-1}C_2\parb{\|f^{-1/2}X^{3/2}v\|^2+\inp{P'_1}_u}.
\end{split}
\end{align}

As for the analogous of  \eqref{eq:3_cont} we obtain  by redistributing
components of $\gamma^2$, using the notation $u_{mn}=\Opw(\chi_{2\kappa,|z|})(4f/
   r)^{1/2}Q_{mn}$,
\begin{align}\label{eq:3_cont2}
  \begin{split}
  &2\sum_{i,j}\Re \inp{Q_{ij}u,X_k^{2(1-\epsilon')}\chi(r)\parbb{\tfrac {2F^i}{
   r}\gamma^2\gamma_j+\gamma_i\tfrac {2F^j}{
   r}\gamma^2}\Opw(\chi_{\kappa,|z|})u}\\
&\leq 2\sum_{m,n}\Re \inp{\tilde
  f \parb{\Opw(b)-1 +O(\epsilon^{3/4})}}_{u_{mn}}+ \tfrac \delta 2 \inp{P'_2}_{
    u}+C_{1,\delta}\|f^{-1/2}Xv\|^2\\
&\leq \parb{2\kappa (\sup |\tilde f|^2+1)+C_1\epsilon^{3/4}}\inp{4P'_2}_u + \delta\inp{P'_2}_{
    u}+C_{2,\delta}\|f^{-1/2}Xv\|^2\\
& 
=  \tilde C_3\epsilon^{3/4} \inp{P'_2}_u + \delta\inp{P'_2}_{
    u}+C_{2,\delta}\|f^{-1/2}Xv\|^2.
\end{split}
\end{align} Here we used
twice that $(rf)^{-1}=O\parb{r^{\mu/2-1}}$ (whence up to a compactly
supported term this function  is bounded by $\delta/C$), and we used  a uniform bound similar to
 \eqref{eq:calc3}, for example
\begin{align}\label{eq:calc3bb}
      \begin{split}
        [(4f/
   r)^{1/2}X_k^{2(1-\epsilon')}\chi(r)\gamma_m\gamma_n,\Opw(\chi_{2\kappa,|z|})]\Opw(\chi_{\kappa,|z|})&=X^{-2}BX^{-1}f^{1/2};\;
\|B\|\leq C.
      \end{split}
    \end{align}

As for the analogous of  \eqref{eq:4_cont} we have (using $l\geq 4$)
\begin{align}\label{eq:4_cont2}
  \begin{split}
  &2\sum_{i,j}\Re \inp{Q_{ij}u,X_k^{2(1-\epsilon')}\chi(r)\parbb{\i\tfrac {2f}{
   r^2}O_i(\epsilon^{0})\gamma_j+\gamma_i\i\tfrac {2f}{
   r^2}O_j(\epsilon^{0})}\Opw(\chi_{\kappa,|z|})u}\\
&=4\sum_{i,j}\Re \inp{Q_{ij}u,X_k^{2(1-\epsilon')}\chi(r)\parbb{\i\tfrac {2f}{
   r^2}O_i(\epsilon^{0})\gamma_j+\tfrac {f}{
   r^3}O_{ij}(\epsilon^{0})}\Opw(\chi_{\kappa,|z|})u}\\
&\leq \delta\inp{P'_2}_{
    u}+\delta^{-1}C_1 \parb{\|f^{-1/2}Xv\|^2+\inp{P'_1}_u}.
\end{split}
\end{align}

The analogue of \eqref{eq:S2_cont} is obvious.

The analogue of \eqref{eq:S3} reads with 
$u_{ij}:=\Opw(\chi_{2\kappa,|z|})((\phi-\phi_k)2f/r)^{1/2}Q_{ij}u$
\begin{align}\label{eq:S3bb}
  \begin{split}
  &2\sum_{i,j}\Re \inp{Q_{ij}u,\parb{\D X_k^{2(1-\epsilon')}}\chi(r)\gamma_i\gamma_j \Opw(\chi_{\kappa,|z|})u}\\
&\leq 4(1-\epsilon')\sum_{i,j}
\inp{\Opw(b)}_{u_{ij}}+C_{1}\|f^{-1/2}X v\|^2\\
&\leq 4(1-\epsilon')(1+2\kappa)\sum_{i,j} \|
u_{ij}\|^2+\delta\inp{P'_2}_u +C_2\|f^{-1/2}Xv\|^2\\
&\leq \parb{8(1-\epsilon')(1+2\epsilon^{3/4}) +\delta}\inp{P'_2}_u +C_3\|f^{-1/2}Xv\|^2.
\end{split}
\end{align} 
Here we used the bound \eqref{eq:calc3bb} trivially modified by an
insertion of a 
factor $(\phi-\phi_k)$.

Collecting bounds we get (similar to \eqref{eq:concl})
\begin{align}\label{eq:conclb}
  \begin{split}
  &\parb{8-2\tilde C \sqrt \epsilon-7\delta - \tilde C_3\epsilon^{3/4}
   -8(1-\epsilon')(1+2\epsilon^{3/4})}\inp{P'_2}_u\\
 &\leq C_{1}(\delta)\|f^{-1/2}X^{5/2}v\|^2+\tilde C_1\tfrac {|(z-|z|)| ^2}{\Im
  z}\parb{\|X^{{-2}}X_k^{{2(1-\epsilon')}}u\|^2+k^2\inp{P'_1}_u}+C_{2}(\delta)\inp{P'_1}_u.
 \end{split}
\end{align} 

Picking  $\delta=\sqrt \epsilon$ and $C_0= \tfrac14(\tilde
C+5)$ in  \eqref{eq:conclb} (as in Step I) the left hand side  bounds $\sqrt \epsilon\inp{P'_2}_u$ from
above, and whence \eqref{eq:Q1bnd2} follows.
\end{proof}
\section{Distorted Fourier transform}\label{sec:Distorted Fourier transform}
We prove the existence of the limit \eqref{eq:diag} for all
$\lambda\geq 0$ and all $v\in
  L^2_3$. For that we first compute the derivative $\tfrac{\d}{\d s}$ along the flow $\Phi (s,\cdot)$
\begin{equation}
  \label{eq:diag_deriv}
  \tfrac{\d}{\d s}\parb{\e^{-\i S_\epsilon} K^{1/2}_\epsilon m^{1/2}_\epsilon R(\lambda +\i 0)v}=\i\e^{-\i S_\epsilon} K^{-1/2}_\epsilon m^{1/2}_\epsilon \gamma_{\|}(\lambda)R(\lambda +\i 0)v.
\end{equation} It suffices to show that the right hand side is integrable as a
$\vG$-valued function. For the latter   purpose 
we use
 \eqref{eq:co_area}, the identity $S_\epsilon(\Phi (s,\cdot))=s$  and the Cauchy Schwarz
inequality to conclude  that in turn it suffices to find $\delta>0$ such that
\begin{equation}
  \label{eq:der_bnd}\|S_\epsilon^{1/2+\delta}K_\epsilon^{-1/2}\gamma_{\|}(\lambda)R(\lambda
  +\i 0)v\|\leq C<\infty.
  \end{equation} We plug in \eqref{eq:gamma_par}. Since
  \begin{equation}
    \label{eq:S_bnd}
    crf(r,\lambda)\leq S_\epsilon(x)\leq Crf(r,\lambda)
  \end{equation} the contribution from the first term $(H-V_2-\lambda)$ is
  in $\vH$ for any $\delta\leq 2$
  (then we have $(fr)^{1/2+\delta{}}K_\epsilon^{-1/2}\leq CX^3$ and by assumption
  $X^3v\in \vH$). For the contribution from the second  term
  $-\gamma^2$ we use \eqref{eq:Q_2_bound} with $i=j$. Again since $X^3v\in \vH$
  we only need to examine the weight $X^{2(1-\epsilon')}f^{1/2}r^{-1/2}$ to
  the left in \eqref{eq:Q_2_bound}, in particular we need to specify
  applicable $\delta$ and $\epsilon'$: More precisely  we need to
  specify these parameters such  that the function
  \begin{equation*}
    S_\epsilon^{1/2+\delta}K_\epsilon^{-1/2}X^{-2(1-\epsilon')}f^{-1/2}r^{1/2}\text{   is  bounded}.
  \end{equation*}  Using \eqref{eq:S_bnd} this
  property  will follow from boundedness of
  \begin{equation*}
    X^{2\epsilon' +\delta-1}f^{\delta-1},
  \end{equation*} which in turn for $\delta\leq 1$ will  follow from boundedness of 
\begin{equation*}
    X^{2\epsilon' +(\delta-1)(1-\mu/2)}.
  \end{equation*} The latter  boundedness  is achieved for any
  $\delta\in(0,1)$ (henceforth taken fixed)  and for all sufficient small $\epsilon,
  \epsilon'=C_0\sqrt \epsilon>0$. We have shown \eqref{eq:der_bnd} for all
$\lambda\geq 0$ and  all $v\in
  L^2_3$. Since the  bound \eqref{eq:der_bnd} is   uniform in $\lambda\in
  [0,\lambda_0]$ for any $\lambda_0>0$ and the function  $[0,\lambda_0]\ni
  \lambda\to \parb{\e^{-\i S_\epsilon} K^{1/2}_\epsilon
    m^{1/2}_\epsilon R(\lambda +\i 0)v}\parb{\Phi (s,\cdot)}\in \vG$
    is continuous for any (large) fixed  $s> 1$,  we conclude that also 
  the function 
  \begin{equation}\label{eq:contG}
    [0,\infty)\ni \lambda \to F^+(\lambda)v\in \vG\text{ is continuous}.
  \end{equation}

Clearly (by time-reversal invariance) we conclude the existence of
\eqref{eq:diag_-} for all $v\in
  L^2_3$ also. Similarly $F^-(\lambda)v$ is continuous in $\lambda\geq
  0$. 

There are other assertions in Subsection
\ref{subsec:Diagonalization}. As for the formula
\begin{align*}
  \|P_\c v\|^2=\lim_{\lambda_0\to \infty}\int _0^{\lambda_0}  \|F^+(\lambda)v\|_{\vG}^2\,\d \lambda,
\end{align*}  it suffices to show that for all $\lambda\geq 0$
\begin{align}\label{eq:fund_form}
  \pi^{-1}\inp{v, (\Im R(\lambda +\i
    0))v}= \|F^+(\lambda)v\|_{\vG}^2.
\end{align}  We first estimate (recall $u:=R(\lambda +\i 0)v$) 
 \begin{align}\label{eq:inf_arg}
   \begin{split}
  &|\lim_{s\to \infty}\Re \sum^d_{j=1}\int_{S^{d-1}}
\parb{\overline{(\gamma_j(\lambda)u)}u
  (\partial_jS_\epsilon)m_\epsilon}\parb{\Phi (s,\cdot)}\,\d \sigma|\\
&\leq\liminf_{s\to \infty}  \sum^d_{j=1}\int_{S^{d-1}}
|\parb{\overline{(\gamma_j(\lambda)u)}u
  (\partial_jS_\epsilon)m_\epsilon}\parb{\Phi (s,\cdot)}|\,\d \sigma.   
   \end{split}
\end{align} By \eqref{eq:co_area} and  the Cauchy Schwarz
inequality, for any big enough $s_0>0$
\begin{align*}
  &\int_{s_0}^\infty \d s \,s^{-1} \sum^d_{j=1}\int_{S^{d-1}}
|\parb{\overline{(\gamma_j(\lambda)u)}u
  (\partial_jS_\epsilon)m_\epsilon}\parb{\Phi (s,\cdot)}|\,\d \sigma\\
&=\sum^d_{j=1}
\int_{s_0}^\infty\d s \int_{S^{d-1}}
\big |m_\epsilon^{1/2}X^{1-\epsilon'}\parb {\tfrac{f_\lambda}{r}}^{1/2}\gamma_j(\lambda)u
  \big| \;\big |m_\epsilon^{1/2}X^{\epsilon'-1}\parb
  {\tfrac{f_\lambda}{r}}^{-1/2}(\partial_j\ln S_\epsilon)u\big |\,\d \sigma\\&\leq\sum^d_{j=1}\|X^{1-\epsilon'}\parb {\tfrac{f_\lambda}{r}}^{1/2}\chi(r)\gamma_j(\lambda)u\|\,\|X^{\epsilon'-1}\parb
  {\tfrac{f_\lambda}{r}}^{-1/2}\chi(r)(\partial_j\ln S_\epsilon)u\|.
\end{align*} Since
\begin{align*}
  |X^{\epsilon'-1}\parb{\tfrac{f_\lambda}{r}}^{-1/2}\chi(r)(\partial_j\ln S_\epsilon)|\leq CX^{\epsilon'+\mu/2-1}\parb{\tfrac{f_\lambda}{r}}^{1/2}\chi(r),
\end{align*}  cf.  \eqref{eq:S_bnd}, we conclude using
\eqref{eq:limitbound2j} and   \eqref{eq:Q_1_bound} that for all 
$\epsilon'=C_0\sqrt \epsilon  \in (0, 1-\mu/2)$  the latter
integral is finite. 
Whence for all small $\epsilon>0$ indeed the right hand
side of \eqref{eq:inf_arg} is zero.
We have shown \eqref{eq:fund_form}. 

Throughout the rest of the paper we abbreviate
$B=B(|x|)$  and $B^*=B(|x|)^*$. Note that due to \eqref{eq:limitbound2j} and \eqref{eq:fund_form}
\begin{equation}
  \label{eq:extbF}
  \forall \lambda \geq 0: \,F^+(\lambda)f^{1/2}\in \vB(B,\vG).
\end{equation} 

Next introduce
\begin{align*}
  F^+=\int _0^\infty \oplus F^+(\lambda)\,\d \lambda,
\end{align*} which due to \eqref{eq:fund_form} obeys
$(F^+)^*F^+=P_\c$. Notice that we here consider
$F^+\in\vB(\vH,\widetilde\vH)$. A short argument shows that for all
$v\in (H-\lambda)C_\c^\infty(\R^d)$ the function
$F^+(\lambda)v=0$. Whence $F^+H_\c\subset M_\lambda F^+$. We claim
that $F^+$ diagonalizes $H_\c$. This stronger statement is part of the
following  
\begin{proposition}
  \label{prop:dist-four-transf} The map $F^+:\Ran P_\c\to
  \widetilde\vH$ is a unitary diagonalizing transform, in particular  
 \begin{align}
  \label{eq:diagB}
  \Ran  F^+=\widetilde\vH\mand F^+H_\c=M_\lambda F^+.
\end{align}
\end{proposition}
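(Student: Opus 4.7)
The plan is as follows. The isometry $(F^+)^* F^+ = P_\c$ and the partial intertwining $F^+ H_\c \subset M_\lambda F^+$ on a dense subspace of $\Ran P_\c$ (coming from $F^+(\lambda) v = 0$ for $v \in (H-\lambda) C^\infty_\c(\R^d)$) are already established. What remains is the surjectivity $\Ran F^+ = \widetilde \vH$; once this is known, $F^+$ becomes a unitary from $\Ran P_\c$ onto $\widetilde \vH$, and the intertwining upgrades to the operator identity $F^+ H_\c = M_\lambda F^+$ by the standard argument for unitary intertwiners of self-adjoint operators.

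To prove surjectivity I would first construct, for every $\lambda \geq 0$ and every $\tau \in C^\infty(S^{d-1})$, an element $v_\tau(\cdot,\lambda)$ satisfying $F^+(\lambda) v_\tau(\cdot,\lambda) = \tau$. The natural choice is $v_\tau(\cdot,\lambda) := (H - \lambda)\tilde u_\tau(\cdot, \lambda)$, where $\tilde u_\tau$ is the approximate outgoing eigenfunction from \eqref{eq:tilde u}. By \eqref{eq:tilde_ub} this $v_\tau$ lies in $f^{1/2} L^2_\delta$ for some $\delta > 1/2$, hence in $f^{1/2} B$, so $F^+(\lambda) v_\tau(\cdot,\lambda)$ is well defined via the extension \eqref{eq:extbF}. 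Lemma \ref{lemma:diagonalizationb} yields $R(\lambda + \i 0) v_\tau(\cdot,\lambda) = \tilde u_\tau(\cdot,\lambda)$; substituting into the (extended) formula \eqref{eq:diag} causes the factors $\e^{\pm \i S_\epsilon}$ and $K^{\pm 1/2}_\epsilon m^{\pm 1/2}_\epsilon$ to cancel, and since $\chi(\Phi(s,\sigma)) = 1$ for all large $s$ the $\vG$-valued limit evaluates to $\tau$.

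Surjectivity then follows by duality. Suppose $g \in \widetilde \vH$ is orthogonal to $\Ran F^+$. For every $v \in L^2_3$ and $\phi \in C^\infty_\c((0,\infty))$ the intertwining gives $F^+ \phi(H_\c) v = \phi(\lambda) F^+(\lambda) v$, so
\begin{equation*}
0 = \inp{g, F^+ \phi(H_\c) v}_{\widetilde \vH} = \int_0^\infty \overline{\phi(\lambda)} \inp{g(\lambda), F^+(\lambda) v}_\vG \, \d \lambda,
\end{equation*}
whence, for each fixed $v$, $\inp{g(\lambda), F^+(\lambda) v}_\vG = 0$ for a.e.\ $\lambda$. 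Taking a countable dense family $\{v_n\} \subset L^2_3$ and exploiting the continuity \eqref{eq:contG}, I obtain a single null set $N$ off of which $g(\lambda_0) \perp F^+(\lambda_0) v_n$ for every $n$, and hence (by boundedness of $F^+(\lambda_0)$ on $L^2_3$) for every $v \in L^2_3$. Approximating $v_\tau(\cdot,\lambda_0) \in f^{1/2} B$ by $L^2_3$ functions in the $f^{1/2} B$ topology and using the bounded extension from \eqref{eq:extbF}, one gets $\inp{g(\lambda_0), \tau}_\vG = 0$ for every $\tau \in C^\infty(S^{d-1})$ and every $\lambda_0 \notin N$, hence $g = 0$.

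The main obstacle is the compatibility of the limit in \eqref{eq:diag} with the extension \eqref{eq:extbF}: one must check that, for inputs in $f^{1/2} B$ rather than in $L^2_3$, the $\vG$-valued limit still produces the advertised output, and that the extended $F^+(\lambda)$ is sufficiently continuous in both source and parameter to sustain the density step (including density of $L^2_3$ in $f^{1/2} B$). These properties should follow from Proposition \ref{prop:eikonal-equation}\ref{item:6b} together with the bounds in Lemma \ref{lemma:diagonalization}, but spelling them out carefully is where the real technical effort sits.
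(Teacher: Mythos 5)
Your overall strategy matches the paper's: produce a candidate preimage $v_\tau := (H-\lambda)\tilde u_\tau$, use Lemma~\ref{lemma:diagonalizationb} to get $R(\lambda+\i 0)v_\tau=\tilde u_\tau$ and hence $F^+(\lambda)v_\tau=\tau$, and then derive surjectivity of $F^+$. For the last step the paper simply refers to the argument of \cite[Theorem~1.1]{ACH} (noting it only needs the identity for $\lambda>0$), while you give an explicit duality/null-set argument; those are essentially the same density argument phrased differently and I see no objection to your version.

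The genuine gap is in the step you yourself flag, and your guess as to what closes it is off. The issue is that $v_\tau\in f^{1/2}L^2_\delta$ with $\delta<\tfrac32-\tfrac\mu2$, so in general $v_\tau\notin L^2_3$, and the $\vGlim_{s\to\infty}$ in \eqref{eq:diag} is a priori only defined there. The operator $F^+(\lambda)f^{1/2}\in\vB(B,\vG)$ of \eqref{eq:extbF} is the \emph{density extension} from $C_\c(\R^d)$; to compute its value at a concrete element such as $f^{-1/2}v_\tau\in B$ one needs an explicit formula that is \emph{continuous on $B$}, not merely the pointwise limit from \eqref{eq:diag}. The paper's key move is to replace \eqref{eq:diag} by the Ces\`aro-averaged formula \eqref{eq:extbF2}, and to prove its $B$-continuity via the bound \eqref{eq:1bndB}, which in turn rests on the Besov estimates \eqref{eq:besov1}--\eqref{eq:besov2} together with \eqref{eq:co_area} and \eqref{eq:limitbound2j}. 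Your proposal instead attributes the needed continuity to Proposition~\ref{prop:eikonal-equation}\ref{item:6b} and the bounds of Lemma~\ref{lemma:diagonalization}; these are relevant background facts but they do not by themselves give a formula for the extension nor the Cauchy property needed to apply it. Without the Ces\`aro averaging and \eqref{eq:1bndB}, the evaluation of the extended $F^+(\lambda)$ at $v_\tau$ is not justified, so \eqref{eq:extbFcc} --- the heart of the proof --- is left open. Once that identity is secured, the rest of your argument (approximation by $L^2_3$, countable dense family, continuity \eqref{eq:contG}) goes through.
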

\begin{proof}
  It suffices to show the first identity of \eqref{eq:diagB}, since
  then  indeed the restricted
  map   $F^+:\vH_\c(H)=\Ran P_\c\to \widetilde\vH$ is unitary and the
  second identity of \eqref{eq:diagB} holds.

\noindent {\bf Step I}
Let
 $\tau\in C^\infty(S^{d-1})$ and consider the function $\tilde u$ of
 \eqref{eq:tilde u}. We claim that
 \begin{equation}
   \label{eq:extbFcc}
   \tau= F^+(\lambda)(H-\lambda)\tilde u.
 \end{equation} Note that due to Lemma \ref{lemma:diagonalizationb}
 this is formally true, however since we dont know that $(H-\lambda)\tilde
 u\in L^2_3$ a continuity argument is required. This motivates the claim that for all $v\in B$
\begin{equation}
  \label{eq:extbF2}
  F^+(\lambda)f^{1/2}v=\vGlim_{S\to \infty}S^{-1}\int_0^S
  \pi^{-1/2}\parb{\e^{-\i S_\epsilon} K^{1/2}_\epsilon
    m^{1/2}_\epsilon R(\lambda +\i 0)f^{1/2}v}\parb{\Phi (s,\cdot)}\,\d s. 
\end{equation} Clearly this is consistent with \eqref{eq:extbF} if
$v\in f^{-1/2}L^2_3$. To show that indeed the  right hand side of
\eqref{eq:extbF2} makes sense for $v\in B$ we need to show the
Cauchy property. Approximating $C_\c(\R^d)\ni v_n\to v\in B$ it
suffices to show the bound 
\begin{align}
  \begin{split}
  \label{eq:1bndBa}
 \sup_{S>1}\|S^{-1}\int_0^S
  &\pi^{-1/2}\parb{\e^{-\i S_\epsilon} K^{1/2}_\epsilon
    m^{1/2}_\epsilon R(\lambda +\i 0)f^{1/2}(v-v_n)}\parb{\Phi
    (s,\cdot)}\,\d s\|_{\vG}\\&\leq C \|v-v_n\|_{B}.   
  \end{split} 
\end{align} We proceed a little more general and show for all $w\in B(|x|)^*$ 
\begin{align}
  \begin{split}
  \label{eq:1bndB}
 \sup_{S>1}\|S^{-1}\int_0^S
  \pi^{-1/2}\parb{\e^{-\i S_\epsilon} K^{1/2}_\epsilon
    m^{1/2}_\epsilon f^{-1/2}w}\parb{\Phi
    (s,\cdot)}\,\d s\|_{\vG}\leq C \|w\|_{B^*}.   
  \end{split} 
\end{align}
In fact given \eqref{eq:1bndB} the bound \eqref{eq:1bndBa}   follows using  
\eqref{eq:limitbound2j}, and whence the formula
\eqref{eq:extbF2} and then in turn \eqref{eq:extbFcc} are  justified.   

To show \eqref{eq:1bndB}  we first recall the  
 Besov space bound
\begin{subequations}
 \begin{align}\label{eq:besov1}
   \sup_{\rho>1}\rho^{-1}\int_{|x|\leq \rho}|w|^2\,\d x\leq C\|w\|^2_{B^*},
 \end{align} and its proof: Let $R_0=0$ and $R_j=2^{j-1} $ for
 $j\in\N$. Then
 \begin{align*}
   \rho^{-1}\int_{|x|\leq \rho}|w|^2\,\d x&\leq \sum_{j\leq
     J;\;R_{J-1}\leq \rho< R_J} \parb{\rho^{-1}R_j } R_j^{-1}\int_{R_{j-1}\leq|x|< R_j}|w|^2\,\d
   x\\&\leq\sum_{j\leq
     J;\;R_{J-1}\leq \rho< R_J} \parb{\rho^{-1}R_j } \|w\|^2_{B^*}\\&\leq 4\|w\|^2_{B^*}.
 \end{align*} Now for \eqref{eq:1bndB} we estimate using  the Cauchy Schwarz
 inequality, notation from Subsection
 \ref{subsec:Geometric properties},  \eqref{eq:co_area} and
 \eqref{eq:besov2} (stated below)
\begin{align*}
   &\|S^{-1}\int_0^S
  \pi^{-1/2}\parb{\e^{-\i S_\epsilon} K^{1/2}_\epsilon
    m^{1/2}_\epsilon f^{-1/2}w}\parb{\Phi
    (s,\cdot)}\,\d s\|_{\vG}\\&\leq C_1S^{-1}\int_0^S\|
  \parb{K^{1/2}_\epsilon
    m^{1/2}_\epsilon f^{-1/2}w}\parb{\Phi
    (s,\cdot)}\|_{\vG}\,\d s\\&\leq C_1S^{-1/2}\parb{\int_0^S\|
  \parb{K^{1/2}_\epsilon
    m^{1/2}_\epsilon f^{-1/2}w}\parb{\Phi
    (s,\cdot)}\|^2_{\vG}\,\d
  s}^{1/2}\\&=C_1S^{-1/2}\parb{\int_{\vB_\epsilon (S)}
  |K^{1/2}_\epsilon
   f^{-1/2}w|^2\,\d x}^{1/2}\\&\leq C_2S^{-1/2}\parb{\int_{\vB_\epsilon (S)}
  |f^{1/2}w|^2\,\d x}^{1/2}\\
&\leq C_3 \|w\|_{B^*}.
\end{align*} In the last step we used the following analogue of
\eqref{eq:besov1}:
\begin{align}\label{eq:besov2}
   \sup_{S>1}S^{-1}\int_{\vB_\epsilon (S)}|f^{1/2}w|^2\,\d x\leq C(\lambda)\|w\|^2_{B^*}.
 \end{align} Note that for $\lambda >0$ we can bound $S_\epsilon (x)\geq
 c|x|$ and $f^{1/2}(x)\leq C$ yielding \eqref{eq:besov2} 
  in this case  due to  \eqref{eq:besov1}. For $\lambda =0$ we can
 bound $S_\epsilon (x)\geq
 c|x|^{1-\mu/2}$ and $f^{1/2}(x)\leq C|x|^{-\mu/4}$ for $|x|\geq 1$
 yielding \eqref{eq:besov2} in that case also. This can be  seen by   
 arguing  as in the above proof of \eqref{eq:besov1}. Consequently we have
 \eqref{eq:besov2} for all $\lambda \geq0$, and \eqref{eq:1bndB} is proven.

For a later application let us note the following inverse of
\eqref{eq:besov2} (proved similarly):
\begin{align}\label{eq:besov3}
  \|w\|^2_{B^*}\leq  C(\lambda)\sup_{S>1}S^{-1}\int_{\vB_\epsilon (S)}|f^{1/2}w|^2\,\d x.
 \end{align} For an abstract version of \eqref{eq:besov2} and
 \eqref{eq:besov3} see \cite[Lemma 2.4]{Sk}.
\end{subequations}

\noindent {\bf Step II} We can mimic the proof of \cite[Theorem 1.1]{ACH}
using \eqref{eq:extbFcc}. Notice that we only need \eqref{eq:extbFcc}
for $\lambda >0$ (which is the analogue of  \cite[Theorem 3.3
iv)]{ACH}). Details are omitted.

\end{proof}

\begin{corollary}\label{cor:dist-four-transf}
  For all $\tau\in C^\infty(S^{d-1})\subset \vG$ the generalized
  eigenfunction $u^-=u^-(\lambda)$, $\lambda\geq 0$,  defined  by \eqref{eq:tilde u} and
  \eqref{eq:gen_eig} is also given by
  \begin{align}
    \label{eq:form_gene_func}
    u^-(\lambda)=2\pi\i F^+(\lambda)^*\tau.
  \end{align}
\end{corollary}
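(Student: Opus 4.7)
The plan is to combine Lemma \ref{lemma:diagonalizationb}, the formula \eqref{eq:fund_form} (polarized), and the identity \eqref{eq:extbFcc} established in the proof of Proposition \ref{prop:dist-four-transf}.

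First I would use Lemma \ref{lemma:diagonalizationb} which gives $u^+ = 0$, i.e.\ $\tilde u = R(\lambda + i0)(H-\lambda)\tilde u$. Inserting this into the definition of $u^-$ gives
\begin{equation*}
u^- = \tilde u - R(\lambda - i0)(H-\lambda)\tilde u = \bigl(R(\lambda + i0) - R(\lambda - i0)\bigr)(H-\lambda)\tilde u = 2\pi i\, \delta(\lambda)(H-\lambda)\tilde u,
\end{equation*}
where $\delta(\lambda) = (2\pi i)^{-1}(R(\lambda + i0) - R(\lambda - i0)) = \pi^{-1}\Im R(\lambda+i0)$.

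Next I would invoke the polarized form of \eqref{eq:fund_form}, which reads $\delta(\lambda) = F^+(\lambda)^* F^+(\lambda)$ as a sesquilinear form on an appropriate dense subspace and extends by continuity. Concretely, by \eqref{eq:tilde uc} we have $(H-\lambda)\tilde u \in f^{1/2}L^2_\delta$ for some $\delta > 1/2$, so $f^{-1/2}(H-\lambda)\tilde u \in L^2_\delta \subset B$. Hence by \eqref{eq:extbF} together with the extension \eqref{eq:extbF2}, the quantity $F^+(\lambda)(H-\lambda)\tilde u$ is well defined as an element of $\vG$, and the identity $\delta(\lambda) = F^+(\lambda)^* F^+(\lambda)$ applies, yielding
\begin{equation*}
u^- = 2\pi i\, F^+(\lambda)^* F^+(\lambda)(H-\lambda)\tilde u.
\end{equation*}

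Finally I would apply \eqref{eq:extbFcc}, which asserts exactly that $F^+(\lambda)(H-\lambda)\tilde u = \tau$. Substituting this gives $u^- = 2\pi i\, F^+(\lambda)^* \tau$, as required.

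The main obstacle is to justify the identity $\delta(\lambda) = F^+(\lambda)^*F^+(\lambda)$ on the somewhat delicate class to which $(H-\lambda)\tilde u$ belongs. Formula \eqref{eq:fund_form} has been proved for $v \in L^2_3$, so the extension must be obtained by polarization plus a density/continuity argument in the Besov pairing $(B, B^*)$, using the bound \eqref{eq:extbF} (which gives $F^+(\lambda) f^{1/2} \in \vB(B, \vG)$) and the fact that $(H-\lambda)\tilde u$ lies in $f^{1/2}L^2_\delta \subset f^{1/2}B$. All other steps are essentially bookkeeping assuming the tools \eqref{eq:extbFcc}, Lemma \ref{lemma:diagonalizationb}, and \eqref{eq:fund_form} that have already been established.
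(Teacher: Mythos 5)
Your proposal is correct and follows exactly the paper's own proof: Lemma \ref{lemma:diagonalizationb} gives $u^+=0$ so that $u^- = (R(\lambda+\i 0)-R(\lambda-\i 0))(H-\lambda)\tilde u$, then the polarized form of \eqref{eq:fund_form} yields $u^- = 2\pi\i\, F^+(\lambda)^*F^+(\lambda)(H-\lambda)\tilde u$, and \eqref{eq:extbFcc} finishes. The paper states this in three lines; your added remarks on the $B/B^*$ continuity needed to apply \eqref{eq:fund_form} to $(H-\lambda)\tilde u\in f^{1/2}B$ are a reasonable elaboration of what the paper leaves implicit.
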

\begin{proof}
  By Lemma \ref{lemma:diagonalizationb}, \eqref{eq:fund_form} and \eqref{eq:extbFcc}
  \begin{align*}
   u^-&=\parb{R(\lambda+\i 0)-R(\lambda-\i 0)} (H-\lambda)\tilde u\\&=2\pi\i F^+(\lambda)^*F^+(\lambda)(H-\lambda)\tilde u=2\pi\i F^+(\lambda)^*\tau.
  \end{align*}
\end{proof}

\begin{defn}
  \label{defn:dist-four-transf} For any $\lambda\geq 0$ we define the
  scattering matrix $S(\lambda)\in \vB (\vG)$ by the identity
  \begin{align}
    \label{eq:1S} F^+(\lambda)v=S(\lambda)F^-(\lambda)v;\;v\in  f_\lambda^{1/2}B(|x|).
    \end{align}
\end{defn} 
\begin{proposition}\label{prop:dist-four-transf2} The operator
  $S(\lambda)$ is a well-defined unitary operator on $\vG$. It is
  strongly continuous as a function of $\lambda\geq 0$. In particular
  the  scattering matrix at zero energy $S(0)$ is uniquely
  determined by the diagonalizing transforms $F^{\pm}$.
  \end{proposition}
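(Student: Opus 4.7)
The plan is to prove the three claims---well-definedness, unitarity, and strong continuity---in sequence, each by a density argument based on the results of Section~\ref{sec:Distorted Fourier transform}.

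For well-definedness and the isometric property, \eqref{eq:extbF} gives $F^{\pm}(\lambda)\in\vB(f_\lambda^{1/2}B(|x|),\vG)$. The identity \eqref{eq:fund_form} together with its $F^-$-counterpart, obtained by repeating the derivation of Section~\ref{sec:Distorted Fourier transform} with $R(\lambda+\i 0)$ and $\e^{-\i S_\epsilon}$ replaced by $R(\lambda-\i 0)$ and $\e^{\i S_\epsilon}$, respectively, yields
\begin{align*}
\|F^+(\lambda)v\|_\vG^2=\pi^{-1}\inp{v,(\Im R(\lambda+\i 0))v}=\|F^-(\lambda)v\|_\vG^2\quad\text{for }v\in L^2_3,
\end{align*}
and by \eqref{eq:limitbound2j} the middle quantity extends continuously to all $v\in f_\lambda^{1/2}B(|x|)$. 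Hence $\|F^+(\lambda)v\|_\vG=\|F^-(\lambda)v\|_\vG$ on $f_\lambda^{1/2}B(|x|)$, and in particular $F^-(\lambda)v=0$ implies $F^+(\lambda)v=0$. The assignment $F^-(\lambda)v\mapsto F^+(\lambda)v$ is therefore a well-defined linear isometry from the image of $F^-(\lambda)$ onto the image of $F^+(\lambda)$.

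To upgrade to a unitary on $\vG$ I verify density of both ranges. For $F^+(\lambda)$ this is immediate from \eqref{eq:extbFcc}: each $\tau\in C^\infty(S^{d-1})$ is the image of $(H-\lambda)\tilde u\in f^{1/2}L^2_\delta\subset f_\lambda^{1/2}B(|x|)$, cf.\ \eqref{eq:tilde uc}. For $F^-(\lambda)$ the analogous statement is obtained by substituting $\tilde u^-:=\pi^{1/2}\bigl(\chi\e^{-\i S_\epsilon}K_\epsilon^{-1/2}m_\epsilon^{-1/2}\bigr)(x)\tau(\sigma)$ and $R(\lambda-\i 0)$ in the roles occupied by $\tilde u$ and $R(\lambda+\i 0)$ in Lemma~\ref{lemma:diagonalizationb} and the subsequent derivation of \eqref{eq:extbFcc}; the required incoming Sommerfeld radiation condition is the reflection $b_0\to -b_0$ of the outgoing one and is equally valid. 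The densely defined isometry extends uniquely to a unitary $S(\lambda)\in\vB(\vG)$ realizing \eqref{eq:1S}.

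For strong continuity at $\lambda_0\geq 0$, the uniform bound $\|S(\lambda)\|_{\vB(\vG)}=1$ reduces the task to $S(\lambda)\tau\to S(\lambda_0)\tau$ for $\tau$ in a dense subset of $\vG$. Since $C_\c^\infty(\R^d)$ is dense in $B(|x|)$ (by dyadic cutoff), $L^2_3$ is dense in $f_{\lambda_0}^{1/2}B(|x|)$, and hence $\{F^-(\lambda_0)v_0:v_0\in L^2_3\}$ is dense in $\Ran F^-(\lambda_0)|_{f_{\lambda_0}^{1/2}B(|x|)}$, which is in turn dense in $\vG$ by the previous step. For such $\tau_0=F^-(\lambda_0)v_0$ I decompose
\begin{align*}
S(\lambda)\tau_0-S(\lambda_0)\tau_0=\bigl(F^+(\lambda)v_0-F^+(\lambda_0)v_0\bigr)+S(\lambda)\bigl(F^-(\lambda_0)v_0-F^-(\lambda)v_0\bigr),
\end{align*}
and both terms on the right tend to zero as $\lambda\to\lambda_0$: the first by \eqref{eq:contG} and the second by the $F^-$-analogue of \eqref{eq:contG} combined with $\|S(\lambda)\|=1$. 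The final assertion concerning $S(0)$ is immediate from \eqref{eq:1S} specialized at $\lambda=0$. The main obstacle in this plan is the faithful transfer of the outgoing-side results (Lemma~\ref{lemma:diagonalizationb}, \eqref{eq:extbFcc}, and \eqref{eq:contG}) to the incoming side; although each step reduces to a sign reversal, careful bookkeeping of the sign of the radiation-condition cutoff $\tilde\chi_-(b_0)$ and of the Heisenberg derivative computation \eqref{eq:Dgamma} is required.
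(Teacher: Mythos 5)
Your proof follows the same route as the paper's own (compressed) argument: use \eqref{eq:fund_form}, \eqref{eq:extbF}, \eqref{eq:extbFcc} and their incoming ($+\to-$) analogues to obtain that the assignment $F^-(\lambda)v\mapsto F^+(\lambda)v$ is a well-defined isometry with dense range and hence extends to a unitary, then deduce strong continuity from \eqref{eq:contG}, its $F^-$-analogue, and density of $F^-(\lambda_0)L^2_3$ in $\vG$ via the telescoping decomposition $S(\lambda)\tau_0-S(\lambda_0)\tau_0=(F^+(\lambda)v_0-F^+(\lambda_0)v_0)+S(\lambda)(F^-(\lambda_0)v_0-F^-(\lambda)v_0)$. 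You simply spell out the density and telescoping steps that the paper leaves implicit; the caveat you flag about transferring the outgoing radiation-condition machinery (Lemma~\ref{lemma:diagonalizationb}, \eqref{eq:extbFcc}) to the incoming side is exactly the point the paper handles by invoking ``their analogues for change of superscript $+\to-$'' without further comment.
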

  \begin{proof}
    We apply \eqref{eq:fund_form}, \eqref{eq:extbF}, \eqref{eq:extbFcc} and their
    analogues for change of superscript $+\to -$. This yields the
    well-definedness and the unitarity. For all $v\in L^2_3$ the
    functions $\{F^{\pm}(\lambda)v|\lambda\geq 0\}\in \widetilde \vH$
    are continuous in $\lambda$, cf. \eqref{eq:contG}. Since moreover  
    $F^{-}(\lambda)L^2_3$ is dense in $\vG$ for any fixed $\lambda$ the continuity property follows by a
    density argument.
  \end{proof}

\subsection{Asymptotics of 
generalized eigenfunctions} \label{subsec:Asymptotics of 
generalized eigenfunctions}
We complete this section by a discussion of the asymptotics of the
generalized eigenfunctions
\begin{equation}\label{eq:GENERA_FUNC}
  u^-_\tau(\cdot,\lambda):=2\pi\i F^+(\lambda)^*\tau;\;\tau\in \vG.
\end{equation} Notice that Corollary  \ref{cor:dist-four-transf}
provides a representation for $\tau\in  C^\infty(S^{d-1})$.

Let $B_0^*\subset B^*$ be the closure of $C_{\c}(\R^d)$ in $B^*$. For
all $\lambda\geq0$ and $\tau\in \vG$ the function
\begin{equation*}
w(x)=\parb{K_\epsilon^{-1/2}m_\epsilon^{-1/2} f^{1/2}}(x)\tau(\sigma);\;x=\Phi
    (t,\sigma),  
\end{equation*} belongs to $ B^*$ with 
\begin{equation}
  \label{eq:fund_bound}
  \|w\|_{B^*}\leq
C\|\tau\|_\vG.
\end{equation}
  This is due to \eqref{eq:co_area} and   \eqref{eq:besov3}.

 Next,   using  \eqref{eq:extbFcc}
and  \eqref{eq:1S}  we
  decompose  
for all $\tau\in  C^\infty(S^{d-1})$
\begin{align*}
  w^-_{\tau}(x):&=\pi^{1/2}\parb{K_\epsilon^{-1/2}m_\epsilon^{-1/2} f^{1/2}}(x)\parb{S(\lambda)^{-1}\tau}(\sigma)\\&=\pi^{1/2}\parb{K_\epsilon^{-1/2}m_\epsilon^{-1/2} f^{1/2}}(x)\parb{F^-(\lambda) (H-\lambda)\tilde u}(\sigma)\\
&=w^-_1(x)+w^-_2(x);\\&\quad w^-_1:=\e^{\i S_\epsilon}f^{1/2}R(\lambda-\i 0)(H-\lambda)\tilde u.
\end{align*} While   $w^-_{\tau}, w^-_1\in B^*$ we have the  stronger
assertion for  the second term, 
\begin{equation}
  \label{eq:rem_est}
  w^-_2\in
B_0^*.
\end{equation} To prove  \eqref{eq:rem_est}  we introduce  the quantity
\begin{align*}
  w_n=\pi^{1/2}\parb{K_\epsilon^{-1/2}m_\epsilon^{-1/2} f^{1/2}}(x)\parb{F^-(\lambda) \parb{(H-\lambda)\tilde u-f^{1/2}v_n}}(\sigma),
\end{align*} where $C_{\c}(\R^{d})\ni v_n\to v:=f^{-1/2}(H-\lambda)\tilde u\in
B$. We have $\|w_n\|_{B^*}\leq C\|v_n-v\|_B$,
cf. \eqref{eq:fund_bound},  showing that
$\|w_n\|_{B^*}\to 0$ for $n\to \infty$. Similarly
\begin{align*}
  \e^{\i S_\epsilon}f^{1/2}R(\lambda-\i 0)f^{1/2}v_n\to
  w^-_1\text{ in }B^*.
\end{align*}
We are lead to consider the quantity (for fixed $n\in \N$)
\begin{align*}
  \widetilde w_n(x)=\pi^{1/2}\parb{K_\epsilon^{-1/2}m_\epsilon^{-1/2} f^{1/2}}(x)\parb{&\parb{F^-(\lambda) f^{1/2}v_n}(\sigma)\\&-\pi^{-1/2}\parb{\e^{\i S_\epsilon}K_\epsilon^{1/2}m_\epsilon^{1/2} R(\lambda-\i 0)f^{1/2}v_n}(x)},
\end{align*} It follows  from   \eqref{eq:co_area}, \eqref{eq:diag_deriv},
\eqref{eq:der_bnd} and   \eqref{eq:besov3}  by yet another
approximation that $\widetilde w_n\in
B_0^*$. Note that indeed 
\begin{align*}
 \|\widetilde w_n-1_{\vB_\epsilon(S)}\widetilde w_n\|_{B^*}\to
0 \text{ for }S\to \infty,
\end{align*} while obviously we have $1_{\vB_\epsilon(S)}\widetilde w_n\in
B_0^*$ for all $S>1$.
Whence \eqref{eq:rem_est} is proven.

Now, combining    
  \eqref{eq:gen_eig}  and \eqref{eq:rem_est} we conclude that for all $\tau\in  C^\infty(S^{d-1})$
  \begin{align*}
    u^-_\tau(\cdot,\lambda)-\parb{\tilde u-\e^{-\i S_\epsilon}f^{-1/2}w^-_\tau}\in
f^{-1/2}B_0^*.
\end{align*} This formula extends to $\vG$ and implies an injectivety property.
  \begin{corollary}\label{cor:dist-four-transfasyP}
    Let  $\lambda\geq0$ and  $\tau\in \vG$ be given. Introducing  the
    following function of  $x=\Phi
    (t,\sigma)$,
    \begin{align*}
      u^-_{0,\tau}(x,\lambda)=\pi^{1/2}\parb{K_\epsilon^{-1/2}m_\epsilon^{-1/2}
    }(x)\parb{\e^{\i S_\epsilon(x)}\tau(\sigma)-\e^{-\i
        S_\epsilon(x)}\parb{S(\lambda)^{-1}\tau}(\sigma)},
    \end{align*}
we have
\begin{subequations}
\begin{align}\label{eq:aEigenf}
    u^-_\tau(\cdot,\lambda)-u^-_{0,\tau}(\cdot,\lambda)\in
f_\lambda^{-1/2}B_0^*.
  \end{align} In particular
\begin{align}\label{eq:aEigenf2}
    2\pi \|\tau\|_{\vG}^2=\lim_{S\to \infty}S^{-1}\int_{\vB_\epsilon (S)}
  |K_\epsilon^{1/2}u^-_{\tau}(x,\lambda)|^2\,\d x.
  \end{align}
  \end{subequations}
\end{corollary}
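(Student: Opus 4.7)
The plan is to first establish \eqref{eq:aEigenf} for smooth $\tau$ using the decomposition already laid out in the paragraph preceding the corollary, then extend to all of $\vG$ by density, and finally deduce \eqref{eq:aEigenf2} from \eqref{eq:aEigenf} via the co-area formula \eqref{eq:co_area}.

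For $\tau\in C^\infty(S^{d-1})$ I first observe that the quantity $\tilde u - \e^{-\i S_\epsilon}f_\lambda^{-1/2}w^-_\tau$ (which by the preceding paragraph differs from $u^-_\tau$ by an element of $f_\lambda^{-1/2}B_0^*$) equals
\begin{align*}
\pi^{1/2}K_\epsilon^{-1/2}m_\epsilon^{-1/2}\bigl(\chi(r)\e^{\i S_\epsilon(x)}\tau(\sigma) - \e^{-\i S_\epsilon(x)}(S(\lambda)^{-1}\tau)(\sigma)\bigr),
\end{align*}
which is $u^-_{0,\tau}(\cdot,\lambda)$ modulo a term supported in $\{r\leq 2\}$, hence in $f_\lambda^{-1/2}B_0^*$. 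Thus \eqref{eq:aEigenf} holds for $\tau\in C^\infty(S^{d-1})$. The two linear maps $\tau\mapsto u^-_\tau=2\pi\i F^+(\lambda)^*\tau$ and $\tau\mapsto u^-_{0,\tau}$ are both bounded $\vG\to f_\lambda^{-1/2}B^*$: the first by duality from \eqref{eq:extbF}, and the second from \eqref{eq:fund_bound} applied separately to the $\e^{\pm\i S_\epsilon}$-pieces combined with unitarity of $S(\lambda)$. Since $C^\infty(S^{d-1})$ is dense in $\vG$ and $f_\lambda^{-1/2}B_0^*$ is closed in $f_\lambda^{-1/2}B^*$, \eqref{eq:aEigenf} extends to all $\tau\in \vG$.

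For \eqref{eq:aEigenf2} I write $u^-_\tau=u^-_{0,\tau}+r$ with $r\in f_\lambda^{-1/2}B_0^*$ and expand $|K_\epsilon^{1/2}u^-_\tau|^2$. The leading contribution simplifies to $K_\epsilon|u^-_{0,\tau}|^2 = \pi m_\epsilon^{-1}|\e^{\i S_\epsilon}\tau-\e^{-\i S_\epsilon}S(\lambda)^{-1}\tau|^2$, so the co-area formula \eqref{eq:co_area} together with $S_\epsilon(\Phi(s,\sigma))=s$ gives
\begin{align*}
S^{-1}\int_{\vB_\epsilon(S)}\!\!K_\epsilon|u^-_{0,\tau}|^2\,\d x = \pi S^{-1}\!\int_0^{S}\!\!\int_{S^{d-1}}\!\bigl(|\tau|^2+|S(\lambda)^{-1}\tau|^2-2\Re \e^{2\i s}\tau\overline{S(\lambda)^{-1}\tau}\bigr)\,\d\sigma\,\d s.
\end{align*}
Unitarity of $S(\lambda)$ turns the first two summands into $2S\|\tau\|_\vG^2$, while the oscillating cross term is $O(1)$; hence this contribution tends to $2\pi\|\tau\|_\vG^2$. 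For the quadratic remainder $K_\epsilon|r|^2$ I use $K_\epsilon\leq Cf_\lambda^2$ (valid since $|W_\epsilon|\leq \epsilon\inp{x}^{-\mu}$ while $f_\lambda^2\geq \epsilon_1\inp{x}^{-\mu}$), giving $K_\epsilon|r|^2\leq Cf_\lambda|f_\lambda^{1/2}r|^2$; the bulk mean then vanishes by a $B_0^*$-version of \eqref{eq:besov2} applied to $f_\lambda^{1/2}r\in B_0^*$. The cross term $2\Re(K_\epsilon\overline{u^-_{0,\tau}}r)$ is handled by Cauchy-Schwarz, pairing the bounded leading piece against the vanishing remainder piece.

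The main technical step is the $B_0^*$-version of \eqref{eq:besov2}: for $w\in B_0^*$ one has $S^{-1}\int_{\vB_\epsilon(S)}|f_\lambda^{1/2}w|^2\,\d x\to 0$. This follows by approximating $w_n\in C_\c(\R^d)$ with $w_n\to w$ in $B^*$; the integrals with $w$ replaced by $w_n$ vanish as $S\to\infty$ since $w_n$ is compactly supported, while \eqref{eq:besov2} controls the error by $C\|w-w_n\|_{B^*}^2$, which tends to zero as $n\to\infty$. Everything else is a routine verification.
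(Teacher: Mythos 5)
Your proof is correct and follows essentially the same route as the paper's: obtain \eqref{eq:aEigenf} for smooth $\tau$ from the paragraph preceding the corollary, extend by density via the boundedness furnished by \eqref{eq:extbF}, \eqref{eq:fund_bound} and unitarity of $S(\lambda)$, then for \eqref{eq:aEigenf2} substitute $u^-_{0,\tau}$ for $u^-_{\tau}$ and compute via \eqref{eq:co_area} and unitarity, with the oscillatory cross term averaging out. The extra detail you supply — the $B_0^*$-variant of \eqref{eq:besov2}, the pointwise bound $K_\epsilon\leq Cf_\lambda^2$, and the Cauchy--Schwarz estimate for the cross term in the expansion of $|K_\epsilon^{1/2}u^-_\tau|^2$ — is exactly what the paper's terse ``by approximation'' and ``cross terms do not contribute'' are meant to encode.
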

\begin{proof}
  Since we know \eqref{eq:aEigenf} for $\tau\in  C^\infty(S^{d-1})$
  the statement for $\tau\in \vG$ 
  follows  by approximation, cf. \eqref{eq:fund_bound}. As for \eqref{eq:aEigenf2} we can replace
  $u^-_{\tau}$ to the right by $u^-_{0,\tau}$ due to
  \eqref{eq:aEigenf}. Then we compute using \eqref{eq:co_area} and the
  unitarity of the scattering matrix. Note
  that cross terms do not contribute to the
  limit due to oscillatory behaviour.
\end{proof}

\section{Characterization of generalized eigenfunctions}\label{sec:Characterization of generalized eigenfunction}
We introduce the following class of generalized eigenfunctions.
\begin{defn}\label{def:char-gener-eigenf}
  For  $\lambda\geq0$ let 
  \begin{equation*}
    \vE_\lambda=\{u\in f_\lambda^{-1/2}B^*| (H-\lambda)u=0\}.
  \end{equation*}
\end{defn}
Notice that  it follows from \eqref{eq:extbF} and the definition
\eqref{eq:GENERA_FUNC} that for all $\tau\in \vG$ 
\begin{equation*}
  u^-_\tau(\cdot,\lambda)\in\vE_\lambda.
\end{equation*} In fact it follows from \eqref{eq:extbF},
\eqref{eq:besov2} and 
\eqref{eq:aEigenf2} that the map 
\begin{equation*}
  \vG \ni\tau \to u^-_\tau(\cdot,\lambda)\in \vE_\lambda
\end{equation*} is a bi-continuous linear isomorphism onto its range.
The latter is identified as
\begin{proposition}\label{prop:char-gener-eigenf-1}  For all $\lambda\geq 0$ the set 
  \begin{align}
    \label{eq:spaces}
    \vE_\lambda=\{u^-_\tau(\cdot,\lambda)|\tau\in\vG\}.
  \end{align}
\end{proposition}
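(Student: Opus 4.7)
Only the inclusion $\vE_\lambda \subseteq \{u^-_\tau : \tau \in \vG\}$ requires proof, as the reverse one is established in Subsection \ref{subsec:Asymptotics of generalized eigenfunctions}. The plan is to associate to each $u \in \vE_\lambda$ an ``outgoing coefficient'' $\tau \in \vG$ via a Cesaro limit along the flow $\Phi$, to derive the pairing identity $\langle u, v\rangle_\vH = -2\pi\i\,\langle \tau, F^+(\lambda) v\rangle_\vG$ for test $v \in L^2_3$ using Green's formula, and to read off $u = u^-_\tau$ by density.

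Fix $u \in \vE_\lambda$, $v \in L^2_3$ and set $w := R(\lambda + \i 0) v$. Since $(H-\lambda)u = 0$ and $(H-\lambda)w = v$, Green's identity on $\vB_\epsilon(s)$ combined with \eqref{eq:surf_meas} gives
\begin{align*}
\int_{\vB_\epsilon(s)} \overline u\, v\, \d x = \int_{S^{d-1}} \bigl(w\,\overline{\nabla S_\epsilon\cdot\nabla u} - \overline u\,\nabla S_\epsilon\cdot\nabla w\bigr) m_\epsilon\bigl(\Phi(s, \cdot)\bigr)\, \d \sigma.
\end{align*}
I insert the outgoing asymptotics $w \sim \pi^{1/2}K_\epsilon^{-1/2}m_\epsilon^{-1/2}\e^{\i S_\epsilon} F^+(\lambda) v$ from \eqref{eq:diag}, with $\nabla S_\epsilon \cdot \nabla w$ dominated by the differentiation of the phase (subleading errors controlled via Proposition \ref{prop:eikonal-equation}\ref{item:6} and the radiation bounds \eqref{eq:Q_1_bound}, \eqref{eq:Q_2_bound} applied to $\gamma_\epsilon(\lambda) w$ and $\gamma_\epsilon(\lambda)^2 w$). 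Since $S_\epsilon(\Phi(s,\cdot)) = s$, all cross-terms in the integrand carry rapidly oscillating factors $\e^{\pm 2\i s}$ that are killed by the Cesaro average $S^{-1}\int_0^S\cdot\,\d s$. The non-oscillating residue, together with the Cesaro-averaged convergence of the left-hand side to $\langle u, v\rangle_\vH$, yields
\begin{align*}
\langle u, v\rangle_\vH = -2\pi\i\,\langle \tau, F^+(\lambda) v\rangle_\vG, \quad \tau := \vG\text{-}\lim_{S\to\infty} S^{-1}\!\!\int_0^S \pi^{-1/2} \bigl(\e^{-\i S_\epsilon} K_\epsilon^{1/2} m_\epsilon^{1/2} u\bigr)\bigl(\Phi(s, \cdot)\bigr)\, \d s.
\end{align*}
The Cesaro averages defining $\tau$ are uniformly bounded in $\vG$ by $C\|f_\lambda^{1/2} u\|_{B^*}$ via \eqref{eq:1bndB} (applied with $w := f_\lambda^{1/2} u \in B^*$); the pairing identity, combined with the density of $F^+(\lambda) L^2_3$ in $\vG$ (Proposition \ref{prop:dist-four-transf}), determines the limit uniquely and thereby proves its existence.

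Finally, since $\langle u^-_\tau, v\rangle_\vH = \langle 2\pi\i F^+(\lambda)^* \tau, v\rangle_\vH = -2\pi\i\,\langle \tau, F^+(\lambda) v\rangle_\vG$, the pairing identity yields $\langle u - u^-_\tau, v\rangle_\vH = 0$ for every $v \in L^2_3$. As $u - u^-_\tau \in f_\lambda^{-1/2} B^* \subset L^2_{-s}$ for a suitable $s$ depending on $\mu$, non-degeneracy of the $L^2_{-s}$-$L^2_s$ pairing forces $u = u^-_\tau$. The main obstacle is the boundary-limit argument in the middle paragraph: one must track the subleading contributions from derivatives of $K_\epsilon^{-1/2}m_\epsilon^{-1/2}$ via Proposition \ref{prop:eikonal-equation}\ref{item:6}, control the $\gamma w$ and $\gamma^2 w$ terms via \eqref{eq:Q_1_bound}--\eqref{eq:Q_2_bound}, and verify that the $\e^{\pm 2\i s}$ cross terms genuinely vanish in the $\vG$-norm after Cesaro averaging (not merely in a weak sense), resting on the quantitative $u$-average control from \eqref{eq:1bndB}.
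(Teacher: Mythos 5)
Your proposal takes a genuinely different route from the paper. You attempt a classical stationary-scattering argument: Green's formula on the expanding domains $\vB_\epsilon(s)$, insertion of the known outgoing asymptotics of $w=R(\lambda+\i 0)v$, a Cesaro average to define the candidate $\tau$ and to kill the oscillatory cross terms, and then a density/pairing argument. The paper instead splits $u_\lambda$ microlocally via $\Opw(\chi^\pm)$ (the phase-space incoming/outgoing decomposition \eqref{eq:dec_unity}), defines $\tau$ as a weak limit of $F^+(\lambda)G_n(S_\epsilon)[H,\Opw(\chi^+)]u_\lambda$, and closes the argument through $F^+(\lambda)^*F^+(\lambda)=\delta(\lambda)$, the limiting absorption bounds of Proposition \ref{prop:resol_bounds}, and the identity $w^\mp=f^{1/2}\Opw(\chi^\mp)u_\lambda$. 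Your Green's-formula/Cesaro route is more in the spirit of Agmon--H\"ormander and avoids commutator calculus, which is aesthetically appealing, but it shifts the entire burden to the boundary-limit computation, and there a real gap opens up.

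The gap is in the middle paragraph, as you suspected, but it is more severe than you let on. After writing $\nabla=\i\gamma+\i\nabla S_\epsilon$, the boundary integrand becomes, up to the non-oscillating part $-2\i K_\epsilon w\overline u$, the two error terms $-\i\,\overline u\,\nabla S_\epsilon\cdot\gamma w$ and $-\i\,w\,\overline{\nabla S_\epsilon\cdot\gamma u}$. The first is fine: $\gamma w$ is controlled by the radiation bounds \eqref{eq:Q_1_bound}--\eqref{eq:Q_2_bound}, which apply to $w=R(\lambda+\i0)v$, and a Cauchy--Schwarz plus Cesaro estimate yields decay of order $S^{\epsilon'-1}$. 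The second term, however, involves $\gamma u$ for a general $u\in\vE_\lambda$, and there is no radiation condition available for $u$: $u$ is not outgoing, and $\gamma u$ is a priori of the same size as $f_\lambda u$, which is only in $B^*$. A raw Cauchy--Schwarz/Besov estimate on the Cesaro average of this term gives $O(1)$, not $o(1)$. Your remedy is that the term oscillates as $\e^{\pm 2\i s}$ and is therefore killed by averaging -- but this presupposes that $u$ has the asymptotic structure (an outgoing piece $\propto\e^{\i S_\epsilon}$ plus an incoming piece $\propto\e^{-\i S_\epsilon}$ plus a remainder in $B_0^*$), which is precisely the conclusion of Theorem \ref{thm:char-gener-eigenf-1}, i.e., the argument is circular. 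The bound \eqref{eq:1bndB} that you invoke only yields uniform boundedness of the Cesaro averages $\tau_S$; it says nothing about the required cancellation of the cross term, nor does it upgrade a weak subsequential limit of $\tau_S$ to a $\vG$-norm limit as your notation $\vG$-$\lim$ requires. This is exactly the problem that the paper's proof is designed to avoid: the microlocal decomposition $u_\lambda=\Opw(\chi^-)u_\lambda+\Opw(\chi^+)u_\lambda$ provides an a priori separation into incoming and outgoing pieces \emph{without} presupposing the asymptotic expansion, and the commutator sources $[H,\Opw(\chi^\pm)]u_\lambda$ are then tamed by the regularizer $G_n(S_\epsilon)$ and fed into $F^\pm(\lambda)$ and $R(\lambda\mp\i0)$. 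To salvage your approach you would essentially need to re-introduce the microlocal localizers (or a direct $\gamma_\|$ estimate for $u\in\vE_\lambda$ derived from the eigenvalue equation, which also amounts to a nontrivial piece of work), at which point the two proofs converge.

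Two small inaccuracies worth flagging: the reverse inclusion is noted at the start of Section \ref{sec:Characterization of generalized eigenfunction} rather than in Subsection \ref{subsec:Asymptotics of generalized eigenfunctions}, and the density of $F^+(\lambda)L^2_3$ in $\vG$ follows from \eqref{eq:extbFcc} together with \eqref{eq:tilde uc} (it is cited in the proof of Proposition \ref{prop:dist-four-transf2}), not from the unitarity statement of Proposition \ref{prop:dist-four-transf}.
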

\begin{proof} Let $u_\lambda\in \vE_\lambda$ be arbitrarily given. We  need
  to show that it must have the form
  $u_\lambda=u^-_\tau(\cdot,\lambda)$ for some $\tau\in\vG$. For that
  we partly mimic
  \cite[Section 8]{DS3}. In particular,   with reference to symbols
  \eqref{eq:ab_lambda} and the corresponding localization symbols as
  appearing in Proposition \ref{prop:resol_bounds} let us introduce
  \begin{align}\label{eq:dec_unity}
  \chi^\pm=\chi_-(a_\lambda)\tilde\chi_\pm(b_\lambda)+\tfrac12\chi_+(a_\lambda).  
  \end{align} We consider in the following these functions as fixed
  and consider $\epsilon$-small perturbations $W_\epsilon\in \vW$ with
  $\epsilon>0$ small exactly as in  Proposition \ref{prop:resol_bounds}. Note the properties
  \begin{align}
    \label{eq:energ_forb}
   \Opw \parb{a_\lambda\chi_+(a_\lambda)}u_\lambda,\,\Opw \parb{\chi_+(a_\lambda)}u_\lambda\in
f_\lambda^{-1/2}B_0^*,
  \end{align} cf. \cite[Lemma 3.1]{Sk}, in fact these functions are in
  any weighted 
  $L^2$-space $L^2_m$. Whence the quantization of the second term of \eqref{eq:dec_unity}
  contributes by a small term when applied to $u_\lambda$. The quantization of the first term  localizes to an outgoing
  (incoming) region of phase space. A priori we only have
\begin{align*}
    \Opw \parb{\chi^\pm}u_\lambda,\,\Opw \parb{\chi_-(a_\lambda)\tilde\chi_\pm(b_\lambda)}u_\lambda\in
f_\lambda^{-1/2}B^*.
  \end{align*}
\noindent {\bf Step I} We construct a candidate $\tau$. Pick a
non-negative $g\in C_\c^\infty(\R_+)$  with $\int_0^\infty g(t)\d t=1$,
and let $G_n(s)=1-\int_0^{s/n} g(t)\d t$, $n\in\N$. Define
\begin{align*}
  \tau_n=F^+(\lambda)G_n(S_\epsilon)(H-\lambda)\Opw \parb{\chi^+}u_\lambda;\,n\in\N.
\end{align*} We note that this family $\{\tau_n\}$ is a bounded subset
of $\vG$. In fact  we have 
\begin{align*}
  \tau_n&=\i F^+(\lambda)\i[H,G_n(S_\epsilon)]\Opw \parb{\chi^+}u_\lambda\\
&=-\i F^+(\lambda)\parb{\Re(p\cdot \nabla S_\epsilon)\tfrac 2n
  g(S_\epsilon/n)+\i|\nabla S_\epsilon|^2 n^{-2}g'(S_\epsilon/n)}\Opw \parb{\chi^+}u_\lambda\\
&=\tau_n^1+\tau_n^2.
\end{align*} For any $\tilde \tau\in \vG$ we have
\begin{subequations}
\begin{align}
  \label{eq:tilde_part}
  \|f_\lambda^{1/2} u^-_{\tilde \tau}(\cdot, \lambda)\|_{B^*}&\leq C\|\tilde
  \tau\|_{\vG},\\
\|f_\lambda^{-3/2}\chi(|x|>2)\Re(p\cdot \nabla
  S_\epsilon)u^-_{\tilde \tau}(\cdot, \lambda)\|_{B^*}&\leq C\|\tilde \tau\|_{\vG},\label{eq:tilde_part2}
\end{align} cf. \eqref{eq:energ_forb}. We aim at showing the uniform bounds 
\end{subequations}
\begin{align} 
  \label{eq:unif}
  |\inp{\tilde \tau, \tau_n^j}|\leq C \|\tilde \tau\|_{\vG};\,j=1,2,
\end{align} which suffices for the boundedness. 

For $j=2$ we write
\begin{align*}
  -2\pi\i  \inp{\tilde \tau, \tau_n^2}&=\inp{f_\lambda^{1/2} u^-_{\tilde
      \tau}(\cdot, \lambda), h^2_n w};\\&h^2_n=f_\lambda^{-1/2} |\nabla S_\epsilon|^2 n^{-2}g'(S_\epsilon/n)f_\lambda^{-1/2},\,w=f_\lambda^{1/2} \Opw \parb{\chi^+}u_\lambda.
\end{align*} For $\lambda>0$ we have $|h^2_n(x)|\leq C_\lambda
\inp{x}^{-2}$ while for  $\lambda=0$ there is the bound  $|h^2_n(x)|\leq C
\inp{x}^{-2+\mu/2}$. Since $B^*$ is continuously imbedded in
$L^2_{-\delta}$ for any $\delta>1/2$ we conclude the bound
\eqref{eq:unif} for $j=2$ using
\eqref{eq:tilde_part}.

Decomposing similarly for $j=1$, 
\begin{align*}
  \inp{\tilde \tau, \tau_n^1}&=\inp{\tilde w, h^1_n w};\\&h^1_n=f_\lambda^{3/2} n^{-1}g(S_\epsilon/n)f_\lambda^{-1/2},\,w=f_\lambda^{1/2} \Opw \parb{\chi^+}u_\lambda,
\end{align*} and using \eqref{eq:tilde_part2} we
need to show the bound
\begin{align*}
  |\inp{\tilde w, h^1_n w}|\leq C
  \|\tilde w\|_{B^*}\,\| w\|_{B^*}.
\end{align*} For that it suffices for any $\lambda\geq 0$ to find $C>1$ and a bounded  interval $I$
 such that for all  $n\in\N$ there exists $R\geq 1$ such that $|h^1_n(x)|\leq C
R^{-1}1_I(|x|/R)$ for all $x\in\R^d$. We recall the bounds $crf_\lambda \leq
S_\epsilon\leq Crf_\lambda$. In particular for $\lambda>0$ the assertion is
immediate with $R=n$. For $\lambda=0$ we choose $R=n^{1/(1-\mu/2)}$
and obtain the same conclusion. So indeed  \eqref{eq:unif} holds, and
the sequence $\{\tau_n\}\subset\vG$ is  bounded.

Take $\tau\in\vG$ as the weak limit of some subsequence of
$\{\tau_n\}$, cf. \cite[Theorem 1 p. 126]{Yo}. Upon changing the notation we can assume that
\begin{align}
  \label{eq:tau}
  \tau=\wvGlim_{n\to \infty} F^+(\lambda)G_n(S_\epsilon)[H,\Opw \parb{\chi^+}]u_\lambda.
\end{align}

\noindent {\bf Step II} We show that this $\tau$ works. We compute
using \eqref{eq:fund_form} in the third  step, and Propositions
\ref{prop:eikonal-equation} and 
\ref{prop:resol_bounds} in the last step,  and taking  $m=-3$,
\begin{align*}
  f^{1/2}u^-_\tau(\cdot,\lambda)=&2\pi\i f^{1/2}F^+(\lambda)^*\tau\\
=&2\pi\i \wBstarlim_{n\to
  \infty} f^{1/2}F^+(\lambda)^*F^+(\lambda)G_n(S_\epsilon)[H,\Opw \parb{\chi^+}]u_\lambda\\
=& \wBstarlim_{n\to \infty} f^{1/2}\parb{  R(\lambda-\i 0)-R(\lambda+\i
  0)}G_n(S_\epsilon)[\Opw \parb{\chi^+},H-\lambda]u_\lambda\\
=& \wLstarlim_{n\to \infty} f^{1/2}R(\lambda-\i
0)G_n(S_\epsilon)[H-\lambda,\Opw \parb{\chi^-}]u_\lambda\\
&+ \wLstarlim_{n\to \infty} f^{1/2}R(\lambda+\i
0)G_n(S_\epsilon)[H-\lambda,\Opw \parb{\chi^+}]u_\lambda\\
=&w^-+w^+;\,w^\mp=f^{1/2}R(\lambda\mp\i
0)(H-\lambda)\Opw \parb{\chi^\mp}u_\lambda.
\end{align*}

Using again Proposition
\ref{prop:resol_bounds} we compute
\begin{align*}
  w^\mp&=\wLstarlim_{\epsilon\searrow 0}f^{1/2}R(\lambda\mp\i
  \epsilon)(H-\lambda)\Opw \parb{\chi^\mp}u_\lambda\\&=f^{1/2}\Opw \parb{\chi^\mp}u_\lambda\mp \wLstarlim_{\epsilon\searrow 0}\i\epsilon
f^{1/2}R(\lambda\mp\i 
\epsilon)\Opw \parb{\chi^\mp}u_\lambda\\
&=f^{1/2}\Opw \parb{\chi^\mp}u_\lambda.
\end{align*} Whence
\begin{align*}
  u^-_\tau(\cdot,\lambda)=f^{-1/2}\parb{w^-+w^+}=\parbb{\Opw \parb{\chi^-}+\Opw \parb{\chi^+}}u_\lambda=u_\lambda.
\end{align*}
\end{proof}

We summarize
\begin{thm}
  \label{thm:char-gener-eigenf-1} Suppose Condition \ref{cond:rad}
  (and 
  \eqref{eq:supp_rad}). There exist  $\epsilon_0>0$ and
  $l=l(\mu,d)\in\N$  such
  that for  all $\epsilon$-small perturbations $W_\epsilon$
    with $\epsilon\in
  (0,\epsilon_0]$ (assuming also
   \eqref{eq:supp_pert}) the
  following statements hold for all  $\lambda\geq0$: 

For
  all 
  $\tau\in \vG$ there exist   unique $\tilde \tau\in \vG$ and 
    $u_\lambda\in \vE_\lambda$ such that (with $x=\Phi
    (t,\sigma)$)
\begin{subequations}
    \begin{align}\label{eq:gen1}
      u_\lambda(x)-\pi^{1/2}\parb{K_\epsilon^{-1/2}m_\epsilon^{-1/2}
    }(x)\parb{\e^{\i S_\epsilon(x)}\tau(\sigma)-\e^{-\i
        S_\epsilon(x)}\tilde\tau(\sigma)}\in
f_\lambda^{-1/2}B_0^*.
    \end{align}

Moreover for all $u_\lambda\in \vE_\lambda$ there exist  unique $\tau,
\tilde \tau\in \vG$ such that \eqref{eq:gen1} holds.
In particular the map $
  \vG \ni\tau \to u_\lambda\in \vE_\lambda
$ is a linear isomorphism. It is  bi-continuous, in fact 
   \begin{align}\label{eq:aEigenf2w}
    2\pi \|\tau\|_{\vG}^2=\lim_{S\to \infty}S^{-1}\int_{\vB_\epsilon (S)}
  |K_\epsilon^{1/2}u_{\lambda}|^2\,\d x.
  \end{align} 

There are formulas 
\begin{align}\label{eq:aEigenfw}
    u_\lambda=u^-_\tau(\cdot,\lambda)=2\pi\i F^+(\lambda)^*\tau\mand \tilde \tau=S(\lambda)^{-1}\tau.
  \end{align} In particular the wave matrix $F^+(\lambda)^*:\vG\to
  \vE_\lambda$ is a bi-continuous linear isomorphism. The maps $F^+(\lambda)f_\lambda^{1/2}:B\to
  \vG$ and $\delta(\lambda)=\pi^{-1}\Im \parb{R(\lambda+\i 0)}:f_\lambda^{1/2}B\to
  \vE_\lambda$ are  onto.
\end{subequations}
\end{thm}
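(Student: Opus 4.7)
The plan is to assemble Corollary~\ref{cor:dist-four-transfasyP}, Proposition~\ref{prop:char-gener-eigenf-1}, Definition~\ref{defn:dist-four-transf} and Proposition~\ref{prop:dist-four-transf2} into the theorem. For existence, given $\tau\in\vG$ I set $u_\lambda:=u^-_\tau(\cdot,\lambda)=2\pi\i F^+(\lambda)^*\tau\in\vE_\lambda$ and $\tilde\tau:=S(\lambda)^{-1}\tau\in\vG$; then \eqref{eq:gen1} is exactly the asymptotic \eqref{eq:aEigenf} of Corollary~\ref{cor:dist-four-transfasyP}, since the explicit leading expression displayed there is precisely the second summand of \eqref{eq:gen1}. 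Conversely, given $u_\lambda\in\vE_\lambda$, Proposition~\ref{prop:char-gener-eigenf-1} produces $\tau\in\vG$ with $u_\lambda=u^-_\tau(\cdot,\lambda)$, after which $\tilde\tau:=S(\lambda)^{-1}\tau$ closes the loop.

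For uniqueness, suppose pairs $(\tau_j,\tilde\tau_j)$, $j=1,2$, both satisfy \eqref{eq:gen1} for the same $u_\lambda$. Setting $\alpha:=\tau_1-\tau_2$ and $\beta:=\tilde\tau_1-\tilde\tau_2$, the explicit oscillatory residue
\[
w(x):=\pi^{1/2}\parb{K_\epsilon^{-1/2}m_\epsilon^{-1/2}}(x)\parb{\e^{\i S_\epsilon(x)}\alpha(\sigma)-\e^{-\i S_\epsilon(x)}\beta(\sigma)},\quad x=\Phi(s,\sigma),
\]
belongs to $f_\lambda^{-1/2}B_0^*$. The abstract analogue of \eqref{eq:besov3} restricted to the zero-subspace (argued as in the proof of \eqref{eq:besov1}) gives
\[
\lim_{S\to\infty}S^{-1}\int_{\vB_\epsilon(S)}|K_\epsilon^{1/2}w|^2\,\d x=0,
\]
whereas the identity $S_\epsilon(\Phi(s,\sigma))=s$ together with \eqref{eq:co_area} reduces the same integral to $\pi\int_0^S\parb{\|\alpha\|_\vG^2+\|\beta\|_\vG^2-2\Re\parn{\e^{-2\i s}\inp{\alpha,\beta}_\vG}}\,\d s$. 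The oscillatory cross-term contributes nothing in Cesaro mean, so the limit equals $\pi(\|\alpha\|_\vG^2+\|\beta\|_\vG^2)$, forcing $\alpha=\beta=0$.

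The identity \eqref{eq:aEigenf2w} is then \eqref{eq:aEigenf2} applied to $u_\lambda=u^-_\tau$; combined with \eqref{eq:besov2}--\eqref{eq:besov3} it yields the bi-continuity of $\tau\mapsto u_\lambda$, so $2\pi\i F^+(\lambda)^*\colon\vG\to\vE_\lambda$ is a bi-continuous linear isomorphism. Surjectivity of $F^+(\lambda)f_\lambda^{1/2}\colon B\to\vG$ is obtained by extending \eqref{eq:extbFcc} from $\tau\in C^\infty(S^{d-1})$ to all of $\vG$: for smooth $\tau$ the vector $v:=f_\lambda^{-1/2}(H-\lambda)\tilde u\in B$ satisfies $F^+(\lambda)f_\lambda^{1/2}v=\tau$, and the bound \eqref{eq:1bndBa} together with density of $C^\infty(S^{d-1})\subset\vG$ delivers the full range. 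Surjectivity of $\delta(\lambda)\colon f_\lambda^{1/2}B\to\vE_\lambda$ then follows from the factorization $\delta(\lambda)=F^+(\lambda)^*F^+(\lambda)$ (Stone's formula combined with Proposition~\ref{prop:dist-four-transf}) and the two surjectivities just established. The main obstacle is the uniqueness step: verifying rigorously that $w\in f_\lambda^{-1/2}B_0^*$ forces the Cesaro integral above to vanish (rather than merely be bounded) requires transporting the proof of \eqref{eq:besov1} through the diffeomorphism $\Phi$ onto the anisotropic level sets $\vS_\epsilon(s)$; everything else is bookkeeping with objects already built in Sections~\ref{sec:Eikonal equation}--\ref{sec:Distorted Fourier transform}.
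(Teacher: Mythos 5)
Your overall assembly is sound and your uniqueness argument is correct, though it takes a different route from the paper's. The paper invokes the proof of Lemma~\ref{lemma:diagonalizationb} (i.e.\ the Sommerfeld radiation condition argument of Proposition~\ref{thm:somm-radi-condss}), whereas you derive uniqueness from the elementary Cesaro/co-area computation behind \eqref{eq:aEigenf2}: if the explicit residue lies in $f_\lambda^{-1/2}B_0^*$, then its normalized Cesaro mass over $\vB_\epsilon(S)$ tends to $0$, while a direct computation shows it tends to $\pi(\|\alpha\|_\vG^2+\|\beta\|_\vG^2)$. Your worry about transporting the dyadic proof of \eqref{eq:besov1} to the anisotropic balls is manageable: with $R_j=2^{j-1}$ and $a_j:=R_j^{-1}\int_{R_{j-1}\le|x|<R_j}|f^{1/2}w|^2\,\d x\to 0$, one finds $S^{-1}\int_{\vB_\epsilon(S)}|f^{1/2}w|^2\,\d x\lesssim 2^{-J(1-\mu/2)}\sum_{j\le J}2^{j(1-\mu/2)}a_j\to 0$ (since $1-\mu/2>0$), so this limit does vanish for $w\in f_\lambda^{-1/2}B_0^*$. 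One should also note that the "unique $(\tilde\tau,u_\lambda)$ given $\tau$" half then follows from the "unique $(\tau,\tilde\tau)$ given $u_\lambda$" half combined with the surjectivity in Proposition~\ref{prop:char-gener-eigenf-1} and the injectivity forced by \eqref{eq:aEigenf2}; you leave this routine deduction implicit, which is fine but worth spelling out.

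The genuine gap is in the surjectivity of $F^+(\lambda)f_\lambda^{1/2}:B\to\vG$. Density of the range (from $C^\infty(S^{d-1})\subset\vG$ via \eqref{eq:extbFcc}) plus the continuity bound \eqref{eq:1bndBa} only gives a \emph{dense} range, not a closed one, and a bounded operator with dense range need not be onto. Moreover, you cannot pass to the limit in $v_n=f_\lambda^{-1/2}(H-\lambda)\tilde u_n$ along $\tau_n\to\tau$ in $\vG$ because Lemma~\ref{lemma:diagonalization} controls $(H-\lambda)\tilde u$ by higher derivatives of $\tau$, so $\tau\mapsto v$ is not $\vG$-continuous. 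The correct mechanism, which the paper uses, is Banach's closed range theorem: \eqref{eq:aEigenf2w} together with \eqref{eq:besov2} gives $\|\tau\|_\vG\le C\|f_\lambda^{1/2}u^-_\tau\|_{B^*}$, i.e.\ the adjoint $f_\lambda^{1/2}F^+(\lambda)^*:\vG\to B^*$ is bounded below and hence has closed range; by the closed range theorem $F^+(\lambda)f_\lambda^{1/2}$ then has closed range, and closed plus dense gives $\vG$. Once that is fixed, your deduction of surjectivity of $\delta(\lambda)$ from the factorization $\delta(\lambda)=F^+(\lambda)^*F^+(\lambda)$ (which is \eqref{eq:fund_form}) is correct.
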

\begin{proof} The uniqueness of $\tilde \tau$ and $u_\lambda$ in
  \eqref{eq:gen1} follows from the proof of Lemma~\ref{lemma:diagonalizationb}, and the
  existence  part
   (in agreement with \eqref{eq:aEigenfw}) follows from  \eqref{eq:aEigenf}. The mapping
  properties mentioned in the last sentence of the theorem are
  consequences  of
  Banach's closed range theorem   \cite[Theorem  p. 205]{Yo} and  previous
  statements. The remaining parts of the latter 
  are consequences of \eqref{eq:aEigenf2}  and Proposition 
  \ref{prop:char-gener-eigenf-1}.
\end{proof}

\subsection{Concluding  remarks}\label{sec:Concluding comments}
With some more effort  one should be able to show that the operator
$F^+(\lambda)^*$ has a somewhat regular kernel, formally given by
$(F^+(\lambda)^*\delta_\sigma)(x)$. More precisely one should have
\begin{align*}
 (F^+(\lambda)^*\tau)(x)
 =\int_{\vG}\phi^+(x,\sigma,\lambda)\tau(\sigma)\,\d \sigma,
\end{align*} where the plane wave type eigenfunction $\phi^+$ has 
a degree of regularity. In particular it should be continuous in all
variables for $ x\notin \supp V_2$ provided the perturbation
$W_\epsilon\in \vW$ is $\epsilon$-small with $\epsilon>0$ taken small
enough. More regularity in $\sigma\in \vG$ would require $\epsilon$
taken smaller. These assertions depend on possible generalizations of
Proposition \ref{Prop:radi-cond-bounds}, cf. \cite{HS2}. We shall not
elaborate further on this issue. Note also
that smoothness in the angular
variable  of analogous plane wave type eigenfunctions was indeed
obtained in \cite{DS3}. 

Another remark concerns the relationship between the scattering theory
developed here and \cite{DS3} in case of overlapping conditions
(which means under the conditions of \cite{DS3}). In the case of a
spherically  symmetric potential we have for all $\lambda\geq 0$ that 
\begin{align*}
\sigma=\eta(\sigma):=\lim _{s\to \infty}\Phi(s,\sigma)/|\Phi(s,\sigma)|,  
\end{align*} and the two involved solutions to the eikonal equation
are identical up to a trivial explicit  term. In
particular the two families of $S$-matrices are explicitly  connected
as follows: 
 For all $\lambda\geq 0$ the operator $S(\lambda)$ of this paper
and the scattering matrix of \cite{DS3}, say $S_{\rm DS}(\lambda)$,
are up to an explicit phase factor
related  as $S(\lambda)=S_{\rm DS}(\lambda)R$ where 
$(R\tau)(\omega)=\tau(-\omega)$, cf.  the discussion at the 
beginning of Section \ref{sec:introduction}.

More generally under  the conditions of \cite{DS3} the
asymptotic normalized velocity $\eta(\sigma)$ exists for all
$\lambda\geq 0$ and as a map it is
a diffeomorphism on $S^{d-1}$. This property, Theorem
\ref{thm:char-gener-eigenf-1} and \cite[Theorem 8.2]{DS3} yields the connection formula
\begin{align}\label{eq:Stwo}
  S_{\rm DS}(\lambda)^{-1}=R\e^{-\i \phi(\cdot, \lambda)}D_\eta S(\lambda)^{-1}D_{\eta^{-1}}\e^{-\i \phi(\cdot, \lambda)},
\end{align} where $\phi(\omega, \lambda)$ is  real and for any
diffeomorphism $\psi$ on $S^{d-1}$ the operator $D_\psi$ is the
unitary map on $L^2(S^{d-1})$ implemented by the classical map 
$S^{d-1}\ni\omega \to \psi(\omega)\in S^{d-1}$, viz. $(D_\psi\tau)(\omega)=J^{1/2}(\omega)\tau(\psi^{-1}(\omega))$.

Although we shall not elaborate the formula \eqref{eq:Stwo} suggests  a
criterion  for regularity at zero energy of a family of (inverse) scattering
matrices under the conditions of Section \ref{sec:Smaller class}: It suffices that the
families of diffeomorphisms $\eta=\eta_\lambda$ and
$\eta_\lambda^{-1}$ on $S^{d-1}$  as well as the family of phases
$\phi(\cdot, \lambda)$ are
regular at zero energy. Indeed in  this case the right hand side of
\eqref{eq:Stwo} has a limit as $\lambda\to 0$ due to Proposition \ref{prop:dist-four-transf2}. This criterion  is of
course not applicable for the example in Subsection \ref{sec:Conditions}.
Note that  under the conditions of Section
\ref{sec:Smaller class}  the form of the right hand
side of \eqref{eq:Stwo} makes sense for positive energies and
the expression coincides 
with the (inverse) scattering matrix discussed in the beginning of
Section \ref{sec:introduction}.

\end{document}